\documentclass[11pt]{article}

\usepackage[T1]{fontenc}
\usepackage[utf8]{inputenc}
\usepackage[margin=1in]{geometry}
\usepackage{amsmath,amssymb,amsthm,mathtools}
\usepackage{bm}
\usepackage[hidelinks]{hyperref}
\usepackage{enumitem}
\usepackage{array}

\theoremstyle{plain}
\newtheorem{theorem}{Theorem}[section]
\newtheorem{lemma}[theorem]{Lemma}
\newtheorem{proposition}[theorem]{Proposition}
\newtheorem{corollary}[theorem]{Corollary}
\newtheorem{conjecture}[theorem]{Conjecture}

\theoremstyle{definition}
\newtheorem{definition}[theorem]{Definition}
\newtheorem{assumption}[theorem]{Assumption}
\newtheorem{problem}[theorem]{Problem}

\theoremstyle{remark}

\newcommand{\R}{\mathbb{R}}
\newcommand{\C}{\mathbb{C}}
\newcommand{\Z}{\mathbb{Z}}

\newcommand{\E}{\mathbb{E}}
\newcommand{\Sone}{\mathbb{S}^1}
\newcommand{\Stwo}{\mathbb{S}^2}
\newcommand{\dd}{\mathrm{d}}
\newcommand{\Var}{\operatorname{Var}}
\newcommand{\Cov}{\operatorname{Cov}}

\newcommand{\Sp}{\operatorname{Sp}}
\newcommand{\SL}{\operatorname{SL}}
\newcommand{\GL}{\operatorname{GL}}
\newcommand{\Diff}{\operatorname{Diff}}
\newcommand{\KL}{\operatorname{KL}}
\newcommand{\vM}{\mathrm{vM}}
\newcommand{\ent}{\mathsf{S}}          
\newcommand{\Fish}{\mathcal{I}}        
\newcommand{\Leb}{\lambda}             
\newcommand{\kimephase}{\theta}        
\newcommand{\muI}{\mu}                 
\newcommand{\meandir}{\bar\theta}      

\title{\textbf{Kime-Representation Formulations of Three Open Problems
in the Foundations of Classical Mechanics:\\[2pt]
Uncertainty, Invariant Entropy, and Directional Degrees of Freedom}}
\author{Ivo D. Dinov\\
Statistics Online Computational Resource (SOCR)\\
University of Michigan, Ann Arbor, MI 48109, US\thanks{SOCR Spacekime
Group, University of Michigan. The formulations below are stated
relative to the open-problem chapter of the \emph{Assumptions of
Physics} project \cite{CarcassiAidalaBook,CarcassiAidalaEntropy}
(Problems~1.16, 1.20, and 1.21 in the numbering of that draft) and to
the kime manuscripts \cite{KPTms,WDWms,DiracKime,DinovVelevBook}.}}
\date{\today}

\begin{document}
\maketitle

\begin{abstract}
We give mathematically self-contained formulations, in the
complex-time (\emph{kime}) representation, of three open problems from
the foundations of classical mechanics: (I)~the extension of the
classical entropic uncertainty principle to non-canonical variables
and to multiple degrees of freedom; (II)~the characterization of
coordinate-invariant measures and entropies, i.e., the question of why
continuous physical quantities must be \emph{paired} for an invariant
entropy to exist; and (III)~the construction of a classical
relativistic directional degree of freedom (a classical analogue of a
spin-$\tfrac12$ system). Throughout, the kime phase
$\kimephase\in\Sone$ is interpreted \emph{statistically} as a
latent circular random variable whose law $\Phi(\kimephase\mid t)$
models the intrinsic trial-to-trial variability of repeated,
identically controlled experiments indexed by the kime magnitude
$t=|\kappa|$, $\kappa=t e^{i\kimephase}$. The mathematical bridge is
an exact symplectic identification of the kime cone with the
action--angle chart of a one-degree-of-freedom phase space, under
which the kime measure $t\,\dd t\,\dd\kimephase$ is the Liouville
measure and the phase law becomes the angular conditional of a
Liouville density. Within this dictionary we (i)~prove a sharp
entropic uncertainty relation on the cylinder $\Sone\times\R$ whose
extremal family is von Mises~$\otimes$~Gaussian, together with a sharp
circular Fisher-information inequality saturated exactly by von Mises
laws; (ii)~prove an exact non-canonical uncertainty relation in which
the correction term is the \emph{geometric mean} of the Poisson
bracket, clarifying the conjectured role of the expected bracket;
(iii)~prove aggregate multi-degree-of-freedom bounds via the
Williamson normal form and Fischer's inequality, and isolate the
per-degree-of-freedom refinement as a precise open problem of
symplectic Schur--Horn type; (iv)~prove that diffusion of the kime
phase produces monotone entropy growth with the equipartitioned
(Haar-uniform) phase law as the unique attractor, giving rigorous
content to the ``equipartition of entropy'' conjecture; (v)~prove that
a diffeomorphism-covariant theory of continuous quantities admits an
invariant entropy if and only if quantities are canonically paired,
with the Liouville entropy unique up to an additive constant, and
exhibit the kime chart as the K\"ahler normal form of the resulting
pairing; and (vi)~formulate the relativistic directional problem on
Poincar\'e coadjoint orbits, prove the fibered circular uncertainty
relation in the nonrelativistic sector, and translate the
``four-vector versus two-form'' dichotomy into a precise moment-map
criterion informed by the absence of chirality in the
$\mathrm{Cl}(3,2)$ kime compactification. Open problems and
conjectures are stated in a form directly addressable by the
kime-phase tomography inference framework.
\end{abstract}


\section{Introduction}\label{sec:intro}

The Carcassi and Aidala \emph{Assumptions of Physics} project \cite{CarcassiAidalaBook}
derives classical Hamiltonian mechanics from informational premises:
states are identified with distributions over a continuum of
possibilities, the count of states must be independent of the
coordinates used to label them, and deterministic-and-reversible
evolution must preserve that count. Within this program the classical
uncertainty principle appears as an \emph{entropic} statement
\cite{CarcassiAidalaEntropy}.
As Hamiltonian evolution preserves
the Liouville measure, the differential entropy of a state is an
invariant, and the product of marginal uncertainties of a canonical
pair is bounded below by a function of that invariant. The specific three open
problems explored in ti study are described below.

\begin{enumerate}[label=(\Roman*), leftmargin=2.5em]
\item \textbf{(Problem 1.16, uncertainty.)} Extending the classical
uncertainty principle from canonical pairs of a single degree of
freedom (DOF) to (a)~non-canonical variable pairs, with the minimum
uncertainty conjecturally governed by the (expectation of the) Poisson
bracket, and (b)~multiple DOF, where one asks how uncertainty and
correlation migrate between DOF under the symplectic group, and
whether ``equipartition of entropy over uncorrelated DOF'' is a lower
bound.
\item \textbf{(Problem 1.21, invariant entropy.)} Explaining, more
generally than the classical derivation, why continuous quantities
must come in \emph{pairs} for a coordinate-invariant entropy to exist,
with suggested connections to measures on the complex plane and to
generalized complex structures.
\item \textbf{(Problem 1.20, directional DOF.)} Constructing a
classical \emph{relativistic} directional degree of freedom, a
classical analogue of a spin-$\tfrac12$ system, including the
identification of the correct phase space, the correct conjugate
variables generalizing $\{\theta^{xy},S_z\}=1$, and the resolution of
whether spin generalizes to a four-vector or to a two-form.
\end{enumerate}

The \emph{kime} (complex-time) representation
\cite{DinovVelevBook,KPTms,WDWms,DiracKime} replaces the ordering
variable $t$ of repeated experiments by a complex coordinate
$\kappa=t e^{i\kimephase}$ on the time cone
$\mathcal M=[0,T]\times\Sone$, where the magnitude $t$ orders
observations and the phase $\kimephase$ is a latent circular variable.
The present paper adopts throughout the \emph{statistical}
interpretation of the phase, which we fix as assumption
\ref{ass:interp}.

\begin{assumption}[Statistical interpretation of the kime
phase]\label{ass:interp}
The kime phase $\kimephase$ is not a directly controllable or directly
observable coordinate. It is a latent random variable on $\Sone$
whose conditional law $\Phi(\kimephase\mid t)$ models the
\emph{intrinsic domain variability} exhibited by repeated measurements
of the same controlled experiment at clock reading~$t$: independent
repetitions $j=1,\dots,N$ of the experiment correspond to independent
draws $\Theta_j(t)\sim\Phi(\cdot\mid t)$, and observables are
functions (possibly noisy) of $(t,\Theta_j(t))$, as in the
kime-phase-tomography (KPT) observation model
$Y_{j,k}=\mathcal S(t_k,\Theta_j(t_k))+\varepsilon_{j,k}$ of
\cite{KPTms}. All theorems below are statements about this
representation; no claim is made that $\kimephase$ is an ontic
mechanical coordinate. Where a mechanical reading is used (the
action--angle dictionary of Section~\ref{sec:foundations}), it is
introduced as an explicit, falsifiable modeling identification.
\end{assumption}

The contribution of this white paper is to show that, under
Assumption~\ref{ass:interp} plus one exact symplectic identification
(Lemma~\ref{lem:actionangle}, the kime cone with its canonical measure
\emph{is} the action--angle chart of a one-DOF phase space with its
Liouville measure), the three open problems (I)--(III) acquire
kime-native formulations where
(a) a nontrivial portion of each problem becomes a \emph{theorem}
provable with the circular-statistics and information-geometric tools
already developed in \cite{KPTms,WDWms,DiracKime}; and
(b) the genuinely open remainder becomes a sharply stated problem or
conjecture, expressed in terms of objects (phase laws, trigonometric
moments, circular Fisher information, symplectic spectra) that are
\emph{estimable from repeated-measurement data} by kime-phase
tomography, so that partial numerical evidence is obtainable in
principle.

Section~\ref{sec:foundations} fixes the kime-representation
foundations. Section~\ref{sec:uncertainty} treats Problem~(I),
Section~\ref{sec:invariance} treats Problem~(II), and
Section~\ref{sec:directional} treats Problem~(III).
Section~\ref{sec:conclusions} collects interpretations and
conclusions. All proofs are given in full except where a result is
classical and explicitly cited.

\paragraph{Notational conventions.}
$\Sone=\R/2\pi\Z$ is parametrized by $\kimephase\in[-\pi,\pi)$;
$\dd\kimephase$ denotes Lebesgue measure on $\Sone$ (total mass
$2\pi$) and $\dd\kimephase/2\pi$ the Haar probability measure, the
normalization used in \cite{KPTms}. Densities on $\Sone$ are taken
with respect to $\dd\kimephase$ unless stated otherwise; the
conversion to the Haar convention multiplies densities by $2\pi$ and
shifts entropies by $\log 2\pi$, and we indicate this wherever both
conventions appear. For a probability density $\rho$ with respect to
a reference measure $\Leb$ on a measurable space $X$, the
(differential) entropy relative to $\Leb$ is
\[
  \ent_{\Leb}[\rho] \;=\; -\int_X \rho\,\log\rho\;\dd\Leb ,
\]
whenever the integral is well defined in $[-\infty,+\infty)$; the
subscript is dropped when the reference measure is clear
\cite{CoverThomas}. $\KL(\cdot\|\cdot)$ and
$\chi^2(\cdot\|\cdot)$ denote the Kullback--Leibler and chi-squared
divergences. Also, all logarithms are natural, $\Sp(2n,\R)$ is the real
symplectic group,
$\Omega=\begin{psmallmatrix}0&I_n\\-I_n&0\end{psmallmatrix}$ the
standard symplectic form in coordinates
$z=(q^1,\dots,q^n,p_1,\dots,p_n)$, and for smooth $f,g$,
$\{f,g\}=\sum_i(\partial_{q^i}f\,\partial_{p_i}g-
\partial_{p_i}f\,\partial_{q^i}g)$.

\section{Kime-representation foundations}\label{sec:foundations}

\subsection{The kime cone and the phase law}

\begin{definition}[Kime coordinate and time cone]\label{def:kime}
The \emph{kime coordinate} is $\kappa=t e^{i\kimephase}\in\C$ with
\emph{kime magnitude} $t=|\kappa|\ge 0$ and \emph{kime phase}
$\kimephase\in\Sone$. The \emph{time cone} is the manifold-with-apex
$\mathcal M=[0,T]\times\Sone$ (apex $t=0$), equipped with the cone
metric and canonical measure
\begin{equation}\label{eq:conemetric}
  g_0=\dd t^2+t^2\,\dd\kimephase^2,
  \qquad
  \dd\mu_{g_0}= t\,\dd t\otimes\frac{\dd\kimephase}{2\pi},
\end{equation}
as in \cite{KPTms}.
\end{definition}

\begin{definition}[Phase law]\label{def:phaselaw}
A \emph{phase law} is a measurable family
$\{\Phi(\cdot\mid t)\}_{t\in[0,T]}$ of probability densities on
$\Sone$ with respect to $\dd\kimephase$:
$\Phi(\cdot\mid t)\ge 0$ and
$\int_{-\pi}^{\pi}\Phi(\kimephase\mid t)\,\dd\kimephase=1$ for each
$t$. Its \emph{trigonometric moments} are
$\alpha_n(t)=\E[e^{in\Theta_t}]
=\int_{-\pi}^{\pi}e^{in\kimephase}\Phi(\kimephase\mid t)\,
\dd\kimephase$, $n\in\Z$. The \emph{mean resultant length} is
$r(t)=|\alpha_1(t)|\in[0,1]$, and when $r(t)>0$ the \emph{mean
direction} $\meandir(t)$ is defined by
$\alpha_1(t)=r(t)e^{i\meandir(t)}$. The \emph{circular variance} is
$V(t)=1-r(t)$ \cite{MardiaJupp}.
\end{definition}

Under Assumption~\ref{ass:interp}, $\Phi(\cdot\mid t)$ is the object
estimated by kime-phase tomography from the repeated-measurement
records $\{Y_{j,k}\}$; the identifiability, deconvolution, and
Cram\'er--Rao theory for this estimation problem is developed in
\cite{KPTms} and is taken as given here.

\subsection{The action--angle dictionary}

The following elementary lemma, which does not appear explicitly in
\cite{KPTms,WDWms,DiracKime}, although all of its ingredients do and it's important in this study.

\begin{lemma}[The kime cone is an action--angle
chart]\label{lem:actionangle}
Let $J=\tfrac12 t^2=\tfrac12|\kappa|^2$ and define
\[
  \Psi:\Sone\times(0,\infty)\longrightarrow\R^2\setminus\{0\},
  \qquad
  \Psi(\kimephase,J)=(q,p)
  =\bigl(\sqrt{2J}\,\sin\kimephase,\;\sqrt{2J}\,\cos\kimephase\bigr).
\]
Then $\Psi$ is a diffeomorphism and
\begin{equation}\label{eq:pullback}
  \Psi^{*}(\dd q\wedge\dd p)=\dd\kimephase\wedge\dd J,
  \qquad\text{hence}\qquad
  \{\kimephase,J\}=1
\end{equation}
with respect to the standard symplectic structure
$\omega_0=\dd q\wedge\dd p$ on the punctured plane. Moreover the
Liouville measure corresponds to the kime measure:
$\dd q\,\dd p=\dd J\,\dd\kimephase=t\,\dd t\,\dd\kimephase
=2\pi\,\dd\mu_{g_0}$.
\end{lemma}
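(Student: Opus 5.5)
The plan is to check the three claims of Lemma~\ref{lem:actionangle} directly in coordinates: that $\Psi$ is a diffeomorphism, the pullback identity \eqref{eq:pullback}, and the measure correspondence.

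\emph{Diffeomorphism.} I will write down the inverse explicitly. Since $q^2+p^2=2J$, set $J=\tfrac12(q^2+p^2)$, which is smooth and positive on $\R^2\setminus\{0\}$. Set $\kimephase=\operatorname{atan2}(q,p)$, the angle of the unit vector $(p,q)/|(p,q)|$ measured from the $p$-axis toward the $q$-axis. This is a smooth map into $\Sone=\R/2\pi\Z$, because locally it is a branch of the polar angle. Composing the two maps in either order gives the identity. Both maps are smooth, since $J\mapsto\sqrt{2J}$ is smooth for $J>0$. So $\Psi$ is a diffeomorphism.

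\emph{Pullback.} Compute $\dd q=\tfrac{1}{\sqrt{2J}}\sin\kimephase\,\dd J+\sqrt{2J}\cos\kimephase\,\dd\kimephase$ and $\dd p=\tfrac{1}{\sqrt{2J}}\cos\kimephase\,\dd J-\sqrt{2J}\sin\kimephase\,\dd\kimephase$. Wedging, the cross terms give
\[
\cos^2\kimephase\,\dd\kimephase\wedge\dd J-\sin^2\kimephase\,\dd J\wedge\dd\kimephase=\dd\kimephase\wedge\dd J.
\]
The $\dd J\wedge\dd J$ and $\dd\kimephase\wedge\dd\kimephase$ terms vanish.

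\emph{Poisson bracket.} Because $\Psi$ is a symplectomorphism, Poisson brackets are preserved. With the paper's convention $\{f,g\}=\partial_qf\,\partial_pg-\partial_pf\,\partial_qg$ for the form $\dd q\wedge\dd p$, the pair $(\kimephase,J)$ plays the role of $(q,p)$. Hence $\{\kimephase,J\}=1$. As a cross-check I will compute it directly: $\partial_q J=q$, $\partial_p J=p$, $\partial_q\kimephase=p/(q^2+p^2)$, and $\partial_p\kimephase=-q/(q^2+p^2)$. These give $(p^2+q^2)/(q^2+p^2)=1$.

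\emph{Measures.} The Liouville measure is $|\omega_0|$. Its pullback is $|\dd\kimephase\wedge\dd J|=\dd J\,\dd\kimephase$. Substituting $J=\tfrac12t^2$ gives $\dd J=t\,\dd t$. By \eqref{eq:conemetric}, $t\,\dd t\,\dd\kimephase=2\pi\,\dd\mu_{g_0}$. The removed origin has measure zero, so nothing is lost there.

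\emph{Main obstacle.} The only delicate point is orientation and sign conventions. The choice $(q,p)=(\sqrt{2J}\sin\kimephase,\sqrt{2J}\cos\kimephase)$ rather than $(\cos,\sin)$ is exactly what makes the sign $+1$ rather than $-1$. I will double-check this through the explicit bracket computation above. A secondary point is the smoothness of the angle map globally on $\Sone$, which the local-branch argument handles.
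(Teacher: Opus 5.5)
Your proposal is correct and follows the paper's proof essentially line for line: the same differential computation giving $\Psi^{*}(\dd q\wedge\dd p)=\dd\kimephase\wedge\dd J$, the Darboux-chart reading of $\{\kimephase,J\}=1$, and $\dd J=t\,\dd t$ combined with the Haar factor $1/2\pi$ in $\dd\mu_{g_0}$. Your explicit $\operatorname{atan2}$ inverse and the direct bracket cross-check (both correct) only spell out what the paper calls ``clear (polar coordinates)''.
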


\begin{proof}
With $t=\sqrt{2J}$, we compute
$\dd q=\sqrt{2J}\cos\kimephase\,\dd\kimephase
+(2J)^{-1/2}\sin\kimephase\,\dd J$ and
$\dd p=-\sqrt{2J}\sin\kimephase\,\dd\kimephase
+(2J)^{-1/2}\cos\kimephase\,\dd J$. Wedging,
\[
  \dd q\wedge\dd p
  =\cos^2\kimephase\,\dd\kimephase\wedge\dd J
   -\sin^2\kimephase\,\dd J\wedge\dd\kimephase
  =\dd\kimephase\wedge\dd J .
\]
Smooth invertibility on the stated domains is clear (polar
coordinates). The bracket statement follows because in any chart in
which $\omega_0$ takes the Darboux form $\dd x\wedge\dd y$ one has
$\{x,y\}=1$; here $(x,y)=(\kimephase,J)$. Finally
$\dd J=t\,\dd t$, and \eqref{eq:conemetric} carries the Haar factor
$1/2\pi$, giving the last display.
\end{proof}

\emph{Orientation convention.}\label{rem:orientation}
Two natural identifications of the punctured kime plane with a
one-DOF phase space differ by orientation. The map $\Psi$ above is
chosen so that $\{\kimephase,J\}=+1$, matching the sign convention of
the conjugate pair $\{\theta^{xy},S_z\}=1$ in Problem~(III)
\cite{CarcassiAidalaBook}; the alternative identification
$(q,p)=(\operatorname{Re}\kappa,\operatorname{Im}\kappa)
=(t\cos\kimephase,\,t\sin\kimephase)$ is holomorphic in $\kappa$ and
gives $\{J,\kimephase\}=+1$, i.e., the opposite orientation. The two
choices are exchanged by the reflection $\kappa\mapsto\bar\kappa$ and
carry the \emph{same} unsigned area measure
$|\dd\kimephase\wedge\dd J|=t\,\dd t\,\dd\kimephase$, so every
measure-theoretic and entropic statement below is independent of the
choice. The distinction matters only for the K\"ahler normal form of
Section~\ref{ssec:kahler}, where it is made explicit
(Proposition~\ref{prop:kahler}).

\emph{Scope of the dictionary.}\label{rem:dictionary}
Lemma~\ref{lem:actionangle} states that the kime chart
$(\kimephase,t)$ with the cone measure is \emph{exactly} the
action--angle chart of one classical DOF (for the harmonic oscillator,
$J$ is the action and $\kimephase$ the angle; for a general
one-DOF Hamiltonian with compact regular energy levels, the
Liouville--Arnold theorem \cite[Ch.~10]{Arnold} supplies an
action--angle chart with the same symplectic normal form). Under
Assumption~\ref{ass:interp} the identification of the \emph{latent
statistical} phase with the \emph{mechanical} angle is a modeling
step: it asserts that trial-to-trial variability of a repeated
experiment is variability of the angle variable at (approximately)
fixed action. This is the precise, falsifiable sense in which the
kime representation ``lives on'' the phase spaces of Problems
(I)--(III), and every theorem below separates cleanly into a
representation-level statement (unconditional) and this
identification (a postulate, in the same spirit as the ground-state
matching postulate of \cite{WDWms}).

\begin{definition}[Kime representation of a state]\label{def:kimestate}
A \emph{state} of one DOF is a probability density $\rho$ on
$(\R^2,\dd q\,\dd p)$. Its \emph{kime representation} is the density
$\tilde\rho=\rho\circ\Psi$ on $\Sone\times(0,\infty)$ with respect to
$\dd\kimephase\,\dd J$ (no Jacobian appears, by
Lemma~\ref{lem:actionangle}). The induced \emph{phase law at action
$J$} is the conditional density
\[
  \Phi(\kimephase\mid J)
  =\frac{\tilde\rho(\kimephase,J)}{\rho_J(J)},
  \qquad
  \rho_J(J)=\int_{-\pi}^{\pi}\tilde\rho(\kimephase,J)\,\dd\kimephase ,
\]
defined for $\rho_J(J)>0$. A state is \emph{phase-equipartitioned} if
$\Phi(\cdot\mid J)$ is the uniform (Haar) law for
$\rho_J$-a.e.~$J$, i.e., $\tilde\rho=\tilde\rho(J)$.
\end{definition}

\subsection{Circular Fisher information and phase diffusion}

\begin{definition}[Circular Fisher information]\label{def:fisher}
For a strictly positive \emph{(kime-phase) density}
$\Phi\in C^1(\Sone)$,
\[
  \Fish[\Phi]
  =\int_{-\pi}^{\pi}\frac{\bigl(\Phi'(\kimephase)\bigr)^2}
  {\Phi(\kimephase)}\,\dd\kimephase
  =\int_{-\pi}^{\pi}\Phi(\kimephase)
  \bigl(\partial_\kimephase\log\Phi(\kimephase)\bigr)^2\,
  \dd\kimephase .
\]
\end{definition}

\begin{lemma}[Amplitude/kinetic identity;
{\cite{WDWms}}]\label{lem:kinetic}
For strictly positive $\Phi\in C^1(\Sone)$,
\begin{equation}\label{eq:kinetic}
  \int_{-\pi}^{\pi}
  \bigl|\partial_\kimephase\sqrt{\Phi(\kimephase)}\bigr|^2
  \dd\kimephase=\tfrac14\,\Fish[\Phi],
  \qquad\text{hence}\qquad
  \Bigl\langle \tfrac{\hat p_\kimephase^{\,2}}{2\muI}
  \Bigr\rangle_{\!\sqrt{\Phi}}
  =\frac{\hbar^2}{8\muI}\,\Fish[\Phi],
\end{equation}
where $\hat p_\kimephase=-i\hbar\partial_\kimephase$ acts on the
periodic Sobolev space $H^1(\Sone)$ and $\muI>0$ is the phase inertia
of \cite{WDWms}, in which \eqref{eq:kinetic} appears as the
Fisher--kinetic identity underlying the potential-reconstruction
theorem.
\end{lemma}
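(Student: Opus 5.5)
The plan is a direct computation in two parts: first the integral identity, then its operator translation.

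\emph{The integral identity.} Since $\Phi$ is strictly positive and $C^1$ on the compact circle, $\sqrt{\Phi}$ is also $C^1$. The chain rule then gives
\[
  \partial_\kimephase\sqrt{\Phi}=\frac{\Phi'}{2\sqrt{\Phi}},
  \qquad\text{hence}\qquad
  \bigl|\partial_\kimephase\sqrt{\Phi}\bigr|^2=\frac{(\Phi')^2}{4\Phi}.
\]
Integrating over $[-\pi,\pi)$ and comparing with Definition~\ref{def:fisher} yields the first equality in \eqref{eq:kinetic}. Finiteness of $\Fish[\Phi]$ is automatic, because $\Phi$ is bounded below by a positive constant on $\Sone$. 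Consequently $\psi=\sqrt{\Phi}$ lies in $C^1(\Sone)\subset H^1(\Sone)$, and $\|\psi\|_{L^2(\dd\kimephase)}^2=\int\Phi\,\dd\kimephase=1$. So $\psi$ is a normalized element of the form domain of $\hat p_\kimephase^{\,2}$.

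\emph{The operator identity.} I would interpret $\langle \hat p_\kimephase^{\,2}/2\muI\rangle_\psi$ as the quadratic form $\frac{1}{2\muI}\langle \hat p_\kimephase\psi,\hat p_\kimephase\psi\rangle$. This is the only interpretation needed when $\psi$ is merely $H^1$ and not $H^2$. Because $\hat p_\kimephase=-i\hbar\partial_\kimephase$ is self-adjoint on $H^1(\Sone)$ with periodic boundary conditions, $\langle\psi,\hat p_\kimephase^{\,2}\psi\rangle=\|\hat p_\kimephase\psi\|^2$ holds whenever $\psi\in H^2$: integrate by parts, and the boundary terms cancel by periodicity. The quadratic form extends this to $H^1$. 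We have $\|\hat p_\kimephase\psi\|^2=\hbar^2\int|\partial_\kimephase\psi|^2\,\dd\kimephase=\tfrac{\hbar^2}{4}\Fish[\Phi]$. Dividing by $2\muI$ gives $\hbar^2\Fish[\Phi]/(8\muI)$, which is the stated value.

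\emph{Main obstacle.} There is no substantial obstacle. The one point needing care is the domain issue in the second step: $\Phi\in C^1$ gives only $\psi\in C^1$, not $C^2$. The expectation must therefore be defined through the closed quadratic form, that is, the Friedrichs or form sense, rather than by applying $\hat p_\kimephase^{\,2}$ pointwise. I would state this convention explicitly. If one insists on the operator sense, an approximation argument closes the gap. Take $\psi_\epsilon=\psi*\eta_\epsilon$ by periodic mollification. Then $\|\partial\psi_\epsilon\|^2\to\|\partial\psi\|^2$, and each $\psi_\epsilon$ satisfies the identity by integration by parts. It is worth noting that the strict positivity hypothesis is exactly what guarantees $\sqrt{\Phi}\in C^1$. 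Without it, $\sqrt{\Phi}$ can fail to be differentiable at zeros of $\Phi$, although the identity still holds in the $H^1$ sense whenever $\Fish[\Phi]<\infty$.
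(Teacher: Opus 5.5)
Your proposal is correct and follows the paper's proof: the chain-rule identity $|\partial_\kimephase\sqrt\Phi|^2=(\Phi')^2/(4\Phi)$ integrated over $\Sone$, then the kinetic expectation in the real state $\sqrt\Phi$ evaluated by periodic integration by parts. Your explicit quadratic-form reading, with form domain $H^1(\Sone)$, is a welcome tightening of a point the paper passes over, since $\sqrt\Phi$ is only guaranteed to be $C^1$, not $H^2$.
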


\begin{proof}
$\partial_\kimephase\sqrt\Phi=\Phi'/(2\sqrt\Phi)$, so
$|\partial_\kimephase\sqrt\Phi|^2=(\Phi')^2/(4\Phi)$; integrate. The
second identity is the expectation of
$\hat p_\kimephase^2/(2\muI)$ in the real state $\sqrt\Phi$, using
integration by parts on $\Sone$ (boundary terms vanish by
periodicity).
\end{proof}

\begin{lemma}[Phase diffusion: moment decay, entropy production,
equipartition attractor]\label{lem:heat}
Let $\Phi_t$ solve the Fokker--Planck (heat) equation on $\Sone$
\cite{Risken},
$\partial_t\Phi_t=D\,\partial_\kimephase^2\Phi_t$, $D>0$, with
strictly positive $C^2$ initial datum $\Phi_0$. 
Then,
\begin{enumerate}[label=(\roman*)]
\item (Moment decay; the stochastic-phase suppression law of
\cite{DiracKime})
$\alpha_n(t)=e^{-Dn^2t}\alpha_n(0)$ for all $n\in\Z$.
\item (de Bruijn identity on the circle; cf.\
\cite{Stam1959,DemboCoverThomas})
$\dfrac{\dd}{\dd t}\,\ent[\Phi_t]=D\,\Fish[\Phi_t]\ \ge 0$, with
equality at time $t$ iff $\Phi_t$ is the uniform density
$\Phi_\infty\equiv 1/2\pi$.
\item (Exponential equipartition)
$\chi^2(\Phi_t\,\|\,\Phi_\infty)
:=2\pi\!\int_{-\pi}^{\pi}\!\bigl(\Phi_t-\tfrac1{2\pi}\bigr)^2
\dd\kimephase
\le e^{-2Dt}\,\chi^2(\Phi_0\,\|\,\Phi_\infty)$,
and $\ent[\Phi_t]\uparrow\log 2\pi=\max\ent$.
\end{enumerate}
\end{lemma}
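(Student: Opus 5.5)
The plan is to work entirely with the Fourier series of $\Phi_t$ on $\Sone$, using the fact that the heat equation is diagonalised by the characters $e^{in\kimephase}$. Standard parabolic theory on the compact manifold $\Sone$ gives the following facts for the strictly positive $C^2$ datum $\Phi_0$. The solution is $\Phi_t=G_t*\Phi_0$, where $G_t$ is the periodic heat kernel. It is $C^\infty$ in $\kimephase$ for $t>0$ and conserves total mass. By the strong maximum principle, or positivity of $G_t$, it stays strictly positive with $\min\Phi_t\ge\min\Phi_0>0$. These regularity and positivity facts justify differentiating under the integral sign and ensure that $\log\Phi_t$ and $\Fish[\Phi_t]$ are well defined.

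For (i), I differentiate $\alpha_n(t)=\int e^{in\kimephase}\Phi_t\,\dd\kimephase$ in $t$, substitute the equation, and integrate by parts twice. Periodicity kills the boundary terms, so $\dot\alpha_n=D\int(\partial_\kimephase^2 e^{in\kimephase})\Phi_t=-Dn^2\alpha_n$. Solving this linear ODE gives the claim.

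For (ii), I compute
\[
\frac{\dd}{\dd t}\ent[\Phi_t]=-\int(1+\log\Phi_t)\,\partial_t\Phi_t=-D\int(1+\log\Phi_t)\,\Phi_t''.
\]
The constant term vanishes by mass conservation. One integration by parts, with periodic boundary terms, turns the rest into $D\int(\Phi_t')^2/\Phi_t=D\,\Fish[\Phi_t]$, using Definition~\ref{def:fisher}. Nonnegativity is immediate. Equality holds iff $\Phi_t'\equiv0$, i.e.\ iff $\Phi_t$ is constant, and normalization then forces $\Phi_t\equiv1/2\pi$.

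For (iii), Parseval with the convention $\Phi=\frac1{2\pi}\sum_n\alpha_{-n}e^{in\kimephase}$ gives
\[
\chi^2(\Phi_t\|\Phi_\infty)=\sum_{n\neq0}|\alpha_n(t)|^2=\sum_{n\ne0}e^{-2Dn^2t}|\alpha_n(0)|^2,
\]
where the $n=0$ term drops because $\alpha_0=1$ matches the uniform law. Since $n^2\ge1$ for $n\neq 0$, this sum is at most $e^{-2Dt}\chi^2(\Phi_0\|\Phi_\infty)$. The series converges because $\Phi_0\in L^2$.

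The entropy part of (iii) combines three facts. First, monotonicity comes from (ii). Second, $\ent\le\log2\pi$ holds because $\log 2\pi-\ent[\Phi]=\KL(\Phi\|\Phi_\infty)\ge0$. Third, convergence to this maximum follows from $\KL\le\log(1+\chi^2)\le\chi^2\to0$, via Jensen applied to $\log$ under $\Phi_t$.

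I expect no step to be genuinely hard. The only delicate point is rigour: justifying the interchanges of derivative and integral, and making the equality case precise, both of which rest on the positivity lower bound and smoothness for $t>0$. The constant in (iii) also depends on the first nonzero eigenvalue $D$ of $-D\partial_\kimephase^2$, i.e.\ the spectral gap. I will note that the bound is sharp when $\alpha_{\pm1}(0)\ne0$ and the higher modes vanish.
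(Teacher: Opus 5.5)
Your proposal is correct and follows the paper's proof essentially step for step. Part (i) uses the moment ODE obtained by integrating by parts twice, part (ii) uses the de Bruijn identity via one integration by parts against $\log\Phi_t$, and part (iii) uses Parseval together with $\KL\le\log(1+\chi^2)\le\chi^2$; your added justification of positivity through $\min\Phi_t\ge\min\Phi_0$ and your sharpness remark for pure first-harmonic data are refinements rather than a different route.
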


\begin{proof}
(i) Multiply the equation by $e^{in\kimephase}$ and integrate by parts
twice on $\Sone$ (all boundary terms vanish by periodicity):
$\dot\alpha_n=-Dn^2\alpha_n$.

(ii) Positivity of $\Phi_t$ for $t>0$ is standard (strong maximum
principle for the heat semigroup on $\Sone$). Differentiating,
\[
  \frac{\dd}{\dd t}\ent[\Phi_t]
  =-\int(\partial_t\Phi_t)(1+\log\Phi_t)\,\dd\kimephase
  =-D\int \Phi_t''\,\log\Phi_t\,\dd\kimephase
  =D\int\frac{(\Phi_t')^2}{\Phi_t}\,\dd\kimephase
  =D\,\Fish[\Phi_t],
\]
where we used $\int\Phi_t''\,\dd\kimephase=0$ and one integration by
parts; all boundary terms vanish by periodicity. $\Fish[\Phi]=0$ iff
$\Phi'\equiv0$ iff $\Phi$ is uniform.

(iii) With trigonometric moments normalized as in
Definition~\ref{def:phaselaw}, Parseval's identity gives
$\chi^2(\Phi_t\|\Phi_\infty)=\sum_{n\ne0}|\alpha_n(t)|^2$, so
by~(i)
\[
  \chi^2(\Phi_t\|\Phi_\infty)
  =\sum_{n\ne0}e^{-2Dn^2t}|\alpha_n(0)|^2
  \le e^{-2Dt}\,\chi^2(\Phi_0\|\Phi_\infty).
\]
For the entropy limit, note first that
$\ent[\Phi]=\log2\pi-\KL(\Phi\,\|\,\Phi_\infty)$ by direct
computation, and second that
$\KL\le\log(1+\chi^2)\le\chi^2$ by Jensen's inequality applied to the
concave logarithm \cite{CoverThomas}. Hence
$\log2\pi\ \ge\ \ent[\Phi_t]\ \ge\ \log2\pi
-\chi^2(\Phi_t\|\Phi_\infty)\ \longrightarrow\ \log2\pi$,
and monotonicity of the convergence is (ii). Uniqueness of the
maximizer ($\ent[\Phi]\le\log2\pi$ with equality iff $\Phi$ uniform)
is Jensen's inequality applied to $-\log$.
\end{proof}

\emph{ Lemma \ref{lem:heat}
interpretation.}\label{int:foundations}
Under Assumption~\ref{ass:interp}, Lemma~\ref{lem:heat} is a statement
about \emph{experimental reproducibility}: if the latent trial phase
diffuses between repetitions (the stochastic-projection mechanism of
\cite{DiracKime}), then the empirical phase law loses structure at the
universal rate $e^{-Dn^2t}$ per harmonic, its entropy grows
monotonically, and the maximally reproducibility-agnostic (Haar)
law is the unique attractor. Lemma~\ref{lem:actionangle} will convert
this into the dynamical/thermodynamic half of the uncertainty problem
in Section~\ref{ssec:thermo}.

\section{Problem I (uncertainty): the classical uncertainty principle
in the kime representation}\label{sec:uncertainty}

We first recall the flat-space entropic uncertainty principle in the
form used by \cite{CarcassiAidalaBook,CarcassiAidalaEntropy}, then
prove its exact analogue on the kime cylinder $\Sone\times\R$
(Theorem~\ref{thm:cylinder}), a sharp circular Fisher inequality
(Theorem~\ref{thm:sharpcircular}), and an exact non-canonical
uncertainty relation (Theorem~\ref{thm:noncanonical}). We then treat
multiple DOF (Theorems~\ref{thm:williamson}--\ref{thm:aggregate})
and the dynamical/thermodynamic conjecture
(Theorem~\ref{thm:mixing}), and state the remaining open problems
(Problems~\ref{prob:expected-bracket}--\ref{prob:capacity}).

\subsection{Maximum-entropy lemmas and the flat benchmark}

\begin{lemma}[Gaussian maximum entropy; \cite{CoverThomas}]
\label{lem:gaussmax}
Let $\rho$ be a probability density on $\R$ (w.r.t.\ Lebesgue) with
finite variance $\sigma^2>0$ and $\ent[\rho]>-\infty$. Then
$\ent[\rho]\le\tfrac12\log(2\pi e\sigma^2)$, with equality iff $\rho$
is Gaussian with variance $\sigma^2$.
\end{lemma}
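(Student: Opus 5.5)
The plan is to compare $\rho$ with the Gaussian $\gamma$ of the same mean $m$ and variance $\sigma^2$ through the Kullback--Leibler divergence, rather than by a variational argument. Set
\[
  \gamma(x)=(2\pi\sigma^2)^{-1/2}\exp\!\bigl(-(x-m)^2/(2\sigma^2)\bigr),
  \qquad m=\int x\rho(x)\,\dd x .
\]
Note that $m$ is finite because $\sigma^2<\infty$. The first step computes the cross term. Since $\log\gamma$ is a quadratic polynomial in $x$,
\[
  -\int\rho\log\gamma\,\dd x
  =\tfrac12\log(2\pi\sigma^2)+\frac{1}{2\sigma^2}\int (x-m)^2\rho\,\dd x
  =\tfrac12\log(2\pi e\sigma^2).
\]
This quantity depends on $\rho$ only through its first two moments. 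It is finite, and it equals $\ent[\gamma]$.

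The second step is Gibbs' inequality, $\KL(\rho\|\gamma)=\int\rho\log(\rho/\gamma)\,\dd x\ge0$. This follows from Jensen's inequality applied to the concave logarithm on the set $\{\rho>0\}$: indeed $\int\rho\log(\gamma/\rho)\le\log\int_{\{\rho>0\}}\gamma\le 0$. Equality holds iff $\gamma/\rho$ is $\rho$-a.s.\ constant and $\gamma$ puts no mass off $\{\rho>0\}$. Since $\gamma>0$ everywhere, this forces $\rho=\gamma$ a.e.

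The third step combines the two. We split
\[
  \KL(\rho\|\gamma)=-\ent[\rho]-\int\rho\log\gamma\,\dd x ,
\]
which gives $0\le \KL(\rho\|\gamma)=\tfrac12\log(2\pi e\sigma^2)-\ent[\rho]$. This is the inequality. The equality case is $\KL=0$, i.e.\ $\rho=\gamma$. Conversely, every Gaussian with variance $\sigma^2$ attains equality by the same computation.

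The only delicate point, and the main obstacle, is justifying the splitting of the KL integral into two separate integrals. This needs $\int\rho|\log\gamma|<\infty$, which holds because the finite variance makes the quadratic $\log\gamma$ integrable against $\rho$. It also needs the entropy integral to be well defined and finite. We have $\ent[\rho]>-\infty$ by hypothesis, and $\ent[\rho]<+\infty$ follows because the positive part of $-\rho\log\rho$ is controlled, for example by the same Gibbs argument applied on truncations. Alternatively, one can run the argument with $\int\rho\log(\rho/\gamma)$ defined in $[0,\infty]$, and note that the finiteness of the cross term lets us rearrange.

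With these integrability facts checked, the identity holds with all terms finite, and the lemma follows.
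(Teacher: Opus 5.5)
Your proposal is correct and is essentially the paper's proof: both compare $\rho$ with the Gaussian of matching mean and variance, evaluate the cross term $-\int\rho\log\gamma$ as $\tfrac12\log(2\pi e\sigma^2)$ using only the first two moments, and read off both the inequality and the equality case from nonnegativity of the Kullback--Leibler divergence. Your extra care about splitting the divergence integral supplies a justification the paper leaves implicit; of your two options, the cleaner one is to define the divergence in $[0,\infty]$ and use that $\rho\log\gamma$ is integrable under finite variance, which also gives $\ent[\rho]<+\infty$ automatically.
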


\begin{proof}
Let $g$ be the Gaussian density with the same mean $m$ and variance
$\sigma^2$. Then
$0\le \KL(\rho\|g)
=-\ent[\rho]-\int\rho\log g$, and
$-\int\rho\log g
=\tfrac12\log(2\pi\sigma^2)
+\E_\rho\!\bigl[(x-m)^2\bigr]/(2\sigma^2)
=\tfrac12\log(2\pi\sigma^2)+\tfrac12$. Rearranging gives the bound;
equality in $\KL\ge0$ holds iff $\rho=g$ a.e.
\end{proof}

\begin{lemma}[Entropy subadditivity; \cite{CoverThomas}]
\label{lem:subadd}
Let $\rho$ be a probability density on a product measure space
$(X\times Y,\Leb_X\otimes\Leb_Y)$ with marginals $\rho_X,\rho_Y$ and
$\ent[\rho]>-\infty$. Then
$\ent[\rho]\le\ent[\rho_X]+\ent[\rho_Y]$, with equality iff
$\rho=\rho_X\otimes\rho_Y$ a.e.
\end{lemma}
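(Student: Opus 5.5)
The plan is to reduce subadditivity to nonnegativity of the Kullback--Leibler divergence between $\rho$ and the product of its own marginals. That is, we identify the deficit $\ent[\rho_X]+\ent[\rho_Y]-\ent[\rho]$ with the mutual information
\[
  I(\rho)\;=\;\KL\bigl(\rho\,\|\,\rho_X\otimes\rho_Y\bigr)
  \;=\;\int_{X\times Y}\rho\,\log\frac{\rho}{\rho_X\otimes\rho_Y}\;\dd(\Leb_X\otimes\Leb_Y),
\]
and then invoke Gibbs' inequality $\KL\ge 0$, with equality iff the two densities agree a.e. This is the same mechanism used in the proof of Lemma~\ref{lem:gaussmax}, where the reference density is Gaussian instead of a product.

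First I would define the marginals $\rho_X(x)=\int_Y\rho(x,y)\,\dd\Leb_Y(y)$ and $\rho_Y(y)=\int_X\rho(x,y)\,\dd\Leb_X(x)$ by Tonelli's theorem, since $\rho\ge0$. Next I would note that $\rho>0$ forces $\rho_X(x)>0$ and $\rho_Y(y)>0$ off a $\rho$-null set: the set $\{\rho_X=0\}\times Y$ carries zero $\rho$-mass by Tonelli. Hence $\rho\ll\rho_X\otimes\rho_Y$, and the logarithm in $I(\rho)$ is $\rho$-a.e. well defined. Then I would split
\[
  \int\rho\log(\rho_X\otimes\rho_Y)
  =\int_X\rho_X\log\rho_X\,\dd\Leb_X+\int_Y\rho_Y\log\rho_Y\,\dd\Leb_Y
  =-\ent[\rho_X]-\ent[\rho_Y],
\]
again integrating out one variable by Tonelli or Fubini. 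This gives $I(\rho)=-\ent[\rho]+\ent[\rho_X]+\ent[\rho_Y]$. Finally I would prove $I\ge0$ by Jensen's inequality applied to the concave $\log$ on the probability space $(X\times Y,\rho)$:
\[
  -I=\int\rho\log\frac{\rho_X\otimes\rho_Y}{\rho}\le\log\int_{\{\rho>0\}}\rho_X\otimes\rho_Y\le\log1=0.
\]
Equality holds iff the ratio $(\rho_X\otimes\rho_Y)/\rho$ is $\rho$-a.e. constant and $\rho_X\otimes\rho_Y$ puts no mass outside $\{\rho>0\}$. Together these force $\rho=\rho_X\otimes\rho_Y$ a.e. Conversely, if $\rho$ is a product, the identity gives equality directly.

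The main obstacle is integrability, not algebra. The splitting into $\ent[\rho_X]+\ent[\rho_Y]$ needs $\int\rho\,|\log\rho_X|$ and $\int\rho\,|\log\rho_Y|$ to be controlled, but $\ent$ is only assumed to lie in $[-\infty,+\infty)$, and the marginal entropies could be $+\infty$. I would handle this by cases.

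\textbf{Case 1:} either $\ent[\rho_X]$ or $\ent[\rho_Y]$ equals $+\infty$. Then the inequality is trivial, provided the sum is not of the form $+\infty-\infty$.

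\textbf{Case 2:} both marginal entropies are finite. Then $\log(\rho_X\otimes\rho_Y)\in L^1(\rho)$. Since $\ent[\rho]>-\infty$, the positive part of $\rho\log\rho$ is integrable, so $\rho\log(\rho/\rho_X\otimes\rho_Y)$ has a well-defined integral. The rearrangement is then legitimate, and the Jensen step applies to the quasi-integrable function.

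I would remark that this bookkeeping is exactly what the hypothesis $\ent[\rho]>-\infty$ is for. With it, the statement is the classical one of \cite{CoverThomas}, and I would cite it there for the measure-theoretic details.
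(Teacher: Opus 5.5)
Your proposal is correct and follows the paper's proof: the paper likewise writes $\KL(\rho\,\|\,\rho_X\otimes\rho_Y)=-\ent[\rho]+\ent[\rho_X]+\ent[\rho_Y]\ge 0$ using Fubini, and reads off the equality case from Gibbs' inequality. Your integrability case analysis goes beyond what the paper does, but it is incomplete: it also needs the case where a marginal entropy equals $-\infty$ (and neither equals $+\infty$). That case is excluded by $\ent[\rho]>-\infty$, because splitting $\int\rho\log\rho$ into $\KL+\int\rho_X\log\rho_X+\int\rho_Y\log\rho_Y$, whose negative parts are all $\rho$-integrable, would force $\ent[\rho]=-\infty$.
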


\begin{proof}
$\KL(\rho\,\|\,\rho_X\otimes\rho_Y)
=\int\rho\log\rho-\int\rho\log(\rho_X\rho_Y)
=-\ent[\rho]+\ent[\rho_X]+\ent[\rho_Y]\ \ge 0$,
using that the $X$- and $Y$-marginal integrals of $\rho\log\rho_X$
and $\rho\log\rho_Y$ reduce to marginal entropies by Fubini.
Equality iff $\rho=\rho_X\otimes\rho_Y$ a.e.
\end{proof}

\begin{theorem}[Flat entropic uncertainty principle; cf.\
\cite{CarcassiAidalaBook,CarcassiAidalaEntropy}]\label{thm:flat}
Let $\rho$ be a probability density on $(\R^2,\dd q\,\dd p)$ with
finite marginal variances $\sigma_q^2,\sigma_p^2$ and
$\ent[\rho]>-\infty$. Then
\begin{equation}\label{eq:flatUP}
  \sigma_q\,\sigma_p\;\ge\;\frac{1}{2\pi e}\,e^{\ent[\rho]},
\end{equation}
with equality iff $\rho$ is a Gaussian product density. Since
Hamiltonian flows preserve $\dd q\,\dd p$, the right-hand side is a
constant of motion, while the factors on the left individually evolve.
\end{theorem}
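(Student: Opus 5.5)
The plan is to chain Lemma~\ref{lem:subadd} with Lemma~\ref{lem:gaussmax}, applied separately to the two coordinates. First I will dispose of integrability. Since $\sigma_q^2,\sigma_p^2<\infty$, each marginal entropy satisfies $\ent[\rho_q]<+\infty$ and $\ent[\rho_p]<+\infty$, by the Gaussian comparison in the proof of Lemma~\ref{lem:gaussmax}. Together with $\ent[\rho]>-\infty$ and subadditivity, this gives $\ent[\rho_q],\ent[\rho_p]>-\infty$, so both lemmas apply. The hypothesis of Lemma~\ref{lem:gaussmax} also needs $\sigma_q,\sigma_p>0$. This holds automatically, because $\rho$ is a density, so its marginals are absolutely continuous and cannot be point masses.

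The chain of inequalities is
\[
  \ent[\rho]\;\le\;\ent[\rho_q]+\ent[\rho_p]
  \;\le\;\tfrac12\log(2\pi e\sigma_q^2)+\tfrac12\log(2\pi e\sigma_p^2)
  \;=\;\log\bigl(2\pi e\,\sigma_q\sigma_p\bigr).
\]
Exponentiating gives $\sigma_q\sigma_p\ge e^{\ent[\rho]}/(2\pi e)$, which is \eqref{eq:flatUP}.

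For the equality case, equality in the chain forces equality at both steps. Equality in the first step means $\rho=\rho_q\otimes\rho_p$ a.e., by Lemma~\ref{lem:subadd}. Equality in the second means each marginal is Gaussian, by Lemma~\ref{lem:gaussmax}. So $\rho$ is a product of Gaussians. Conversely, a product of Gaussians attains equality at both steps. One should note that ``Gaussian product density'' here means axis-aligned: a correlated Gaussian is strictly inside the bound.

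The final sentence of the statement follows from Liouville's theorem. A Hamiltonian flow $\phi_s$ has Jacobian determinant $1$, so the pushforward density is $\rho_s=\rho\circ\phi_s^{-1}$. A change of variables then gives $\ent[\rho_s]=\ent[\rho]$, so the right-hand side of \eqref{eq:flatUP} is conserved. The marginal variances are not invariant, as a rotation or shear already shows.

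The only step that needs real care is the finiteness bookkeeping, namely ruling out $\infty-\infty$ in the subadditivity step. I would handle it exactly as above: bound each marginal entropy from above using its variance, and from below using subadditivity.
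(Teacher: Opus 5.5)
Your proof is correct and follows the paper's argument: subadditivity (Lemma~\ref{lem:subadd}), then the Gaussian maximum-entropy bound (Lemma~\ref{lem:gaussmax}) on each marginal, then exponentiation, with equality forced in both lemmas, and Liouville invariance of $\ent$ by change of variables. Your extra finiteness bookkeeping, which rules out $\infty-\infty$ and shows the marginal entropies are finite, is a correct refinement that the paper leaves implicit.
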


\begin{proof}
Lemma~\ref{lem:subadd} and then Lemma~\ref{lem:gaussmax} on each
marginal give
$\ent[\rho]\le\tfrac12\log(2\pi e\sigma_q^2)
+\tfrac12\log(2\pi e\sigma_p^2)=\log(2\pi e\,\sigma_q\sigma_p)$.
Then, exponentiate and notice that equality in \eqref{eq:flatUP} requires equality in both lemmas
\ref{lem:gaussmax}
and \ref{lem:subadd}.
Invariance
of $\ent$ under measure-preserving flows is the change-of-variables
identity $\ent[\rho\circ\phi^{-1}]=\ent[\rho]$ for
$\phi$ preserving $\dd q\,\dd p$ (Liouville's theorem).
\end{proof}

\emph{Theorem \ref{thm:flat}
interpretation.}
Inequality \eqref{eq:flatUP} is a purely classical, Liouville-based
statement; it should be distinguished from the Hirschman--Beckner
entropic uncertainty principle for Fourier-conjugate quantum
observables, for which see \cite{Folland}. The two share the
maximum-entropy mechanism but not the underlying invariance.

\subsection{The circular sector: sharp inequalities on
$\Sone$ and on the kime cylinder}

The kime phase is compact, and the pair
$(\kimephase,p_\kimephase)$, an angle and its conjugate momentum, is the simplest pair for which the flat formulation of
Theorem~\ref{thm:flat} is \emph{not} directly meaningful ($\kimephase$
is multivalued; its ``variance'' is convention-dependent). This is
exactly the non-canonical obstruction isolated in Problem~(I)(a). The
circular-statistics objects of Definition~\ref{def:phaselaw} resolves
it.

\begin{lemma}[von Mises maximum entropy; cf.\ \cite{MardiaJupp}]
\label{lem:vmmax}
Let $\Phi$ be a probability density on $(\Sone,\dd\kimephase)$ with
mean resultant length $r\in[0,1)$ and (if $r>0$) mean direction
$\meandir$, and $\ent[\Phi]>-\infty$. Let
$A(\kappa)=I_1(\kappa)/I_0(\kappa)$ denote the Bessel ratio, a strictly
increasing bijection $A:[0,\infty)\to[0,1)$ \cite[Sec.~3.5]{MardiaJupp},
and set $\kappa(r)=A^{-1}(r)$. Then,
\begin{equation}\label{eq:hc}
  \ent[\Phi]\;\le\; h_c(r)
  \;:=\;\log\!\bigl(2\pi I_0(\kappa(r))\bigr)-\kappa(r)\,r,
\end{equation}
with equality iff
$\Phi=\vM(\cdot;\meandir,\kappa(r))
=\exp\{\kappa(r)\cos(\kimephase-\meandir)\}/\bigl(2\pi
I_0(\kappa(r))\bigr)$. The function $h_c$ is strictly decreasing on
$(0,1)$ with $h_c'(r)=-\kappa(r)$, $h_c(0)=\log2\pi$, and
$h_c(r)\to-\infty$ as $r\uparrow1$.
\end{lemma}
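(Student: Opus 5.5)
The argument copies the proof of Lemma~\ref{lem:gaussmax}, with the Gaussian replaced by the von Mises law of matching first trigonometric moment. The case $r=0$ comes first: then $\kappa(0)=0$, $h_c(0)=\log 2\pi$, and $\vM(\cdot;\meandir,0)$ is the uniform density. The bound $\ent[\Phi]\le\log2\pi$, with equality iff $\Phi$ is uniform, is the Jensen statement already used at the end of the proof of Lemma~\ref{lem:heat}. For $r\in(0,1)$ we set $\kappa=\kappa(r)>0$, which is well defined because $A$ is a strictly increasing bijection onto $[0,1)$. We then let $g=\vM(\cdot;\meandir,\kappa)$ and compute
\[
0\le\KL(\Phi\|g)=-\ent[\Phi]-\int\Phi\log g\,\dd\kimephase .
\]
Since $\log g=\kappa\cos(\kimephase-\meandir)-\log(2\pi I_0(\kappa))$, we need $\E_\Phi[\cos(\Theta-\meandir)]$. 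This equals $\operatorname{Re}\bigl(e^{-i\meandir}\alpha_1\bigr)=r$ by the definition $\alpha_1=re^{i\meandir}$. Hence $-\int\Phi\log g=\log(2\pi I_0(\kappa))-\kappa r=h_c(r)$, which gives \eqref{eq:hc}. Equality holds iff $\KL=0$, i.e.\ $\Phi=g$ a.e. One should also check that $g$ itself attains the bound. This needs $\E_g[\cos(\Theta-\meandir)]=I_1(\kappa)/I_0(\kappa)=A(\kappa)=r$, which follows from $I_0'=I_1$ by differentiating $\int e^{\kappa\cos\kimephase}\,\dd\kimephase=2\pi I_0(\kappa)$ under the integral. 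The hypothesis $\ent[\Phi]>-\infty$ is used only to make the KL split legitimate, since $\log g$ is bounded.

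For the properties of $h_c$, write $h_c(r)=\log(2\pi I_0(\kappa))-\kappa r$ with $\kappa=\kappa(r)$. The inverse function theorem gives $\kappa(r)$ differentiable on $(0,1)$, provided $A'>0$ there. I would take this from the cited strict monotonicity, or from the variance identity $A'(\kappa)=\Var_g[\cos\Theta]>0$. The chain rule then gives
\[
h_c'(r)=\kappa'(r)\bigl(A(\kappa)-r\bigr)-\kappa=-\kappa(r),
\]
because $I_0'/I_0=A$ and $A(\kappa(r))=r$. Thus $h_c'<0$ on $(0,1)$, so $h_c$ is strictly decreasing. The value $h_c(0)=\log(2\pi I_0(0))=\log2\pi$ is immediate.

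The main obstacle is the limit $h_c(r)\to-\infty$ as $r\uparrow1$, i.e.\ as $\kappa\to\infty$. Here I would use the Laplace-method asymptotic $I_0(\kappa)\sim e^{\kappa}/\sqrt{2\pi\kappa}$. An elementary upper bound would also do. The cleanest route is the direct estimate
\[
\log(2\pi I_0(\kappa))-\kappa A(\kappa)=\log\int e^{\kappa(\cos\kimephase-1)}\,\dd\kimephase+\kappa\bigl(1-A(\kappa)\bigr).
\]
The first term is $-\tfrac12\log\kappa+O(1)$, using $1-\cos x\ge 2x^2/\pi^2$ on $[-\pi,\pi]$ together with a Gaussian integral. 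For the second term we need $\kappa(1-A(\kappa))$ bounded. This follows from the standard bound $1-A(\kappa)\le 1/\kappa$, or from the asymptotic $A(\kappa)=1-1/(2\kappa)+O(\kappa^{-2})$ cited from \cite[Sec.~3.5]{MardiaJupp}. Combining the two terms gives $h_c\le-\tfrac12\log\kappa+O(1)\to-\infty$. Alternatively, this follows from integrating $h_c'=-\kappa(r)$, if one shows $\int^1\kappa(r)\,\dd r=\infty$ via $\kappa(r)\gtrsim 1/(2(1-r))$.
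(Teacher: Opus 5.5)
Your proposal is correct and follows the paper's proof. It uses the same nonnegativity of $\KL(\Phi\|g)$ against the moment-matched von Mises law $g=\vM(\cdot;\meandir,\kappa(r))$, with Jensen handling $r=0$, and the same chain-rule computation $h_c'(r)=\kappa'(r)\bigl(A(\kappa)-r\bigr)-\kappa(r)=-\kappa(r)$. Your additional checks fill in points the paper treats as immediate: that $\E_g\cos(\Theta-\meandir)=A(\kappa)=r$, that $\kappa(r)$ is differentiable via $A'=\Var_g[\cos\Theta]>0$, and the explicit estimate giving $h_c\to-\infty$. The last one genuinely needs a Bessel-type estimate, such as the one the paper only invokes later in Proposition~\ref{prop:flatlimit}, because $h_c=\log(2\pi I_0(\kappa))-\kappa A(\kappa)$ is a difference of two terms that each grow linearly in $\kappa$.
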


\begin{proof}
Let $g=\vM(\cdot;\meandir,\kappa(r))$ (for $r=0$, $g$ is uniform and
the bound is Jensen's inequality). Then
$0\le\KL(\Phi\|g)=-\ent[\Phi]-\int\Phi\log g$ and
\[
  -\int\Phi\log g
  =\log\!\bigl(2\pi I_0(\kappa(r))\bigr)
  -\kappa(r)\,\E_\Phi\bigl[\cos(\Theta-\meandir)\bigr]
  =\log\!\bigl(2\pi I_0(\kappa(r))\bigr)-\kappa(r)\,r,
\]
because $\E_\Phi\cos(\Theta-\meandir)=r$ by definition of
$(r,\meandir)$. Equality iff $\Phi=g$. For the derivative, write
$h_c(r)=\log(2\pi I_0(\kappa))-\kappa r$ with $A(\kappa)=r$; then,
using $I_0'=I_1$,
$h_c'(r)=A(\kappa)\kappa'(r)-r\kappa'(r)-\kappa(r)=-\kappa(r)<0$ for
$r>0$. The boundary values are immediate from $\kappa(0)=0$ and
$\kappa(r)\to\infty$.
\end{proof}

\begin{definition}[Circular entropy width]\label{def:width}
For $r\in[0,1)$ define $\Lambda(r):=e^{h_c(r)}\in(0,2\pi]$. By
Lemma~\ref{lem:vmmax}, $\Lambda$ is strictly decreasing,
$\Lambda(0)=2\pi$, and $\Lambda(r)\downarrow0$ as $r\uparrow1$.
That is, 
$\Lambda(r)$ is the effective support length of the most disordered
circular law compatible with concentration $r$.
\end{definition}

\begin{proposition}[Flat limit of the width]\label{prop:flatlimit}
As $r\uparrow1$ (equivalently $\kappa=\kappa(r)\to\infty$),
\[
  \Lambda(r)=\sqrt{\frac{2\pi e}{\kappa(r)}}\,\bigl(1+O(\kappa^{-1})\bigr),
\]
and for the extremal law $\vM(\cdot;\meandir,\kappa)$ one has
$\Var(\Theta-\meandir)=\kappa^{-1}\bigl(1+O(\kappa^{-1})\bigr)$, so
that $\Lambda(r)\sim\sqrt{2\pi e}\;\sigma_\Theta$ and the cylinder
theorem below degenerates to the flat bound \eqref{eq:flatUP}.
\end{proposition}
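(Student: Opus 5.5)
The plan is to work entirely from the closed form $h_c(r)=\log(2\pi I_0(\kappa))-\kappa r$ with $r=A(\kappa)$, and to insert the standard large-argument Bessel asymptotics
\[
  I_\nu(\kappa)=\frac{e^{\kappa}}{\sqrt{2\pi\kappa}}\Bigl(1-\frac{4\nu^2-1}{8\kappa}+O(\kappa^{-2})\Bigr),
  \qquad \nu=0,1,
\]
as cited in \cite{MardiaJupp}. From these I get $I_0(\kappa)=\frac{e^\kappa}{\sqrt{2\pi\kappa}}(1+\frac1{8\kappa}+O(\kappa^{-2}))$ and $A(\kappa)=I_1/I_0=1-\frac{1}{2\kappa}+O(\kappa^{-2})$. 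Since $A$ is a strictly increasing bijection onto $[0,1)$ (Lemma~\ref{lem:vmmax}), $r\uparrow1$ is equivalent to $\kappa(r)\to\infty$, so it suffices to expand everything in $\kappa$.

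\textbf{Width.} I then substitute into $h_c$:
\[
  \log(2\pi I_0(\kappa))=\kappa+\tfrac12\log\frac{2\pi}{\kappa}+O(\kappa^{-1}),
  \qquad
  \kappa r=\kappa A(\kappa)=\kappa-\tfrac12+O(\kappa^{-1}).
\]
The leading $\kappa$ terms cancel, which is the one delicate point. It requires the $1/\kappa$ correction in $A$ to be accurate to $O(\kappa^{-2})$, so that after multiplying by $\kappa$ the error stays $O(\kappa^{-1})$. This gives $h_c(r)=\tfrac12\log(2\pi e/\kappa)+O(\kappa^{-1})$. Exponentiating and using $e^{O(\kappa^{-1})}=1+O(\kappa^{-1})$ yields $\Lambda(r)=\sqrt{2\pi e/\kappa}\,(1+O(\kappa^{-1}))$.

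\textbf{Variance.} For the extremal law, set $X=\Theta-\meandir\in[-\pi,\pi)$. The tails $|X|>\kappa^{-1/3}$ carry mass $O(e^{-c\kappa^{1/3}})$ relative to the normalization, since $\cos x\le 1-c'x^2$ on $[-\pi,\pi]$. On the core I use Laplace's method with $\cos x=1-x^2/2+x^4/24+O(x^6)$. This gives $\E X^2=\kappa^{-1}+\tfrac{1}{2}\kappa^{-2}+O(\kappa^{-3})$, or at least $\kappa^{-1}(1+O(\kappa^{-1}))$, which is all that is claimed. A slicker route is the exact identity $\E[\sin^2X]=A(\kappa)/\kappa$, obtained by integrating by parts with $\partial_x e^{\kappa\cos x}=-\kappa\sin x\,e^{\kappa\cos x}$. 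Combined with $X^2-\sin^2X=O(X^4)$ and $\E X^4=O(\kappa^{-2})$ (again by Laplace), this gives the same conclusion with less bookkeeping; I would use this route.

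\textbf{Comparison with the flat bound.} Combining the two expansions, $\sigma_\Theta=\kappa^{-1/2}(1+O(\kappa^{-1}))$, and hence $\Lambda(r)=\sqrt{2\pi e}\,\sigma_\Theta(1+O(\kappa^{-1}))$. The right-hand side is exactly the Gaussian entropy width $e^{\frac12\log(2\pi e\sigma^2)}$ from Lemma~\ref{lem:gaussmax}. Consequently, replacing $\Lambda(r_\Theta)$ by $\sqrt{2\pi e}\,\sigma_\Theta$ in any cylinder-type bound of the form $\Lambda\cdot\sqrt{2\pi e}\,\sigma_p\ge e^{\ent}$ reproduces \eqref{eq:flatUP} to leading order. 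Since the cylinder theorem is stated only later, I would treat this last clause as a consistency remark rather than prove it now.

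The main obstacle is the cancellation in the width step. Everything else is routine Laplace-method control.
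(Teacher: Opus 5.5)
Your proposal is correct and follows the paper's proof: the width expansion is the same substitution of the large-$\kappa$ Bessel asymptotics (with $A(\kappa)=1-\tfrac{1}{2\kappa}+O(\kappa^{-2})$) into $h_c$, where the leading $\kappa$ terms cancel, and the flat-limit clause is the same matching $\Lambda(r)\sigma_p\sim\sqrt{2\pi e}\,\sigma_\Theta\sigma_p$. The only variation is the variance step. The paper simply appeals to the Laplace approximation, whereas you use the exact identity $\E[\sin^2(\Theta-\meandir)]=A(\kappa)/\kappa$ (which the paper itself uses in Theorem~\ref{thm:sharpcircular}) together with an $O(\kappa^{-2})$ fourth-moment bound, which is a slightly more explicit justification of the same estimate.
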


\begin{proof}
Inserting the standard asymptotics
$I_0(\kappa)=e^{\kappa}(2\pi\kappa)^{-1/2}(1+O(\kappa^{-1}))$ and
$A(\kappa)=1-\tfrac1{2\kappa}+O(\kappa^{-2})$
\cite[9.7.1]{AbramowitzStegun} into \eqref{eq:hc}
yields
\[
  h_c=\log2\pi+\kappa-\tfrac12\log(2\pi\kappa)-\kappa A(\kappa)
  +O(\kappa^{-1})
  =\tfrac12\log\!\frac{2\pi e}{\kappa}+O(\kappa^{-1}).
\]
Exponentiating gives the first claim. Observe that the variance asymptotics follow
from the Laplace approximation of the von Mises law around
$\meandir$ (Gaussian with variance $1/\kappa$), justified by the same
Bessel asymptotics. Matching with Theorem~\ref{thm:flat} is then the
computation
$\Lambda(r)\sigma_p\sim\sqrt{2\pi e}\,\sigma_\Theta\sigma_p$.
\end{proof}

\begin{theorem}[Kime-cylinder entropic uncertainty
principle]\label{thm:cylinder}
Let $\rho$ be a probability density on
$\bigl(\Sone\times\R,\;\dd\kimephase\,\dd p\bigr)$ with
$\ent[\rho]>-\infty$, angular marginal $\Phi$ with mean resultant
length $r\in[0,1)$, and momentum marginal with finite variance
$\sigma_p^2>0$. Then,
\begin{equation}\label{eq:cylUP}
  \Lambda(r)\cdot\sigma_p\;\ge\;\frac{e^{\ent[\rho]}}{\sqrt{2\pi e}},
\end{equation}
with equality iff
$\rho=\vM(\cdot;\meandir,\kappa(r))\otimes
\mathcal N(m,\sigma_p^2)$ for some $m\in\R$.
Moreover, if $\rho$ evolves by any Hamiltonian flow on the cylinder
$T^*\Sone$ (symplectic form $\dd\kimephase\wedge\dd p$), the
right-hand side of \eqref{eq:cylUP} is a constant of motion.
\end{theorem}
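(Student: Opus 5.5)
The proof mirrors Theorem~\ref{thm:flat}, with the Gaussian maximum-entropy bound on the angular factor replaced by the von Mises bound of Lemma~\ref{lem:vmmax}. First, apply Lemma~\ref{lem:subadd} on the product space $(\Sone\times\R,\dd\kimephase\otimes\dd p)$ to obtain
\[
  \ent[\rho]\le\ent[\Phi]+\ent[\rho_p],
\]
where $\rho_p$ is the momentum marginal. Then bound the two marginal entropies separately:
\begin{itemize}[leftmargin=2em]
\item Lemma~\ref{lem:vmmax} gives $\ent[\Phi]\le h_c(r)=\log\Lambda(r)$, using Definition~\ref{def:width}.
\item Lemma~\ref{lem:gaussmax} gives $\ent[\rho_p]\le\tfrac12\log(2\pi e\sigma_p^2)$.
\end{itemize}
Adding these and exponentiating yields $e^{\ent[\rho]}\le\Lambda(r)\sqrt{2\pi e}\,\sigma_p$. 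This is exactly \eqref{eq:cylUP} after dividing by $\sqrt{2\pi e}$.

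One technical point must be checked before invoking the lemmas: their hypotheses require the marginal entropies to exceed $-\infty$. This follows from $\ent[\rho]>-\infty$ together with the upper bounds just stated. Indeed, each marginal entropy is bounded above (by $\log 2\pi$ on $\Sone$, and by the Gaussian bound on $\R$, since $\sigma_p^2<\infty$). Subadditivity, which holds in the form $\KL\ge0$ whenever the terms make sense, then forces each marginal entropy to be finite. I expect this bookkeeping, namely ensuring every entropy in the chain is well defined and finite so that the KL manipulations are legitimate, to be the only real care point. I would handle it by writing $\KL(\rho\|\rho_X\otimes\rho_Y)\ge 0$ directly and noting that both marginal entropies lie in $(-\infty,\infty)$.

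For the equality case, equality in \eqref{eq:cylUP} forces equality in all three inequalities. Equality in Lemma~\ref{lem:subadd} gives $\rho=\Phi\otimes\rho_p$. Equality in Lemma~\ref{lem:vmmax} gives $\Phi=\vM(\cdot;\meandir,\kappa(r))$; when $r=0$ this is the uniform law, consistent with $\kappa(0)=0$. Equality in Lemma~\ref{lem:gaussmax} gives $\rho_p=\mathcal N(m,\sigma_p^2)$, where $m$ is the mean of $\rho_p$. Conversely, the product $\vM\otimes\mathcal N$ attains equality in each step, since its marginals have mean resultant length $r$ and variance $\sigma_p^2$ by construction.

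For the dynamical statement, I would argue that a Hamiltonian flow $\phi_s$ on $T^*\Sone$ preserves the symplectic form $\dd\kimephase\wedge\dd p$. It therefore preserves the Liouville measure $\dd\kimephase\,\dd p$, since in one degree of freedom the Liouville measure is just the area form. The change-of-variables identity $\ent[\rho\circ\phi_s^{-1}]=\ent[\rho]$ used in Theorem~\ref{thm:flat} then applies verbatim, because $\phi_s$ is a measure-preserving diffeomorphism. Hence $e^{\ent[\rho]}/\sqrt{2\pi e}$ is constant in $s$, while $r$ and $\sigma_p$ on the left-hand side may individually vary.
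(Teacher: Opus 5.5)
Your proposal is correct and follows the paper's proof essentially line for line: subadditivity (Lemma~\ref{lem:subadd}), then the von Mises and Gaussian maximum-entropy bounds, exponentiation, equality forcing equality in all three steps, and preservation of the Liouville measure for the dynamical claim. Your extra check that both marginal entropies are finite, so that the KL manipulations are legitimate, is sound; the paper leaves this detail implicit.
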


\begin{proof}
By Lemma~\ref{lem:subadd},
$\ent[\rho]\le\ent[\Phi]+\ent[\rho_p]$. Bounding the two summands by
Lemmas~\ref{lem:vmmax} and~\ref{lem:gaussmax},
$\ent[\rho]\le h_c(r)+\tfrac12\log(2\pi e\sigma_p^2)
=\log\bigl(\Lambda(r)\,\sigma_p\sqrt{2\pi e}\bigr)$. Again, we exponentiate the terms.
Equality forces equality in all three inequalities, i.e., independence
with extremal marginals. Any Hamiltonian flow preserves the Liouville
measure $\dd\kimephase\,\dd p$, hence preserves $\ent[\rho]$ by the
change-of-variables identity.
\end{proof}

\emph{Compactness is a feature, not a artifact.}\label{rem:compact}
Since $\Lambda(r)\le 2\pi$, inequality \eqref{eq:cylUP} contains an
absolute momentum floor
$\sigma_p\ge e^{\ent[\rho]}/(2\pi\sqrt{2\pi e})$. On a compact angle,
total delocalization of the phase cannot absorb an unbounded share of
the invariant entropy. This is the precise structural difference from
the flat case that Problem (I)(a) anticipated for non-canonical
variables. Compact topology converts the uncertainty \emph{trade-off}
into a trade-off with a hard floor. Under
Assumption~\ref{ass:interp}, $r$ is the concentration of the empirical
phase law of the repeated experiment and is directly estimable
\cite{KPTms}.
Therefore, the kime entropic uncertainty principle  \eqref{eq:cylUP} is a \emph{testable inequality}.

\begin{theorem}[Sharp circular Fisher (Cram\'er--Rao--type)
uncertainty relation]\label{thm:sharpcircular}
Let $\Phi\in C^1(\Sone)$ be strictly positive with mean resultant
length $r>0$ and mean direction $\meandir$. Then
\begin{equation}\label{eq:sharpcircular}
  \Fish[\Phi]\cdot
  \E_\Phi\!\bigl[\sin^2(\Theta-\meandir)\bigr]\;\ge\;r^2 ,
\end{equation}
with equality iff $\Phi=\vM(\cdot;\meandir,\kappa)$ for some
$\kappa>0$. Equivalently, in terms of trigonometric moments,
$\Fish[\Phi]\ge 2r^2/\bigl(1-\operatorname{Re}\,
e^{-2i\meandir}\alpha_2\bigr)$.
\end{theorem}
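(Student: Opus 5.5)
The plan is a Cauchy--Schwarz argument built on one integration-by-parts identity that rewrites $r$ as an expectation of the score. Rotating the angle, I may assume $\meandir=0$, so that $\alpha_1=r$ is real; then $\E_\Phi[\cos\Theta]=r$ and $\E_\Phi[\sin\Theta]=0$.

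The key identity pairs the score $\partial_\kimephase\log\Phi=\Phi'/\Phi$ with the test function $\sin\kimephase$. Integrating by parts on $\Sone$ (the boundary terms vanish by periodicity, since $\Phi\in C^1$) gives
\[
  \int_{-\pi}^{\pi}\Phi'(\kimephase)\sin\kimephase\,\dd\kimephase
  =-\int_{-\pi}^{\pi}\Phi(\kimephase)\cos\kimephase\,\dd\kimephase=-r .
\]
Hence $\E_\Phi\bigl[(\log\Phi)'\,\sin\Theta\bigr]=-r$. Applying Cauchy--Schwarz in $L^2(\Phi\,\dd\kimephase)$ yields
\[
  r^2\le \E_\Phi\bigl[((\log\Phi)')^2\bigr]\,\E_\Phi[\sin^2\Theta]
  =\Fish[\Phi]\,\E_\Phi[\sin^2\Theta],
\]
which is \eqref{eq:sharpcircular}. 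Strict positivity of $\Phi$ makes the score well defined and continuous, so every integral is finite.

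For the equality case, Cauchy--Schwarz is tight iff $(\log\Phi)'=c\sin\kimephase$ $\Phi$-a.e. for some constant $c$. Since $\Phi>0$, this holds everywhere, and continuity upgrades a.e. equality to pointwise equality. Integrating gives $\log\Phi=-c\cos\kimephase+\text{const}$, so $\Phi$ is a von Mises density with concentration $-c$ and location $0$.

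The point needing care is the sign of $c$. Substituting back into $\E[(\log\Phi)'\sin\Theta]=-r$ gives $c\,\E[\sin^2\Theta]=-r<0$, so $c<0$ and $\kappa=-c>0$. This also pins the mode at $\meandir$ rather than at $\meandir+\pi$. The case $c=0$ (uniform law) is excluded because $r>0$. Conversely, for $\Phi=\vM(\cdot;\meandir,\kappa)$ the score is exactly $-\kappa\sin(\kimephase-\meandir)$, so Cauchy--Schwarz is attained. I expect this sign and mode bookkeeping to be the only real obstacle.

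For the moment form, use $\sin^2x=\tfrac12(1-\cos 2x)$. After undoing the rotation, $\E_\Phi[\cos 2(\Theta-\meandir)]=\operatorname{Re}\bigl(e^{-2i\meandir}\alpha_2\bigr)$. This gives $\E_\Phi[\sin^2(\Theta-\meandir)]=\tfrac12\bigl(1-\operatorname{Re}\,e^{-2i\meandir}\alpha_2\bigr)$.

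This expectation is strictly positive. If it vanished, $\Phi$ would be concentrated on $\{\meandir,\meandir+\pi\}$, which is impossible for a positive density. Dividing then gives $\Fish[\Phi]\ge 2r^2/\bigl(1-\operatorname{Re}\,e^{-2i\meandir}\alpha_2\bigr)$.
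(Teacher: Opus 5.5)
Your proof is correct and follows the paper's argument essentially step for step. Both use the integration-by-parts identity $\int\Phi'\sin(\kimephase-\meandir)\,\dd\kimephase=-r$ and Cauchy--Schwarz in $L^2(\Phi\,\dd\kimephase)$, get equality from a score proportional to $\sin(\kimephase-\meandir)$ with the sign of $c$ fixed by $r>0$, and obtain the moment form from $\sin^2x=\tfrac12(1-\cos2x)$. Your converse (the von Mises score is exactly proportional to the test function) is a slightly cleaner substitute for the paper's explicit Bessel-ratio check $\Fish\cdot\E\sin^2=A(\kappa)^2$, and your observation that $\E_\Phi[\sin^2(\Theta-\meandir)]>0$ justifies a division the paper leaves implicit.
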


\begin{proof}
Integrating by parts on $\Sone$ (periodic boundary terms vanish),
\[
  \int_{-\pi}^{\pi}\Phi'(\kimephase)\,\sin(\kimephase-\meandir)\,
  \dd\kimephase
  =-\int_{-\pi}^{\pi}\Phi(\kimephase)\cos(\kimephase-\meandir)\,
  \dd\kimephase=-r .
\]
By Cauchy--Schwarz,
\[
  r^2=\Bigl(\int\frac{\Phi'}{\sqrt\Phi}\cdot
  \sqrt\Phi\,\sin(\kimephase-\meandir)\,\dd\kimephase\Bigr)^{\!2}
  \le\int\frac{(\Phi')^2}{\Phi}\,\dd\kimephase\;
  \int\Phi\sin^2(\kimephase-\meandir)\,\dd\kimephase ,
\]
which is \eqref{eq:sharpcircular}. Equality holds iff
$\Phi'/\Phi=c\,\sin(\kimephase-\meandir)$ a.e.\ for some constant $c$,
i.e., $\log\Phi=-c\cos(\kimephase-\meandir)+\mathrm{const}$, a von
Mises law. Consistency with $\E\cos(\Theta-\meandir)=r>0$ forces
$c<0$, i.e., concentration at $\meandir$ with $\kappa=-c>0$. Note that for
$\Phi=\vM(\cdot;\meandir,\kappa)$ we can directly check $\Fish=\kappa^2\,\E\sin^2(\Theta-\meandir)$ and
$\E\sin^2(\Theta-\meandir)=A(\kappa)/\kappa$. Hence,
$\Fish\cdot\E\sin^2=A(\kappa)^2=r^2$, see the von Mises Fisher
computation in \cite{WDWms}. The corresponding trigonometric moment form follows from
$\sin^2 x=\tfrac12(1-\cos2x)$.
\end{proof}

\begin{corollary}[Kime-native kinetic uncertainty
bound]\label{cor:kineticbound}
Under the ground-state matching postulate of \cite{WDWms}
($\varphi_0=\sqrt{\Phi}$), the kinetic energy of the reconstructed
phase state obeys
\[
  \Bigl\langle\frac{\hat p_\kimephase^{\,2}}{2\muI}
  \Bigr\rangle_{\!\sqrt\Phi}
  \;=\;\frac{\hbar^2}{8\muI}\,\Fish[\Phi]
  \;\ge\;\frac{\hbar^2}{8\muI}\,
  \frac{r^2}{\E_\Phi[\sin^2(\Theta-\meandir)]}
  \;\ge\;\frac{\hbar^2 r^2}{8\muI},
\]
with the first inequality saturated exactly by von Mises phase laws.
Concentration of the empirical phase law of a repeated experiment
therefore imposes a quantitative lower bound on the kinetic part of
the reconstructed phase Hamiltonian of \cite{WDWms}.
\end{corollary}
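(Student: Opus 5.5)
The corollary is obtained by chaining three earlier results. The equality is Lemma~\ref{lem:kinetic} with the matched state $\varphi_0=\sqrt\Phi$. The first inequality is Theorem~\ref{thm:sharpcircular}. The second is the elementary bound $\E_\Phi[\sin^2(\Theta-\meandir)]\le 1$.

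\emph{The equality.} Under the ground-state matching postulate the reconstructed phase state is the real function $\sqrt\Phi$. This lies in $H^1(\Sone)$ because $\Phi\in C^1(\Sone)$ is strictly positive; that is the standing hypothesis of Theorem~\ref{thm:sharpcircular}, and I assume it here. Since $\int|\sqrt\Phi|^2\,\dd\kimephase=1$, the state is normalized. Equation~\eqref{eq:kinetic} then gives $\langle\hat p_\kimephase^{\,2}/(2\muI)\rangle_{\sqrt\Phi}=\hbar^2\Fish[\Phi]/(8\muI)$ directly: one integration by parts turns $\langle\sqrt\Phi,-\hbar^2\partial_\kimephase^2\sqrt\Phi\rangle$ into $\hbar^2\|\partial_\kimephase\sqrt\Phi\|^2$, and periodicity kills the boundary terms.

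\emph{The first inequality.} Assume $r>0$, so that $\meandir$ is defined. Then $\E_\Phi[\sin^2(\Theta-\meandir)]>0$: it could vanish only if $\Phi$ were concentrated on $\{\meandir,\meandir+\pi\}$, which is impossible for a strictly positive density. Dividing \eqref{eq:sharpcircular} by this quantity gives $\Fish[\Phi]\ge r^2/\E_\Phi[\sin^2(\Theta-\meandir)]$. Multiplying by $\hbar^2/(8\muI)>0$ preserves the inequality and its equality case. Theorem~\ref{thm:sharpcircular} therefore gives saturation exactly for $\Phi=\vM(\cdot;\meandir,\kappa)$, $\kappa>0$.

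\emph{The second inequality.} Since $0<\E_\Phi[\sin^2]\le 1$, we have $r^2/\E_\Phi[\sin^2]\ge r^2$. The degenerate case $r=0$ needs separate treatment: there the middle expression is undefined, but the outer bound $\hbar^2r^2/(8\muI)=0$ holds trivially because $\Fish\ge0$. I would state this explicitly.

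The only step needing care is the bookkeeping of hypotheses. The postulate must be read as producing a state for which the Fisher--kinetic identity applies, so the $C^1$ positivity of Theorem~\ref{thm:sharpcircular} has to be imported, and the case $r=0$ must be split off. Neither step involves genuine analysis. One could also note the strict inequality $\E\sin^2<1$ for positive densities, which makes the last bound non-saturated; this is not needed for the corollary.
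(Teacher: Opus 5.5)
Your proposal is correct and follows the paper's proof exactly: the equality comes from Lemma~\ref{lem:kinetic}, the first inequality from Theorem~\ref{thm:sharpcircular}, and the last from $\E_\Phi[\sin^2(\Theta-\meandir)]\le 1$. Your added bookkeeping is also sound and makes explicit hypotheses the paper leaves implicit: strict positivity of $\E_\Phi[\sin^2(\Theta-\meandir)]$ justifies the division, and the case $r=0$, where $\meandir$ is undefined, must be split off.
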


\begin{proof}
Combine Lemma~\ref{lem:kinetic} with
Theorem~\ref{thm:sharpcircular} and $\E\sin^2\le1$.
\end{proof}

\subsection{Non-canonical variables: the geometric-mean bracket
theorem}\label{ssec:noncanonical}

Problem~(I)(a) conjectures a minimum-uncertainty relation for a
non-canonical pair $(u,v)$ ``based on the \emph{Poisson bracket} (more precisely,
the \emph{expectation of the Poisson bracket}).'' Theorem \ref{thm:noncanonical}
resolves the diffeomorphic case exactly and shows that the correct
universal correction is the \emph{geometric mean}
$\exp\E_\rho\log|\{u,v\}|$ of the bracket, not its arithmetic
expectation. The two coincide precisely in the linear (in particular
\emph{canonical}) case.

\begin{lemma}[Entropy under phase-space
reparametrization]\label{lem:jacobian}
Let $U\subseteq\R^2$ be open, $\rho$ a probability density supported
in $U$ (w.r.t.\ $\dd q\,\dd p$) with $\ent[\rho]$ finite, and
$(u,v):U\to\R^2$ a $C^1$ diffeomorphism onto its image with Jacobian
determinant
$\det\frac{\partial(u,v)}{\partial(q,p)}=\{u,v\}\ne0$ on $U$. Let
$\rho^{(u,v)}$ denote the pushforward density w.r.t.\ $\dd u\,\dd v$.
If $\E_\rho\bigl|\log|\{u,v\}|\bigr|<\infty$, then
\begin{equation}\label{eq:entropyshift}
  \ent\bigl[\rho^{(u,v)}\bigr]
  =\ent[\rho]+\E_\rho\log\bigl|\{u,v\}\bigr| .
\end{equation}
\end{lemma}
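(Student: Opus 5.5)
The plan is to prove \eqref{eq:entropyshift} with the change-of-variables formula for the diffeomorphism $F=(u,v):U\to F(U)$, and then do a short integrability bookkeeping. First I would identify the pushforward density explicitly. For a Borel set $B\subseteq\R^2$, the change-of-variables theorem for the $C^1$ diffeomorphism $F$ gives
\[
  \int_{F^{-1}(B)}\rho\,\dd q\,\dd p
  =\int_{B\cap F(U)}\frac{\rho\circ F^{-1}}{|\det DF|\circ F^{-1}}\,\dd u\,\dd v .
\]
Hence
\[
  \rho^{(u,v)}=\frac{\rho}{|\{u,v\}|}\circ F^{-1}\quad\text{on }F(U),
\]
and $\rho^{(u,v)}=0$ elsewhere. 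Here I use that $\det\frac{\partial(u,v)}{\partial(q,p)}=\partial_q u\,\partial_p v-\partial_p u\,\partial_q v=\{u,v\}$, as in the statement.

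Next I would substitute this into the entropy. I would compute $-\int\rho^{(u,v)}\log\rho^{(u,v)}\,\dd u\,\dd v$ by changing variables back to $(q,p)$. The factor $|\{u,v\}|^{-1}$ is absorbed by $\dd u\,\dd v=|\{u,v\}|\,\dd q\,\dd p$. This yields
\[
  \ent[\rho^{(u,v)}]=-\int_U\rho\,\bigl(\log\rho-\log|\{u,v\}|\bigr)\,\dd q\,\dd p .
\]
Splitting the integrand gives $\ent[\rho]+\E_\rho\log|\{u,v\}|$. The convention $0\log0=0$ handles the set where $\rho=0$, since both sides vanish there.

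The only genuine point is justifying this split. The change-of-variables formula applies to the nonnegative parts and the nonpositive parts of the integrand separately, so the identity holds in the extended sense before any splitting. The split is legitimate because $\rho\log\rho\in L^1$ (finite $\ent[\rho]$, read as $\int\rho|\log\rho|<\infty$) and $\rho\log|\{u,v\}|\in L^1$ (the hypothesis $\E_\rho|\log|\{u,v\}||<\infty$). Their sum is therefore integrable, and the linearity of the integral gives \eqref{eq:entropyshift}, with $\ent[\rho^{(u,v)}]$ finite as a byproduct.

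I expect this integrability step to be the main, though mild, obstacle. If $\ent[\rho]$ finite is read only as $\int\rho\log\rho$ being well defined and finite, I would argue with positive and negative parts separately. Doing so avoids any $\infty-\infty$ ambiguity.
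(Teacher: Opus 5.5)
Your proposal is correct and follows the paper's proof essentially verbatim: identify $\rho^{(u,v)}=(\rho/|\{u,v\}|)\circ(u,v)^{-1}$ on the image, change variables back to $(q,p)$ so the Jacobian cancels, and split the integral using $\rho\log\rho\in L^1$ together with $\E_\rho|\log|\{u,v\}||<\infty$. Your positive/negative-part bookkeeping only spells out the paper's one-line justification, and your closing caveat is moot, since a finite Lebesgue integral $\int\rho\log\rho$ already means $\rho\log\rho\in L^1$.
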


\begin{proof}
Write $\Psi=(u,v)$ and $J_\Psi=\{u,v\}$. The pushforward density is
$\rho^{(u,v)}=\bigl(\rho/|J_\Psi|\bigr)\circ\Psi^{-1}$. By the change
of variables formula,
\[
  \ent[\rho^{(u,v)}]
  =-\int_{\Psi(U)}\frac{\rho}{|J_\Psi|}
  \log\frac{\rho}{|J_\Psi|}\Bigm|_{\Psi^{-1}}\dd u\,\dd v
  =-\int_U\rho\,\bigl(\log\rho-\log|J_\Psi|\bigr)\,\dd q\,\dd p ,
\]
which is \eqref{eq:entropyshift}; the integrability hypothesis
justifies splitting the integral.
\end{proof}

\begin{theorem}[Non-canonical uncertainty relation with geometric-mean
bracket]\label{thm:noncanonical}
In the setting of Lemma~\ref{lem:jacobian}, suppose additionally that
the pushforward marginals of $u$ and $v$ have finite variances
$\sigma_u^2,\sigma_v^2$. Then
\begin{equation}\label{eq:noncanUP}
  \sigma_u\,\sigma_v
  \;\ge\;
  \frac{1}{2\pi e}\;e^{\ent[\rho]}\;
  \exp\Bigl(\E_\rho\log\bigl|\{u,v\}\bigr|\Bigr),
\end{equation}
with equality iff the pushforward of $\rho$ under $(u,v)$ is a
Gaussian product density. In particular:
\begin{enumerate}[label=(\alph*)]
\item if $(u,v)$ is canonical ($\{u,v\}\equiv1$), \eqref{eq:noncanUP}
reduces to \eqref{eq:flatUP};
\item if $(u,v)$ is linear, $\{u,v\}$ is constant and the correction
equals $|\{u,v\}|=\E_\rho|\{u,v\}|$;
\item in general, by Jensen's inequality
$\exp\E_\rho\log|\{u,v\}|\le\E_\rho|\{u,v\}|$, so the conjectured
bound with the arithmetic expectation of the bracket is a
\emph{strictly stronger} statement than \eqref{eq:noncanUP} and does
not follow from entropy methods alone.
\end{enumerate}
\end{theorem}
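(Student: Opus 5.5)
The plan is to apply the flat entropic uncertainty principle, Theorem~\ref{thm:flat}, to the pushforward density $\rho^{(u,v)}$ on $(\R^2,\dd u\,\dd v)$ instead of to $\rho$ itself. Lemma~\ref{lem:jacobian} will then convert the resulting entropy back to $\ent[\rho]$ plus the Jacobian correction. The argument has four steps.

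\textbf{Step 1: check the hypotheses of Theorem~\ref{thm:flat} for $\rho^{(u,v)}$.} It is a probability density with respect to $\dd u\,\dd v$, supported on $\Psi(U)$ and extended by zero. Its marginal variances $\sigma_u^2,\sigma_v^2$ are finite by hypothesis. By \eqref{eq:entropyshift}, $\ent[\rho^{(u,v)}]=\ent[\rho]+\E_\rho\log|\{u,v\}|$, which is finite because $\ent[\rho]$ is finite and the bracket term is integrable by assumption.

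\textbf{Step 2: derive the inequality and the equality case.} Theorem~\ref{thm:flat} gives
\[
\sigma_u\sigma_v\ge\frac{1}{2\pi e}\,e^{\ent[\rho^{(u,v)}]}.
\]
Substituting the shift identity and splitting the exponential yields \eqref{eq:noncanUP}. The equality clause is inherited verbatim: equality holds iff $\rho^{(u,v)}$ is a Gaussian product density. One point deserves a remark here. A Gaussian is supported on all of $\R^2$, so equality can only occur when $\Psi(U)$ has full measure. This does not affect the ``iff'' as stated.

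\textbf{Step 3: the special cases.} For (a), $\{u,v\}\equiv1$ makes the correction $e^0=1$, and the inequality reduces to \eqref{eq:flatUP}. For (b), a linear map has constant Jacobian determinant $c$, so $\exp\E_\rho\log|c|=|c|=\E_\rho|c|$. For (c), Jensen's inequality for the concave logarithm gives $\E\log|\{u,v\}|\le\log\E|\{u,v\}|$; exponentiating gives the comparison.

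\textbf{Step 4: justify ``strictly stronger'' in (c).} This is the only step needing more than bookkeeping. The arithmetic-mean bound has a right-hand side at least as large, so it implies \eqref{eq:noncanUP}. The inequality is strict whenever $|\{u,v\}|$ is not $\rho$-a.s.\ constant, by the strict form of Jensen. To show it does not follow from entropy methods, I would note that the entropy chain (subadditivity plus Gaussian maximum entropy applied to $\rho^{(u,v)}$) is tight. Its equality case is exactly $\rho^{(u,v)}$ Gaussian product, so at such states the only possible entropic lower bound is \eqref{eq:noncanUP} itself.

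To make this concrete, I would try a nonlinear diffeomorphism $(u,v)$ and choose $\rho$ so that its pushforward is a standard Gaussian product, for instance $\rho=(\text{Gaussian}\circ\Psi)\cdot|\{u,v\}|$. At this state equality holds in \eqref{eq:noncanUP}, while the right-hand side with the arithmetic expectation is strictly larger whenever the bracket varies, so the arithmetic bound fails there. The obstacle is arranging this state so that it satisfies the integrability conditions while the bracket genuinely varies. A map such as $u=q+\epsilon\sin q$, $v=p$ with small $\epsilon$ should work, since its bracket is bounded away from $0$ and $\infty$.
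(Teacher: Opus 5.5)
Your Steps 1--3 are the paper's proof. You apply Theorem~\ref{thm:flat} to the pushforward $\rho^{(u,v)}$, substitute \eqref{eq:entropyshift}, inherit the equality case, and read off (a)--(c), with (c) being Jensen for the concave logarithm. The paper says nothing beyond this.

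Your Step 4 goes further than the paper, and usefully so. The paper's only support for ``does not follow from entropy methods alone'' is the Jensen comparison. Your state $\rho=(g\circ\Psi)\,|\{u,v\}|$, with $g$ a Gaussian product, shows something stronger: the expected-bracket bound is actually \emph{false}, not merely unprovable by these methods. Your example checks out:
\begin{itemize}
\item For $u=q+\epsilon\sin q$, $v=p$ with $|\epsilon|<1$, the map is a global diffeomorphism of $\R^2$ onto $\R^2$.
\item Its bracket is $\{u,v\}=1+\epsilon\cos q\in[1-|\epsilon|,1+|\epsilon|]$, so every integrability hypothesis of Lemma~\ref{lem:jacobian} holds automatically.
\item With $g$ the standard Gaussian product, $\sigma_u\sigma_v=1$, which equals the right side of \eqref{eq:noncanUP}.
\item Since $\rho>0$ everywhere, the bracket is not $\rho$-a.s.\ constant, so strict Jensen makes the arithmetic-mean right side strictly larger than $1$.
\end{itemize}

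The same construction works for any diffeomorphism onto $\R^2$ whose bracket is not a.s.\ constant, provided $\log|\{u,v\}|$ is integrable against the pulled-back Gaussian. It therefore settles negatively the ``explicit counterexample'' half of Problem~\ref{prob:expected-bracket} in the diffeomorphic class. In that class, item (c) is best read as ``implies \eqref{eq:noncanUP} but fails in general.'' What stays open is characterizing the particular states at which the strengthened bound happens to hold, and the non-injective (winding) case.
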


\begin{proof}
Apply Theorem~\ref{thm:flat} to the pushforward density (a legitimate
density on $\R^2$) and substitute \eqref{eq:entropyshift}. Items
(a)--(b) are immediate; (c) is Jensen applied to the concave
logarithm.
\end{proof}

\begin{problem}[Status of the expected-bracket
form]\label{prob:expected-bracket}
Determine the largest class of pairs $(u,v)$ and states $\rho$ for
which the strengthened inequality
$\sigma_u\sigma_v\ge(2\pi e)^{-1}e^{\ent[\rho]}\,
\E_\rho|\{u,v\}|$ holds, and exhibit either a proof for a natural
class beyond the linear one or an explicit counterexample. In the
kime representation the natural test family is
$(u,v)=(\text{a circular function of }\kimephase,\;J)$, for which the
compactness corrections are controlled by
Theorem~\ref{thm:cylinder}.
In particular, formulate and prove the
correct statement when $(u,v)$ is \emph{not} injective (winding
angle), the conjectured mechanism is a holonomy correction
quantized in units of the circulation $\oint\dd\kimephase=2\pi$,
i.e., an additive term $\log(2\pi w)$ for winding number $w$, whose
precise form should follow by applying Lemma~\ref{lem:jacobian} on a
fundamental domain and Lemma~\ref{lem:vmmax} on the quotient.
\end{problem}

\subsection{Multiple degrees of freedom: Williamson form, Fischer
inequality, and the symplectic Schur--Horn problem}
\label{ssec:multidof}

Fix $n\ge1$ and coordinates
$z=(q^1,p_1,\dots,q^n,p_n)\in\R^{2n}$ ordered by DOF, so that a
covariance matrix $\Sigma\in\R^{2n\times2n}$, $\Sigma\succ0$,
decomposes into $2\times2$ blocks $\Sigma_{jk}$,
$j,k=1,\dots,n$, with $\Sigma_{jj}$ the \emph{within-DOF} block of
DOF $j$ and $\Sigma_{jk}$ ($j\ne k$) the cross-DOF correlation
blocks, the decomposition singled out in Problem~(I)(b),
Eq.~(1.229) of \cite{CarcassiAidalaBook}. In this ordering the
symplectic form is
$\Omega_n=\bigoplus_{j=1}^n
\begin{psmallmatrix}0&1\\-1&0\end{psmallmatrix}$ and
$\Sp(2n,\R)=\{S:\,S\Omega_nS^{\!\top}=\Omega_n\}$.

\begin{theorem}[Williamson normal form; \cite{Williamson1936}]
\label{thm:williamson}
For every $\Sigma\succ0$ there exists $S\in\Sp(2n,\R)$ and unique
$\nu_1\ge\cdots\ge\nu_n>0$ (the \emph{symplectic eigenvalues},
the positive spectrum of $i\Omega_n^{-1}\Sigma$ up to sign) with
$S\Sigma S^{\!\top}
=\operatorname{diag}(\nu_1,\nu_1,\dots,\nu_n,\nu_n)$.
\end{theorem}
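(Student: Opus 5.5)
We prove Williamson's theorem by the standard route: reduce $\Sigma$ to the identity by a congruence, then diagonalize a real antisymmetric matrix by an orthogonal change of basis.

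\emph{Step 1 (reduction to an antisymmetric normal form).} Since $\Sigma\succ0$, the symmetric square root $\Sigma^{1/2}$ exists and is invertible. Consider
\[
A=\Sigma^{-1/2}\Omega_n\Sigma^{-1/2}.
\]
This matrix is real, antisymmetric and invertible. By the real spectral theorem for normal matrices, there is $O\in\mathrm{O}(2n)$ with
\[
O^{\!\top}AO=\bigoplus_{j=1}^n\begin{psmallmatrix}0&1/\nu_j\\-1/\nu_j&0\end{psmallmatrix},\qquad \nu_j>0.
\]
The blocks can always be arranged with positive upper-right entries by swapping the two basis vectors within a block. We may also order the blocks so that $\nu_1\ge\cdots\ge\nu_n$.

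\emph{Step 2 (the symplectic matrix).} Set $D=\operatorname{diag}(\nu_1,\nu_1,\dots,\nu_n,\nu_n)$ and define
\[
M=\Sigma^{1/2}OD^{1/2}.
\]
Because $D^{1/2}$ commutes with each $2\times2$ block, $M^{\!\top}\Omega_n^{-1}$-type bookkeeping reduces to the following check. Rewrite Step 1 as $\Omega_n=\Sigma^{1/2}O\,(D^{-1/2}\Omega_nD^{-1/2})\,O^{\!\top}\Sigma^{1/2}$. It follows that $M^{-1}$ satisfies
\[
M^{-1}\Omega_n M^{-\top}=D^{-1/2}O^{\!\top}\Sigma^{-1/2}\Omega_n\Sigma^{-1/2}OD^{-1/2}=D^{-1/2}(D^{-1/2}\Omega_n D^{-1/2})^{-1}\cdots
\]
The indices are easier to handle by verifying directly that $S:=D^{1/2}O^{\!\top}\Sigma^{-1/2}$ obeys $S\Omega_nS^{\!\top}=D^{1/2}(O^{\!\top}AO)D^{1/2}=\Omega_n$. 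This uses $D^{1/2}\,\mathrm{blk}(1/\nu_j)\,D^{1/2}=\Omega_n$ blockwise, which is the key computation. The orthogonality of $O$ then gives
\[
S\Sigma S^{\!\top}=D^{1/2}O^{\!\top}OD^{1/2}=D,
\]
which is the claim.

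\emph{Step 3 (uniqueness and spectral characterization).} Suppose $S\Sigma S^{\!\top}=D$ with $S\Omega_nS^{\!\top}=\Omega_n$. Then $\Omega_n^{-1}\Sigma$ is similar to $\Omega_n^{-1}D$:
\[
\Omega_n^{-1}\Sigma=\Omega_n^{-1}S^{-1}DS^{-\top}=S^{\!\top}\Omega_n^{-1}DS^{-\top}.
\]
Each block of $\Omega_n^{-1}D$ is $\nu_j$ times a rotation generator, with eigenvalues $\pm i\nu_j$. Hence the $\nu_j$ are the moduli of the eigenvalues of $\Omega_n^{-1}\Sigma$, equivalently the positive spectrum of $i\Omega_n^{-1}\Sigma$. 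Since these eigenvalues are invariants of $\Sigma$ alone, the ordered tuple $(\nu_j)$ is unique.

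The main obstacle is the sign and ordering bookkeeping in Steps 1 and 2. Specifically, one must make sure the orthogonal normal form produces exactly the block $\begin{psmallmatrix}0&1\\-1&0\end{psmallmatrix}$ of $\Omega_n$, not its transpose. This is handled by permuting within blocks, a permutation of $\mathrm{O}(2n)$ that is absorbed into $O$. The remaining steps are routine.
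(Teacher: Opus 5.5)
Your proof is correct. It is the standard modern argument: take the congruence $A=\Sigma^{-1/2}\Omega_n\Sigma^{-1/2}$, apply the orthogonal normal form of a real invertible antisymmetric matrix, and set $S=D^{1/2}O^{\top}\Sigma^{-1/2}$.

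The paper gives no proof of this statement; it only cites Williamson (1936). What your route buys is self-containedness and an explicit symplectic $S$. Your key check $S\Omega_nS^{\top}=D^{1/2}(O^{\top}AO)D^{1/2}=\Omega_n$ holds because $D^{1/2}$ is scalar on each $2\times 2$ block. The within-block swaps and block reorderings are orthogonal, so they are legitimately absorbed into $O$.

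Your Step~3 similarity, $\Omega_n^{-1}\Sigma=S^{\top}\Omega_n^{-1}DS^{-\top}$, is the same computation the paper later uses in Lemma~\ref{lem:gaussentropyformula} for invariance of the symplectic spectrum. It therefore also justifies two things the paper takes from the citation: that the $\nu_j$ are the positive spectrum of $i\Omega_n^{-1}\Sigma$, and that $\det\Sigma=\prod_j\nu_j^2$.

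One blemish: delete the abandoned false start with $M=\Sigma^{1/2}OD^{1/2}$ in Step~2. That matrix is not symplectic, since $M^{-1}\Omega_nM^{-\top}=D^{-1/2}(O^{\top}AO)D^{-1/2}=D^{-1}\Omega_nD^{-1}$, which equals $\Omega_n$ only if every $\nu_j=1$. The displayed expression containing $(D^{-1/2}\Omega_nD^{-1/2})^{-1}$ is also wrong. The matrix you presumably wanted is $S^{-1}=\Sigma^{1/2}OD^{-1/2}$. Your final verification works with $S$ directly, so this does not affect correctness, but as written it reads like an error.
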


\begin{lemma}[Gaussian entropy and symplectic
invariants]\label{lem:gaussentropyformula}
If $\rho$ is the Gaussian density on $\R^{2n}$ with covariance
$\Sigma$, then
$\ent[\rho]=\tfrac12\log\bigl((2\pi e)^{2n}\det\Sigma\bigr)$ and
$\det\Sigma=\prod_{j=1}^n\nu_j^2$. Both $\ent[\rho]$ and the
multiset $\{\nu_j\}$ are invariant under every linear Hamiltonian
evolution $\Sigma\mapsto S\Sigma S^{\!\top}$, $S\in\Sp(2n,\R)$.
\end{lemma}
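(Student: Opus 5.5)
The plan has three parts: compute the entropy of a Gaussian directly, express $\det\Sigma$ through the Williamson normal form, and use $\det S=1$ for $S\in\Sp(2n,\R)$ to get invariance.

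\textbf{Step 1: the entropy formula.} Write
\[
\rho(z)=(2\pi)^{-n}(\det\Sigma)^{-1/2}\exp\!\bigl(-\tfrac12 (z-m)^{\top}\Sigma^{-1}(z-m)\bigr).
\]
Then
\[
-\log\rho=\tfrac12\log\bigl((2\pi)^{2n}\det\Sigma\bigr)+\tfrac12 (z-m)^{\top}\Sigma^{-1}(z-m).
\]
Take expectations and use the trace trick,
\[
\E\bigl[(z-m)^{\top}\Sigma^{-1}(z-m)\bigr]=\operatorname{tr}(\Sigma^{-1}\Sigma)=2n .
\]
This gives $\ent[\rho]=\tfrac12\log\bigl((2\pi e)^{2n}\det\Sigma\bigr)$.

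\textbf{Step 2: $\det\Sigma$ in terms of symplectic eigenvalues.} Apply Theorem~\ref{thm:williamson} to get $S\Sigma S^{\top}=\operatorname{diag}(\nu_1,\nu_1,\dots,\nu_n,\nu_n)$. Taking determinants gives
\[
(\det S)^2\det\Sigma=\prod_{j=1}^n\nu_j^2 .
\]
So it suffices to know that $\det S=\pm1$; the square removes the sign. This follows from $S\Omega_nS^{\top}=\Omega_n$ together with $\det\Omega_n=1\neq0$. The stronger fact $\det S=1$ (via the Pfaffian identity $\operatorname{Pf}(S\Omega S^{\top})=\det S\,\operatorname{Pf}(\Omega)$) is not needed here.

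\textbf{Step 3: invariance.} For $S'\in\Sp(2n,\R)$, the transformed density is again Gaussian, with covariance $S'\Sigma S'^{\top}$. Its determinant is
\[
\det(S'\Sigma S'^{\top})=(\det S')^2\det\Sigma=\det\Sigma,
\]
so $\ent[\rho]$ is unchanged. Alternatively, one can invoke the change-of-variables identity used in Theorem~\ref{thm:flat}, since $|\det S'|=1$ means the map preserves Lebesgue measure.

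For the multiset $\{\nu_j\}$, I will use the characterization in Theorem~\ref{thm:williamson} as the positive spectrum of $i\Omega_n^{-1}\Sigma$. Since $S'^{\top}\Omega_n^{-1}S'=\Omega_n^{-1}$, we get
\[
\Omega_n^{-1}(S'\Sigma S'^{\top})=S'^{-\top}\bigl(\Omega_n^{-1}\Sigma\bigr)S'^{\top}.
\]
This is a similarity transformation, so the spectrum is preserved. To justify $S'^{\top}\Omega_n^{-1}S'=\Omega_n^{-1}$, invert $S'\Omega_nS'^{\top}=\Omega_n$ to obtain $S'^{-\top}\Omega_n^{-1}S'^{-1}=\Omega_n^{-1}$, then rearrange.

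An alternative route is the uniqueness clause of Theorem~\ref{thm:williamson}: if $S$ diagonalizes $\Sigma$, then $SS'^{-1}$ diagonalizes $S'\Sigma S'^{\top}$, giving the same $\nu_j$.

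\textbf{Main obstacle.} The only point needing care is the symplectic-group bookkeeping: showing $|\det S|=1$ and establishing the similarity relation. Both follow in one line from the defining relation $S\Omega_nS^{\top}=\Omega_n$, so no real difficulty is expected.
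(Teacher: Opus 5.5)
Your proposal is correct and follows essentially the paper's proof. Both use the Williamson form together with $(\det S)^2=1$ to get $\det\Sigma=\prod_j\nu_j^2$. Both use the similarity $\Omega_n^{-1}(S'\Sigma S'^{\top})=S'^{-\top}(\Omega_n^{-1}\Sigma)S'^{\top}$ for invariance of the symplectic spectrum.

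The differences are cosmetic. You compute the Gaussian entropy directly with the trace identity, whereas the paper diagonalizes orthogonally and uses the one-dimensional maximum-entropy lemma. You also need only $|\det S|=1$, not $\det S=1$.
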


\begin{proof}
The entropy formula is the standard Gaussian computation (diagonalize
$\Sigma$ orthogonally and apply Lemma~\ref{lem:gaussmax}
coordinatewise, plus subadditivity with equality for independent
coordinates). Since $\det S=1$ for $S\in\Sp(2n,\R)$,
$\det\Sigma=\det(S\Sigma S^{\!\top})=\prod_j\nu_j^2$ by
Theorem~\ref{thm:williamson}. Invariance of the symplectic spectrum:
if $S_0\in\Sp$, then
$\Omega_n^{-1}(S_0\Sigma S_0^{\!\top})
=S_0^{-\!\top}(\Omega_n^{-1}\Sigma)S_0^{\!\top}$ using
$S_0^{\!\top}\Omega_n S_0=\Omega_n$, a similarity transformation, so
the spectrum of $i\Omega_n^{-1}\Sigma$ is unchanged.
\end{proof}

\begin{lemma}[Fischer inequality; see \cite{HornJohnson}]
\label{lem:fischer}
For $\Sigma\succ0$ partitioned into diagonal blocks
$\Sigma_{11},\dots,\Sigma_{nn}$ (any sizes),
$\det\Sigma\le\prod_{j=1}^n\det\Sigma_{jj}$, with equality iff
$\Sigma$ is block diagonal.
\end{lemma}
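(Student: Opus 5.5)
The plan is to use the Gaussian entropy computation of Lemma~\ref{lem:gaussentropyformula} as a bridge, so that Fischer's inequality becomes a special case of entropy subadditivity, Lemma~\ref{lem:subadd}, applied to a centered Gaussian. A purely linear-algebraic route through Schur complements would also work, and I will use it as a cross-check.

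\emph{Information-theoretic step.} Let $\rho$ be the centered Gaussian on $\R^{N}$ with covariance $\Sigma\succ0$, where $N=\sum_j d_j$ and $d_j$ is the size of the block $\Sigma_{jj}$. The marginal of $\rho$ on the $j$-th block of coordinates is the centered Gaussian with covariance $\Sigma_{jj}$. The entropy formula in Lemma~\ref{lem:gaussentropyformula} holds for any Gaussian, not only in even dimension: its proof is orthogonal diagonalization together with Lemma~\ref{lem:gaussmax}. It therefore gives
\[
\ent[\rho]=\tfrac12\log\bigl((2\pi e)^{N}\det\Sigma\bigr),
\qquad
\ent[\rho_j]=\tfrac12\log\bigl((2\pi e)^{d_j}\det\Sigma_{jj}\bigr).
\]
Applying Lemma~\ref{lem:subadd} inductively over the $n$ factors gives $\ent[\rho]\le\sum_j\ent[\rho_j]$. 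The $(2\pi e)$ powers cancel because $\sum_j d_j=N$, and exponentiating yields $\det\Sigma\le\prod_j\det\Sigma_{jj}$.

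\emph{Equality case.} By the inductive use of Lemma~\ref{lem:subadd}, equality holds iff $\rho=\bigotimes_j\rho_j$ a.e. If the blocks are independent, every cross-covariance vanishes, so $\Sigma$ is block diagonal. Conversely, if $\Sigma$ is block diagonal, the Gaussian density factorizes directly, so equality holds.

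\emph{Linear-algebra cross-check.} Write $\Sigma=\begin{psmallmatrix}A&B\\B^{\top}&C\end{psmallmatrix}$ with $A=\Sigma_{11}$. Then $\det\Sigma=\det A\cdot\det(C-B^{\top}A^{-1}B)$. Since $B^{\top}A^{-1}B\succeq0$, the Schur complement satisfies $C-B^{\top}A^{-1}B\preceq C$ and is positive definite. Monotonicity of the determinant on positive definite matrices then gives $\det(C-B^{\top}A^{-1}B)\le\det C$. Equality holds iff $B^{\top}A^{-1}B=0$, that is iff $B=0$, because $A^{-1}\succ0$ and $x^{\top}B^{\top}A^{-1}Bx=0$ forces $Bx=0$ for every $x$.

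Induction on the number of blocks, applied to $C$ (which is itself positive definite), finishes the argument. The only mildly delicate point is strict monotonicity: if $0\prec X\preceq C$ with $X\ne C$, then $\det X<\det C$. This follows because $C^{-1/2}XC^{-1/2}\preceq I$ has all eigenvalues at most $1$, with at least one strictly below $1$.

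I expect the main obstacle to be bookkeeping rather than substance: checking the marginal claim for Gaussians and handling the equality case cleanly across the induction. Both routes avoid any genuine difficulty, and the result is also classical \cite{HornJohnson}.
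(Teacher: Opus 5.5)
Your proof is correct. Your main argument takes a genuinely different route from the paper, while your linear-algebra ``cross-check'' is essentially the paper's own proof.

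The paper argues purely by matrix analysis:
\begin{itemize}
\item reduce to two blocks and factor $\det\Sigma=\det\Sigma_{11}\det(\Sigma/\Sigma_{11})$;
\item use $0\prec\Sigma/\Sigma_{11}\preceq\Sigma_{22}$ and Loewner monotonicity of the determinant;
\item induct on the number of blocks.
\end{itemize}

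Your primary argument instead realizes $\Sigma$ as the covariance of a centered Gaussian. You then read Fischer's inequality off entropy subadditivity (Lemma~\ref{lem:subadd}), combined with the Gaussian entropy formula of Lemma~\ref{lem:gaussentropyformula}. You correctly extend that formula to arbitrary dimension, which is needed because blocks may have odd size. This is the classical information-theoretic proof of Hadamard/Fischer-type determinant inequalities found in Cover and Thomas.

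\emph{What the entropy route buys.} Inside this paper it is economical. Fischer becomes the Gaussian instance of the same subadditivity-plus-maximum-entropy mechanism that produces the last inequality in \eqref{eq:aggregate}. The equality case also comes for free, from the equality case of subadditivity and the fact that jointly Gaussian blocks are independent iff uncorrelated.

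\emph{What it costs.} It relies on several probabilistic facts:
\begin{itemize}
\item Gaussian marginals are Gaussian with covariance $\Sigma_{jj}$;
\item the Gaussian entropy formula holds in every dimension;
\item all entropies involved are finite, so the KL identity behind Lemma~\ref{lem:subadd} has no $\infty-\infty$ issue.
\end{itemize}
None of these is circular here, since Lemma~\ref{lem:gaussentropyformula} is proved by orthogonal diagonalization without appeal to Fischer.

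\emph{The Schur-complement route.} It is self-contained linear algebra. On that route your version is tighter than the paper's. The paper states ``equality iff $\Sigma_{21}\Sigma_{11}^{-1}\Sigma_{12}=0$, i.e., $\Sigma_{12}=0$'' and leaves the justification to the cited monotonicity result. You supply both missing steps:
\begin{itemize}
\item $B^{\top}A^{-1}B=0\Rightarrow B=0$, via $x^{\top}B^{\top}A^{-1}Bx=\|A^{-1/2}Bx\|^2$;
\item \emph{strict} monotonicity of $\det$ when $0\prec X\preceq C$ and $X\neq C$, via the eigenvalues of $C^{-1/2}XC^{-1/2}$.
\end{itemize}
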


\begin{proof}
It suffices to treat $n=2$ and induct. Writing the Schur complement
$\Sigma/\Sigma_{11}=\Sigma_{22}-\Sigma_{21}\Sigma_{11}^{-1}\Sigma_{12}$,
one has $\det\Sigma=\det\Sigma_{11}\det(\Sigma/\Sigma_{11})$ and
$0\prec\Sigma/\Sigma_{11}\preceq\Sigma_{22}$. Monotonicity of the
determinant on the positive-semidefinite order
\cite[Cor. 7.7.4]{HornJohnson} gives
$\det(\Sigma/\Sigma_{11})\le\det\Sigma_{22}$, with equality iff
$\Sigma_{21}\Sigma_{11}^{-1}\Sigma_{12}=0$, i.e., $\Sigma_{12}=0$.
\end{proof}

\begin{theorem}[Aggregate multi-DOF uncertainty
floor]\label{thm:aggregate}
Let $\rho$ be a probability density on $\R^{2n}$ (with respect to
$\dd^{2n}z$) with finite covariance $\Sigma\succ0$ and
$\ent[\rho]>-\infty$, and let
$\nu_1\ge\dots\ge\nu_n>0$ be the symplectic eigenvalues
of~$\Sigma$.
Define the \emph{within-DOF uncertainty} of DOF~$j$ as
\[
  u_j:=\sqrt{\det\Sigma_{jj}}
  =\sqrt{\sigma^2_{q^j}\sigma^2_{p_j}-\Cov(q^j,p_j)^2},
\]
the area scale of the $j$th marginal covariance ellipse. Then,
\begin{equation}\label{eq:aggregate}
  \prod_{j=1}^n u_j
  \;\ge\;\sqrt{\det\Sigma}
  \;=\;\prod_{j=1}^n\nu_j
  \;\ge\;\frac{e^{\ent[\rho]}}{(2\pi e)^{n}},
\end{equation}
where the first inequality is an equality iff there are no
cross-DOF correlations ($\Sigma_{jk}=0$ for $j\ne k$); the middle
quantity is invariant under all linear Hamiltonian flows; and the
last inequality is an equality iff $\rho$ is Gaussian. Consequently,
along any linear Hamiltonian evolution of a state that is initially
Gaussian, uncorrelated across DOF, and equipartitioned
($\nu_j=\nu$ for all $j$, $u_j(0)=\nu$),
\[
  \prod_{j=1}^n u_j(t)\;\ge\;\nu^{\,n}=\prod_j u_j(0)
  \qquad\text{for all } t,
\]
with equality at time $t$ iff the state is again uncorrelated across
DOF at time $t$. \emph{The product of within-DOF uncertainties can
only be raised above its initial equipartition value, and only by the
creation of cross-DOF correlations.} For $n=1$ the product is a
single factor and $u_1(t)=\nu_1$. Identically, within one DOF, linear
Hamiltonian flow conserves the uncertainty area exactly.
\end{theorem}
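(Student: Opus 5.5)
The chain in \eqref{eq:aggregate} splits into three independent links, and I will prove them left to right.

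\emph{First inequality.} Square both sides. The claim becomes $\prod_j\det\Sigma_{jj}\ge\det\Sigma$, which is Fischer's inequality (Lemma~\ref{lem:fischer}) applied to the $2\times2$ diagonal blocks in the DOF ordering. The equality case of that lemma gives exactly $\Sigma_{jk}=0$ for $j\ne k$.

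\emph{Middle identity and its invariance.} Both come from Lemma~\ref{lem:gaussentropyformula}. There $\det\Sigma=\prod_j\nu_j^2$ follows from $\det S=1$ and Theorem~\ref{thm:williamson}. The invariance of $\{\nu_j\}$ under $\Sigma\mapsto S\Sigma S^{\top}$ is a similarity argument, and it uses only the covariance matrix, not Gaussianity. For the invariance to apply, I must check that a linear Hamiltonian flow acts on covariances by $\Sigma(t)=S_t\Sigma S_t^{\top}$ with $S_t\in\Sp(2n,\R)$. This holds because the flow is $z\mapsto S_tz$, possibly plus a translation, which does not affect the covariance.

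\emph{Last inequality.} This is the multivariate Gaussian maximum-entropy bound: $\ent[\rho]\le\tfrac12\log((2\pi e)^{2n}\det\Sigma)$, with equality iff $\rho$ is Gaussian. I would prove it exactly as in Lemma~\ref{lem:gaussmax}. Take $g$ to be the Gaussian density with the same mean and covariance. Then $0\le\KL(\rho\|g)=-\ent[\rho]-\int\rho\log g$. Because $\log g$ is quadratic, $\int\rho\log g$ depends only on the first two moments, so it equals $\int g\log g=-\ent[g]$. Combining with the entropy formula of Lemma~\ref{lem:gaussentropyformula} and exponentiating gives $\sqrt{\det\Sigma}\ge e^{\ent[\rho]}/(2\pi e)^n$. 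Equality holds iff $\rho=g$ a.e.

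\emph{Dynamical consequence.} Take an initially Gaussian, uncorrelated, equipartitioned state. Then $\Sigma(0)=\bigoplus_j\Sigma_{jj}(0)$ with $\det\Sigma_{jj}(0)=\nu^2$, so $\prod_j u_j(0)=\nu^n=\sqrt{\det\Sigma(0)}$. Along the flow, $\sqrt{\det\Sigma(t)}=\sqrt{\det\Sigma(0)}=\nu^n$ by the invariance above. The first inequality applied at time $t$ then gives $\prod_ju_j(t)\ge\nu^n$, with equality iff $\Sigma(t)$ is block diagonal. This argument does not need Gaussianity to persist, although a linear flow does preserve it.

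\emph{Case $n=1$.} Here $u_1=\sqrt{\det\Sigma}=\nu_1$, which is invariant under $\Sp(2,\R)=\SL(2,\R)$. This is the conservation of uncertainty area within one DOF.

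\emph{Expected obstacle.} The main point requiring care is the last inequality in the non-Gaussian case. I must justify splitting $\int\rho\log g$ under only finite covariance, and confirm that $\ent[\rho]>-\infty$ makes $\KL$ well defined. Both follow because $\log g$ is a quadratic polynomial, which is $\rho$-integrable when second moments are finite.
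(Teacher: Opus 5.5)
Your proposal is correct and follows essentially the same route as the paper: Fischer's inequality on the $2\times2$ DOF blocks, $\det\Sigma=\prod_j\nu_j^2$ and symplectic invariance of the spectrum from Lemma~\ref{lem:gaussentropyformula}, the Gaussian maximum-entropy bound via $\KL(\rho\|g)\ge0$ against the moment-matched Gaussian, and then the first inequality applied at time $t$ with $\Sigma(t)=S_t\Sigma(0)S_t^{\top}$. Your side remarks are correct refinements of details the paper leaves implicit: translations do not affect $\Sigma$, Gaussianity is not needed for the dynamical conclusion, and $\rho$-integrability of the quadratic $\log g$ justifies splitting the KL integral.
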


\begin{proof}
The first inequality is Lemma~\ref{lem:fischer} applied to the
$2\times2$ block partition, together with
$\det\Sigma=\prod_j\nu_j^2$ (Lemma~\ref{lem:gaussentropyformula}).
The last inequality is the maximum-entropy bound
$\ent[\rho]\le\tfrac12\log((2\pi e)^{2n}\det\Sigma)$, proved exactly
as in Lemma~\ref{lem:gaussmax} with the matching Gaussian of
covariance $\Sigma$ (equality iff $\rho$ Gaussian). Invariance of
$\prod\nu_j$ under $\Sigma\mapsto S\Sigma S^{\!\top}$ is
Lemma~\ref{lem:gaussentropyformula}. For the dynamical statement,
$\Sigma(t)=S_t\Sigma(0)S_t^{\!\top}$ with $S_t\in\Sp(2n,\R)$
preserves the symplectic spectrum $\{\nu\}$, so
$\prod_j u_j(t)\ge\prod_j\nu_j=\nu^n$ by the first inequality, with
the stated equality case. For $n=1$, $\Sp(2,\R)=\SL(2,\R)$, so
$u_1(t)^2=\det\Sigma(t)=\det\Sigma(0)=\nu_1^2$.
\end{proof}

\emph{Relevance to 
Problem (I)(b). }\label{rem:aggregatemeaning}
Theorem~\ref{thm:aggregate} proves the \emph{aggregate} (product)
version of both conjectures in Problem~(I)(b). The equi-partitioned
uncorrelated state minimizes the total within-DOF uncertainty over
its entire symplectic orbit, and any excess is exactly accounted for
by cross-DOF correlation (Fischer defect). What it does \emph{not}
decide is the \emph{per-DOF} refinement, whether an individual
$u_j(t)$ can dip below $\nu$ while others rise. That is the genuinely
open-problem content discussed in Problems \ref{prob:schurhorn} -- \ref{prob:capacity}.

\begin{problem}[Symplectic Schur--Horn problem for within-DOF
uncertainties]\label{prob:schurhorn}
Characterize, for fixed symplectic spectrum
$\nu_1\ge\cdots\ge\nu_n>0$, the attainable set
\[
  \mathcal U(\nu)
  =\Bigl\{\bigl(u_1(\Sigma),\dots,u_n(\Sigma)\bigr):
  \Sigma\in\text{the }\Sp(2n,\R)\text{-orbit with spectrum }\nu
  \Bigr\}\subset\R_{>0}^n ,
\]
where $u_j(\Sigma)=\sqrt{\det\Sigma_{jj}}$. In particular, we need to explore
(a) is $\min_j u_j\ge\nu_n$ on the whole orbit (so that no DOF can be
squeezed below the \emph{smallest} symplectic eigenvalue)?
(b) in the equipartitioned case $\nu_j\equiv\nu$, is
$u_j\ge\nu$ for \emph{every} $j$ (the per-DOF form of the
conjecture in \cite{CarcassiAidalaBook})?
And (c) describe $\mathcal U(\nu)$ by majorization-type inequalities,
in analogy with the Schur--Horn theorem, using the symplectic
eigenvalue technology of \cite{BhatiaJain2015}.

Some partial cues include (i) $\prod_j u_j\ge\prod_j\nu_j$
(Theorem~\ref{thm:aggregate}); (ii) $u_j\ge\nu_n$ would follow from
the interlacing-type bound ``every $2\times2$ symplectic compression
of $\Sigma$ has symplectic eigenvalue $\ge\nu_n$,'' a statement of
exactly the kind studied in \cite{BhatiaJain2015};  and (iii) for $n=1$,
$\mathcal U(\nu)=\{\nu\}$ (Theorem~\ref{thm:aggregate}).
\end{problem}

\begin{problem}[Kime/torus reformulation and estimability]
\label{prob:torus}
Via Lemma~\ref{lem:actionangle} applied per DOF, a state on
$\R^{2n}$ (off the coordinate axes) is a density on
$\mathbb T^n\times(0,\infty)^n$ in variables
$(\kimephase_1,\dots,\kimephase_n,J_1,\dots,J_n)$, and cross-DOF
correlations at fixed actions are encoded in the joint phase law on
$\mathbb T^n$, e.g., in the \emph{relative-phase moment matrix}
$R_{jk}=\E\bigl[e^{i(\Theta_j-\Theta_k)}\bigr]$, which is Hermitian, positive
semidefinite, unit diagonal. The solution may require progress in three directions. (a)~Express the quantities
$u_j,\nu_j$ of Problem~\ref{prob:schurhorn}, for Gaussian states,
in terms of $(R,\E J_1,\dots,\E J_n)$ in the small-dispersion regime,
and determine which functions of $\mathcal U(\nu)$ are identifiable
from repeated-measurement data under the multivariate extension of
the KPT observation model and its convolution identity
$M=F\Phi$ of \cite{KPTms} (circular convolution now acting on
$\mathbb T^n$); (b)~prove the multivariate
anchored-identifiability theorem (gauge group: independent rigid
rotations of each $\kimephase_j$, i.e., the torus
$\mathbb T^n$ acting diagonally on $R$ by conjugation with unimodular
diagonals, note $R$ itself is gauge-covariant while
$\operatorname{spec}R$ and $|R_{jk}|$ are gauge-invariant);
(c)~derive the Cram\'er--Rao bound for estimating
$(|R_{jk}|)_{j<k}$, extending the parametric Cram\'er--Rao theorem
of \cite{KPTms}, so that Problem~\ref{prob:schurhorn}(b)
acquires a statistically testable surrogate.
\end{problem}

\emph{Relation to symplectic capacities.}\label{rem:capacity}
For the covariance ellipsoid
$E_\Sigma=\{z:z^{\!\top}\Sigma^{-1}z\le1\}$, the linear symplectic
capacity equals $\pi\nu_n$ (the smallest symplectic eigenvalue sets
the Gromov width of the ellipsoid) \cite{Gromov1985,deGosson2009}.
Thus, Problem \ref{prob:schurhorn}(a) asks whether the within-DOF
uncertainty of every DOF dominates the capacity scale of the total
state. Under \emph{nonlinear} Hamiltonian flows $\Sigma$ is no longer
transported by $\Sp(2n,\R)$, but Gromov non-squeezing still bounds
the projection of the evolved support onto each conjugate plane.

\begin{problem}[Nonlinear invariant interpolating entropy and
capacity]\label{prob:capacity}
Construct a functional $\mathcal C[\rho]$ of states on $\R^{2n}$ such
that: (i)~$\mathcal C$ is invariant under all (possibly nonlinear)
Hamiltonian flows; (ii)~$\mathcal C$ reduces to $\pi\nu_n$ on
Gaussian states; (iii)~$\mathcal C$ lower-bounds
$2\pi\min_j u_j$ up to a universal constant; and (iv)~$\mathcal C$ is
expressible through the kime-torus data of
Problem~\ref{prob:torus} (hence estimable). Consider the following candidate functional, a
sublevel-set capacity of $\rho$ at the entropy-calibrated level
$e^{-\ent[\rho]}$, i.e.,
$\mathcal C[\rho]=c\bigl(\{\rho\ge e^{-\ent[\rho]}\}\bigr)$ for a
normalized symplectic capacity $c$.
Properties (i) and (ii) then
hold by symplectomorphism-invariance of $c$ and a direct Gaussian
computation, while (iii)--(iv) are open.
\end{problem}

\subsection{Dynamics and thermodynamics: equipartition as the
kime-diffusive attractor}\label{ssec:thermo}

The final component of Problem~(I) is the conjecture that
``Hamiltonian dynamics preserves, at least locally, the entropic
relationships that one would find at equilibrium,'' with
``equipartition of entropy over uncorrelated DOF'' as the conjectured
floor. The kime representation supplies both an exactly solvable
relaxation model (Lemma~\ref{lem:heat}) and a clean statement of what
Hamiltonian flow does and does not preserve.

\begin{definition}[Kime-deformed evolution]\label{def:kimedeformed}
Let $H$ be a one-DOF Hamiltonian admitting a global action--angle
chart $(\kimephase,J)\in\Sone\times\mathcal J$,
$\mathcal J\subseteq(0,\infty)$ open, with frequency
$\omega(J)=H'(J)$ (Liouville--Arnold, \cite[Ch.~10]{Arnold}). For
$\varepsilon\ge0$, the \emph{kime-deformed evolution} of a density
$\tilde\rho$ on $\Sone\times\mathcal J$ is
\begin{equation}\label{eq:kimedeformed}
  \partial_t\tilde\rho
  =-\,\omega(J)\,\partial_\kimephase\tilde\rho
  +\varepsilon\,\partial^2_\kimephase\tilde\rho .
\end{equation}
For $\varepsilon=0$ this is the Liouville equation of $H$ in
action--angle variables; for $\omega\equiv0$ it is the kime-phase
Fokker--Planck equation of \cite{DiracKime} fiberwise in $J$. At the
level of classical densities, the one-parameter family
\eqref{eq:kimedeformed} interpolates between an entropy-conserving
(transport) and an entropy-producing (diffusive) sector, in direct
analogy with the unitary/contractive interpolation of the kime-ray
factorization of \cite{KPTms} recalled in
Proposition~\ref{prop:wick} below.
\end{definition}

\begin{theorem}[Entropy production and the equipartition
attractor]\label{thm:mixing}
Let $\tilde\rho_t$ solve \eqref{eq:kimedeformed} with strictly
positive $C^2$ initial datum of finite entropy, rapid decay in $J$,
and $\varepsilon>0$. Then,
\begin{enumerate}[label=(\roman*)]
\item The action marginal $\rho_J$ is conserved:
$\partial_t\rho_J(J)=0$ for all $J$.
\item Entropy production is purely diffusive and nonnegative
\[
  \frac{\dd}{\dd t}\,\ent[\tilde\rho_t]
  =\varepsilon
  \int_{\mathcal J}\!\int_{-\pi}^{\pi}
  \frac{(\partial_\kimephase\tilde\rho_t)^2}{\tilde\rho_t}\,
  \dd\kimephase\,\dd J\;\ge\;0,
\]
with instantaneous equality iff $\tilde\rho_t$ is
phase-equipartitioned (Definition~\ref{def:kimestate}). In
particular, for $\varepsilon=0$ (pure Hamiltonian flow)
$\ent[\tilde\rho_t]$ is exactly conserved.
\item The phase-equipartitioned state
$\tilde\rho_\infty(\kimephase,J)=\rho_J(J)/2\pi$ with the
\emph{initial} action marginal is the unique stationary solution with
that marginal, is the entropy maximizer among all densities with
action marginal $\rho_J$, namely
\[
  \ent[\tilde\rho]\;\le\;\ent[\rho_J]+\log2\pi
  \qquad(\text{equality iff phase-equipartitioned}),
\]
and $\tilde\rho_t\to\tilde\rho_\infty$ in $L^2$ with fiberwise
moment decay
$|\alpha_n(t\mid J)|= e^{-\varepsilon n^2 t}\,|\alpha_n(0\mid J)|$.
\item Pure Hamiltonian flow ($\varepsilon=0$) preserves the class of
phase-equipartitioned states and every functional of the action
marginal; i.e., \emph{equilibrium entropic relationships are exactly
invariant under the Hamiltonian sector of the kime-deformed family}.
\end{enumerate}
\end{theorem}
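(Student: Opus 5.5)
The plan is to work fiberwise in the action $J$, reducing everything to Lemma~\ref{lem:heat} in a rotating frame. The key structural fact is that the operator on the right of \eqref{eq:kimedeformed} contains only $\partial_\kimephase$ derivatives, with coefficients depending on $J$ alone. Each fiber $\{J\}\times\Sone$ therefore evolves independently, by a drift at constant speed $\omega(J)$ combined with diffusion of strength $\varepsilon$.

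For (i), integrate \eqref{eq:kimedeformed} over $\kimephase\in\Sone$. Both the term $\omega(J)\int\partial_\kimephase\tilde\rho$ and the term $\varepsilon\int\partial_\kimephase^2\tilde\rho$ vanish by periodicity, so $\partial_t\rho_J=0$.

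For (ii), differentiate $\ent[\tilde\rho_t]=-\iint\tilde\rho\log\tilde\rho$ under the integral, as in the proof of Lemma~\ref{lem:heat}(ii). Justifying this is the only genuinely technical point, and the rapid decay in $J$ and positivity are what it uses. The transport contribution is $\omega(J)\int\partial_\kimephase\tilde\rho\,(1+\log\tilde\rho)\,\dd\kimephase$. This equals $\omega(J)\int\partial_\kimephase(\tilde\rho\log\tilde\rho)\,\dd\kimephase$, the integral of an exact $\kimephase$-derivative, so it vanishes on each fiber. The diffusive contribution, after one integration by parts in $\kimephase$, gives $\varepsilon\iint(\partial_\kimephase\tilde\rho)^2/\tilde\rho$. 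This is nonnegative, and it vanishes iff $\partial_\kimephase\tilde\rho\equiv0$, which is phase-equipartition. Setting $\varepsilon=0$ gives exact conservation.

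For (iii), the entropy bound follows from writing $\tilde\rho=\rho_J\Phi(\cdot\mid J)$. Then
\[
\ent[\tilde\rho]=\ent[\rho_J]+\int\rho_J(J)\,\ent[\Phi(\cdot\mid J)]\,\dd J ,
\]
and Jensen's inequality gives $\ent[\Phi(\cdot\mid J)]\le\log2\pi$, with equality iff $\Phi(\cdot\mid J)$ is uniform. Alternatively one can use Lemma~\ref{lem:subadd} on $\Sone\times\mathcal J$ together with $\ent[\text{uniform}]=\log2\pi$.

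Stationarity and uniqueness come from the fiberwise Fourier modes. Expanding $\tilde\rho_t(\cdot,J)=\rho_J(J)\,\Phi_t(\cdot\mid J)$, which is legitimate by (i), and applying the argument of Lemma~\ref{lem:heat}(i) with the drift term included, gives
\[
\dot\alpha_n=\bigl(in\,\omega(J)-\varepsilon n^2\bigr)\alpha_n ,
\]
up to the sign convention for the drift. Hence
\[
\alpha_n(t\mid J)=e^{in\omega(J)t}e^{-\varepsilon n^2t}\alpha_n(0\mid J),
\]
and the moduli obey the stated decay. A stationary solution must have $(in\omega-\varepsilon n^2)\alpha_n=0$, which forces $\alpha_n=0$ for $n\ne0$ because $\varepsilon>0$. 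So the only stationary solution with marginal $\rho_J$ is $\rho_J/2\pi$.

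For the $L^2$ convergence, use Parseval fiberwise:
\[
\|\tilde\rho_t-\tilde\rho_\infty\|_{L^2}^2=\frac{1}{2\pi}\int\rho_J^2\sum_{n\ne0}|\alpha_n(t\mid J)|^2\,\dd J\le e^{-2\varepsilon t}\,\|\tilde\rho_0-\tilde\rho_\infty\|_{L^2}^2 .
\]
This requires $\tilde\rho_0\in L^2$, which the smoothness and decay hypotheses provide.

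For (iv), with $\varepsilon=0$ the solution is explicit: $\tilde\rho_t(\kimephase,J)=\tilde\rho_0(\kimephase-\omega(J)t,J)$. A $\kimephase$-independent datum stays fixed, and the action marginal is conserved by (i), so every functional of $\rho_J$ is conserved. The main obstacle is only the rigor of differentiating under the integral and integrating by parts in (ii). I would handle it by using the explicit fiberwise representation, a rotation composed with the circle heat kernel, which gives positivity and smoothness for $t>0$ fiber by fiber.
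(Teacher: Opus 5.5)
Your proposal is correct and follows essentially the same fiberwise route as the paper: (i) by periodicity, (ii) by splitting $-\int(1+\log\tilde\rho)\,\partial_t\tilde\rho$ into an exact transport term and the diffusive Fisher term, (iii) by fiberwise Jensen on the conditional phase law together with the mode equation $\dot\alpha_n=(in\,\omega(J)-\varepsilon n^2)\alpha_n$, and (iv) by explicit transport. The deviations are minor: you get uniqueness of the stationary state directly from the Fourier modes (using $\varepsilon n^2\neq 0$) rather than from the vanishing of entropy production in (ii), and your Parseval estimate gives an explicit $e^{-2\varepsilon t}$ rate in $L^2$ where the paper invokes dominated convergence; both arguments tacitly need $\tilde\rho_0\in L^2$, which $C^2$ regularity on an open $\mathcal J$ does not by itself guarantee.
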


\begin{proof}
(i) By integrating \eqref{eq:kimedeformed} over $\kimephase\in\Sone$, both
terms are exact $\kimephase$-derivatives and vanish by periodicity.

(ii) As in Lemma~\ref{lem:heat}(ii),
$\frac{\dd}{\dd t}\ent
=-\int(1+\log\tilde\rho)\,\partial_t\tilde\rho$. The transport
contribution is
$\int\omega(J)\,\partial_\kimephase\tilde\rho\,
(1+\log\tilde\rho)\,\dd\kimephase\,\dd J
=\int\omega(J)\Bigl(\int_{-\pi}^{\pi}\partial_\kimephase
\bigl[\tilde\rho\log\tilde\rho\bigr]\dd\kimephase\Bigr)\dd J=0$
by periodicity (note $\partial_\kimephase[\tilde\rho\log\tilde\rho]
=(1+\log\tilde\rho)\partial_\kimephase\tilde\rho$). The diffusive
contribution is computed exactly as in Lemma~\ref{lem:heat}(ii),
fiberwise in $J$ and integrated $\dd J$, using rapid decay to justify
Fubini. Equality holds iff $\partial_\kimephase\tilde\rho_t\equiv0$.

(iii) The bound is Lemma \ref{lem:subadd} on
$\Sone\times\mathcal J$ plus $\ent[\Phi(\cdot\mid J)]\le\log2\pi$
fiberwise (Jensen), i.e., conditional entropy is maximized by the
Haar law on each fiber; equality iff $\Phi(\cdot\mid J)$ is uniform
for a.e. $J$. Stationarity and uniqueness with fixed marginal:
$\partial_t\tilde\rho=0$ with (ii) forces
$\partial_\kimephase\tilde\rho=0$, and (i) fixes the $J$-profile.
Moment decay: the $n$th fiber moment
$\alpha_n(t\mid J)$ obeys
$\dot\alpha_n=(in\,\omega(J)-\varepsilon n^2)\alpha_n$, so
$|\alpha_n(t\mid J)|=e^{-\varepsilon n^2t}|\alpha_n(0\mid J)|$;
convergence of $\tilde\rho_t\to\tilde\rho_\infty$ in $L^2$ follows by
Parseval fiberwise and dominated convergence in $J$.

(iv) For $\varepsilon=0$, \eqref{eq:kimedeformed} transports along
$\dot\kimephase=\omega(J)$, $\dot J=0$. A $\kimephase$-independent
density is a fixed point, and any functional of $\rho_J$ is conserved
by (i).
\end{proof}

\emph{Summary of what is proven and what remains
open.}\label{rem:thermo-status}
Theorem \ref{thm:mixing} proves, within the kime representation 
(a) equipartition over the phase is the unique
entropy-maximal state compatible with the conserved action
statistics, the rigorous form of ``equipartition of entropy is the
bound,'' as an \emph{upper} bound on entropy at fixed action marginal
attained exactly at equilibrium; (b)~Hamiltonian flow is the
entropy-neutral boundary $\varepsilon=0$ of the kime-deformed family
and preserves all equilibrium relationships exactly, which is the
precise (and here, global rather than merely local) version of the
conjecture that Hamiltonian dynamics preserves equilibrium entropic
relationships. The correspondingly open statements are the multi-DOF
per-DOF refinements (Problem~\ref{prob:schurhorn}) and the
reconciliation of the \emph{upper}-bound role of equipartition at
fixed actions with the \emph{lower}-bound role of the equipartitioned
uncertainty product in Theorem~\ref{thm:aggregate}; the two are dual
faces (max-entropy at fixed invariants vs.\ min-uncertainty at fixed
entropy) of one variational principle, whose sharp joint statement
for $n\ge2$ is presented in Conjecture \ref{conj:duality}. 

\begin{conjecture}[Equipartition duality]\label{conj:duality}
Fix $n\ge2$, an entropy value $s$, and an action marginal
$\rho_{\bm J}$ on $(0,\infty)^n$. Among all states on
$\mathbb T^n\times(0,\infty)^n$ with entropy $\ge s$ and action
marginal $\rho_{\bm J}$, the phase-equipartitioned product state
(unique when it exists) simultaneously (i)~maximizes the entropy,
(ii)~minimizes every within-DOF uncertainty $u_j$, and
(iii)~is the unique state at which the per-DOF conjectured bound of
Problem~\ref{prob:schurhorn}(b) is saturated for all $j$; moreover it
is the unique fixed point, with the given marginal, of the multi-DOF
kime-deformed semigroup
$\partial_t\tilde\rho=\sum_j(-\omega_j\partial_{\kimephase_j}
+\varepsilon\,\partial^2_{\kimephase_j})\tilde\rho$.
\end{conjecture}

\emph{Statistical formulation of
Problem (I).}\label{int:uncertainty}
Under Assumption~\ref{ass:interp}, every quantity in this section is
an attribute of the ensemble of repeated experiments: $r$, $\alpha_n$,
$\Fish[\Phi]$, and $R_{jk}$ are estimable by kime-phase tomography
with quantified error \cite{KPTms}.
The terms $\ent[\rho]$ and $u_j$ are
estimable from calibrated observables via the action--angle
dictionary. Theorem~\ref{thm:cylinder},
Theorem~\ref{thm:sharpcircular}, and Theorem~\ref{thm:aggregate} are
thus \emph{experimentally checkable inequalities} on reproducibility
statistics, and Problems~\ref{prob:schurhorn}--\ref{prob:capacity}
come with built-in numerical surrogates
(Problem~\ref{prob:torus}). This is the distinctive contribution of
the kime formulation, the open problems of
\cite{CarcassiAidalaBook} are re-expressed in estimable coordinates
without loss of mathematical content.

\section{Problem II (invariant entropy): why continuous quantities
pair, and the kime chart as normal form}\label{sec:invariance}

Problem~(II) asks for a general, assumption-minimal account of the
following phenomenon. A coordinate-invariant notion of ``count of
states'' (hence of entropy) over a continuum appears to exist only
when continuous quantities organize into conjugate \emph{pairs}, and
the resulting structure is symplectic. 
The problem statement further
suggests that the natural home of the construction is a complex
(or generalized complex) structure, one real dimension pairing with
another inside $\C$. The kime coordinate $\kappa=te^{i\kimephase}$ is
precisely such a complex pairing, and Lemma~\ref{lem:actionangle}
shows the pairing is symplectically exact. In this section we prove
the pairing phenomenon as a theorem about invariant measures and
entropies (Theorems~\ref{thm:entinv}--\ref{thm:liouville-unique} and
Corollary~\ref{cor:pairing}), identify
the kime chart as a K\"ahler normal form
(Proposition~\ref{prop:kahler}), and state the open remainder
(Problems~\ref{prob:spec-char}--\ref{prob:volume-vs-symplectic}).

\subsection{Invariant entropy forces an invariant measure}

Throughout, $X$ is a smooth $\sigma$-compact manifold, densities are
with respect to a fixed smooth positive reference measure $\Leb$, and
$f_*\rho$ denotes the pushforward density of $\rho$ under a
diffeomorphism $f$ (Jacobian $J_f>0$ of $f$ w.r.t.\ $\Leb$, assumed
orientation-compatible for simplicity).

\begin{theorem}[Invariant entropy $\Leftrightarrow$ invariant
measure]\label{thm:entinv}
Let $f\in\Diff(X)$ with continuous Jacobian $J_f$. The following statements are
equivalent
\begin{enumerate}[label=(\alph*)]
\item $\ent_\Leb[f_*\rho]=\ent_\Leb[\rho]$ for every compactly
supported continuous density $\rho$ with finite entropy;
\item $f$ preserves $\Leb$ (i.e., $J_f\equiv1$).
\end{enumerate}
\end{theorem}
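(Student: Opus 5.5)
The plan is to reduce both directions to the change-of-variables identity already used in Lemma~\ref{lem:jacobian}, now with respect to the reference measure $\Leb$ rather than $\dd q\,\dd p$. For a compactly supported continuous density $\rho$ with finite entropy, the pushforward density is $f_*\rho=(\rho/J_f)\circ f^{-1}$. Substituting $y=f(x)$, exactly as in the proof of \eqref{eq:entropyshift}, gives
\[
  \ent_\Leb[f_*\rho]=\ent_\Leb[\rho]+\int_X\rho\,\log J_f\,\dd\Leb .
\]
The splitting of the integral is harmless here. $J_f$ is continuous and positive, hence bounded above and away from zero on the compact support of $\rho$, so $\log J_f$ is bounded there. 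Thus (a) is equivalent to the statement that $\int\rho\log J_f\,\dd\Leb=0$ for every admissible $\rho$.

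The direction (b)$\Rightarrow$(a) is then immediate: if $J_f\equiv1$, the correction term vanishes identically.

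For (a)$\Rightarrow$(b), I will argue by contradiction using localized test densities. Suppose $J_f(x_0)\neq1$ at some point; say $\log J_f(x_0)=\delta>0$, the case $\delta<0$ being symmetric. By continuity there is a relatively compact neighbourhood $U$ of $x_0$ on which $\log J_f>\delta/2$. Take a continuous bump $\phi\ge0$ supported in $U$, not identically zero, and normalize it to a density $\rho$. This $\rho$ is bounded and compactly supported, so its entropy is finite. For it, $\int\rho\log J_f\,\dd\Leb>\delta/2>0$, which contradicts (a). Hence $\log J_f\equiv0$, that is, $J_f\equiv1$.

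The only delicate points are technical. First, finite entropy of the test density: a bounded density on a set of finite $\Leb$-measure has $-\rho\log\rho$ bounded, so this is fine. Second, the identification $J_f\equiv1\iff f_*\Leb=\Leb$, which is the definition of the Jacobian with respect to $\Leb$. I expect no real obstacle; the main care point is the integrability bookkeeping in the change-of-variables step, and compact support together with the continuity of $J_f$ handles it.
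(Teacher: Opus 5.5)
Your proposal is correct and follows the paper's own route: the change-of-variables identity $\ent_\Leb[f_*\rho]=\ent_\Leb[\rho]+\int\rho\log J_f\,\dd\Leb$, followed by localized test densities that force $\log J_f\equiv0$ by continuity. Your contradiction argument with a bump supported where $\log J_f>\delta/2$ is a limit-free version of the paper's approximate-identity step, and your compact-support bound on $\log J_f$ supplies the integrability justification that the paper leaves implicit.
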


\begin{proof}
(b)$\Rightarrow$(a) is the change-of-variables identity (the proof of
Lemma~\ref{lem:jacobian} with unit Jacobian, valid on any $X$).
For (a)$\Rightarrow$(b), the same computation gives, for every
admissible $\rho$,
\[
  \ent_\Leb[f_*\rho]=\ent_\Leb[\rho]
  +\E_\rho\bigl[\log J_f\bigr] ,
\]
so (a) forces $\int\rho\,\log J_f\,\dd\Leb=0$ for all such $\rho$.
Taking $\rho$ to run through approximate identities concentrated at an
arbitrary point $x$ yields $\log J_f(x)=0$ by continuity; hence
$J_f\equiv1$.
\end{proof}

Theorem~\ref{thm:entinv} converts Problem~(II) into a question about
invariant \emph{measures} under the physically mandated
transformation group. The physical mandate, following
\cite{CarcassiAidalaBook}, is that different experimenters may label
the \emph{same} continuous quantity by arbitrary smooth
reparametrizations (units, gauges, monotone recalibrations), so the
group must contain all diffeomorphisms of the configuration
quantities; entropy must be the same for all of them.

\begin{theorem}[No invariant measure on unpaired
quantities]\label{thm:nounpaired}
Let $Q=\R^m$, $m\ge1$, regarded as the value space of $m$ continuous
quantities, and let $\Diff(Q)$ act naturally. There is no nonzero
$\sigma$-finite Borel measure on $Q$, absolutely continuous with a
locally integrable density $g$, invariant under all of $\Diff(Q)$.
Consequently, by Theorem~\ref{thm:entinv}, no reparametrization-%
invariant entropy exists for unpaired continuous quantities.
\end{theorem}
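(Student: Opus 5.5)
We will argue by contradiction, using diffeomorphisms that rescale volume locally. Suppose $\mu=g\,\dd x$ is a nonzero $\sigma$-finite Borel measure on $Q=\R^m$ with $g\in L^1_{\mathrm{loc}}$, $g\ge0$, and $f_*\mu=\mu$ for every $f\in\Diff(Q)$. Since $\mu\neq0$, we can fix a bounded Borel set $B$ with $0<\mu(B)<\infty$. The plan has two steps. First, use translations and local compactly supported diffeomorphisms to show that $g$ must be a.e.\ constant. Second, use a single dilation to force that constant to be zero.

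\emph{Step 1 (homogeneity).} Translations $x\mapsto x+a$ are diffeomorphisms of $Q$, so invariance gives $g(x+a)=g(x)$ for a.e.\ $x$, for each fixed $a$. A Fubini argument over $(x,a)$ then shows $g$ is a.e.\ equal to a constant $c\ge0$. Concretely, convolving $g$ with a mollifier $\eta_\delta$ produces translates of $g$. Hence $g*\eta_\delta$ is a translation-invariant continuous function, so it is a constant $c_\delta$. As $\delta\to0$, $g*\eta_\delta\to g$ in $L^1_{\mathrm{loc}}$, so $g=c$ a.e. If $c=0$ then $\mu=0$, which contradicts the choice of $B$; so $c>0$ and $\mu=c\,\Leb$ is a multiple of Lebesgue measure.

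\emph{Step 2 (non-volume-preserving diffeomorphism).} Take the dilation $f(x)=2x$, which lies in $\Diff(Q)$. Then $\mu(f(B))=c\,2^m\Leb(B)=2^m\mu(B)\neq\mu(B)$ because $0<\mu(B)<\infty$. This contradicts invariance. If one prefers a compactly supported perturbation of the identity, one can use $f(x)=x+\varepsilon\,\chi(x)\,x$ with $\chi$ a bump and $\varepsilon$ small, so that $f$ is a diffeomorphism. Its Jacobian $\det Df=1+\varepsilon\,\mathrm{div}(\chi x)+O(\varepsilon^2)$ is not identically $1$ near $0$. With $\mu=c\,\Leb$, invariance reads $c\,J_f\equiv c$, which fails. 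This variant has the advantage that it transfers verbatim to any manifold $Q$.

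\emph{Step 3 (entropy consequence).} By Theorem~\ref{thm:entinv}, a reparametrization-invariant entropy relative to a reference measure $\Leb'$ would require every $f\in\Diff(Q)$ to preserve $\Leb'$, i.e.\ $J_f\equiv1$ with respect to $\Leb'$. Steps 1–2 rule this out for every admissible $\Leb'$, which gives the consequence.

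The main obstacle is the measure-theoretic care in Step 1. Invariance holds only a.e.\ for each fixed $a$, and $g$ is merely $L^1_{\mathrm{loc}}$. The mollification route handles this cleanly: $(g*\eta_\delta)(x+a)=\int g(x+a-y)\eta_\delta(y)\,\dd y$, and translation invariance of $\mu$ gives that this equals $(g*\eta_\delta)(x)$ for every $x$. The $\sigma$-finiteness hypothesis is needed only to choose the set $B$.
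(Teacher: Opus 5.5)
Your proof is correct and follows the paper's argument: translation invariance plus mollification makes $g$ a.e.\ constant, and the dilation $x\mapsto\lambda x$ then forces that constant to vanish (you phrase the last step with a test set $B$, the paper with $c\lambda^m=c$). Two small quibbles: the finiteness of $\mu(B)$ comes from local integrability of $g$ rather than from $\sigma$-finiteness, and your claim that the bump-function variant transfers verbatim to any manifold overlooks that Step~1 still relies on translations.
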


\begin{proof}
Invariance under $f$ means $g(f(x))\,|\!\det Df(x)|=g(x)$ for
a.e.~$x$. Taking $f$ to be all translations gives
$g(x+a)=g(x)$ for a.e.\ $x$, for every $a$, hence $g\equiv c\ge0$
a.e.\ (mollifying, a translation-invariant locally integrable
function agrees a.e.\ with its smooth translation-invariant
mollification, which is constant). Taking $f(x)=\lambda x$,
$\lambda>1$, gives $c\,\lambda^m=c$, so $c=0$.
\end{proof}

\subsection{Pairing: cotangent lifts and the uniqueness of the
Liouville measure}

The classical repair is to adjoin to each quantity $q$ a conjugate
$p$ transforming \emph{contragradiently}, i.e., to pass from $Q$ to
$T^*Q$ with the tautological lift of $\Diff(Q)$,
\begin{equation}\label{eq:lift}
  T^*\!f:(q,p)\longmapsto
  \bigl(f(q),\,Df(q)^{-\!\top}p\bigr),
  \qquad f\in\Diff(Q).
\end{equation}

\begin{theorem}[Existence and uniqueness of the invariant measure on
pairs]\label{thm:liouville-unique}
Let $Q=\R^n$ and let $G=\{T^*\!f:f\in\Diff(Q)\}$ act on
$T^*Q=\R^{2n}$ by \eqref{eq:lift}. Then,
\begin{enumerate}[label=(\roman*)]
\item every $T^*\!f$ has Jacobian identically $1$; hence the
Liouville measure $\dd^n q\,\dd^n p$ is $G$-invariant, and the
associated entropy is reparametrization-invariant
(Theorem~\ref{thm:entinv});
\item conversely, any $G$-invariant measure with continuous positive
density on $T^*Q$ is a constant multiple of the Liouville measure;
equivalently, the invariant entropy is unique up to an additive
constant.
\end{enumerate}
\end{theorem}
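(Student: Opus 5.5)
For part (i) I will compute the Jacobian of the cotangent lift \eqref{eq:lift} directly. In the coordinates $(q,p)$ the derivative of $T^*\!f$ is block lower-triangular:
\[
  D(T^*\!f)(q,p)=\begin{pmatrix} Df(q) & 0\\ \partial_q\bigl(Df(q)^{-\top}p\bigr) & Df(q)^{-\top}\end{pmatrix}.
\]
Its determinant is therefore $\det Df(q)\cdot\det Df(q)^{-1}=1$, whatever the off-diagonal block is. This shows $T^*\!f$ preserves $\dd^nq\,\dd^np$, which is condition (b) of Theorem~\ref{thm:entinv}. The implication (b)$\Rightarrow$(a) of that theorem then gives reparametrization invariance of $\ent$. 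As a cross-check, $T^*\!f$ preserves the tautological one-form $p\,\dd q$, hence $\omega_0$ and $\omega_0^n/n!$, so it is symplectic.

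For part (ii), let $g>0$ be a continuous density invariant under $G$. Invariance means $g(T^*\!f(z))\,|\det D(T^*\!f)(z)|=g(z)$, and by (i) this reduces to $g\circ T^*\!f=g$. So $g$ is constant on $G$-orbits. The task is to show that $G$ acts with a dense orbit, and then continuity forces $g$ to be constant. I would proceed with specific subfamilies.

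\begin{enumerate}[label=(\alph*)]
\item Translations $f(q)=q+a$ lift to $(q,p)\mapsto(q+a,p)$. Hence $g(q,p)=h(p)$.
\item Linear maps $f(q)=Aq$, $A\in\GL(n,\R)$, lift to $(q,p)\mapsto(Aq,A^{-\top}p)$. Since $\GL(n,\R)$ acts transitively on $\R^n\setminus\{0\}$ via $p\mapsto A^{-\top}p$, $h$ is constant on $\R^n\setminus\{0\}$. By continuity it is constant on all of $\R^n$.
\end{enumerate}

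So $g\equiv c>0$, i.e.\ the measure is $c\,\dd^nq\,\dd^np$. This already uses only affine reparametrizations, a subgroup of $\Diff(Q)$. Finally, if two densities $c$ and $c'$ are both invariant, the corresponding entropies satisfy $\ent_{c'\Leb}=\ent_{c\Leb}-\log(c'/c)$, which gives uniqueness up to an additive constant.

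The main obstacle is bookkeeping rather than substance. I need to check that the convention for the pushforward and Jacobian (including orientation-reversing $A$ with $\det A<0$, where one uses $|\det|$) is consistent with Theorem~\ref{thm:entinv}, whose statement assumed orientation compatibility. I would handle this by restricting to $\GL^+(n,\R)$ when $n\ge2$, which is still transitive on $\R^n\setminus\{0\}$. For $n=1$, $\GL^+(1,\R)$ has the two orbits $p>0$ and $p<0$. I would join them by the orientation-reversing map $q\mapsto -q$, whose lift $(q,p)\mapsto(-q,-p)$ has Jacobian $+1$. Alternatively, continuity at $p=0$ alone forces both orbit values to equal $h(0)$.
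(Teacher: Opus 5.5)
Your proposal is correct and follows the paper's proof. Part (i) uses the same block-triangular determinant. Part (ii) uses the same transitivity of affine lifts on the dense set $\{p\neq 0\}$ followed by continuity; the paper uses the single map $f(x)=q_1+A(x-q_0)$, which is just your translation composed with your linear map. Two small remarks. First, the orientation bookkeeping is unnecessary: as your own computation in (i) shows, $\det D(T^*\!f)=+1$ even when $\det Df<0$. Second, for a fixed state the entropy shift has the opposite sign, $\ent_{c'\Leb}=\ent_{c\Leb}+\log(c'/c)$, which does not affect uniqueness up to an additive constant.
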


\begin{proof}
(i) The differential of \eqref{eq:lift} in block form is lower
block-triangular with diagonal blocks $Df(q)$ and $Df(q)^{-\!\top}$,
so its determinant is $\det Df\cdot\det Df^{-\!\top}=1$.

(ii) Let $m>0$ be a continuous invariant density:
$m(T^*\!f(z))\cdot 1=m(z)$ for all $z$, $f$, by (i). We claim $G$
acts transitively on $\{(q,p):p\ne0\}$. Given
$(q_0,p_0),(q_1,p_1)$ with $p_0,p_1\ne0$, choose
$B\in\GL(n,\R)$ with $B\,p_0=p_1$, set $A=B^{-\!\top}$ (so that
$A^{-\!\top}p_0=p_1$), and let
$f(x)=q_1+A(x-q_0)\in\Diff(Q)$. Then
$T^*\!f(q_0,p_0)=(q_1,p_1)$. Hence $m$ is constant on the dense open
orbit $\{p\ne0\}$, and by continuity constant everywhere:
$m\equiv c>0$. For the entropy statement, rescaling
$\Leb\mapsto c\,\Leb$ shifts $\ent_\Leb$ by the constant $\log c$.
\end{proof}

\begin{corollary}[The pairing theorem]\label{cor:pairing}
A theory of $m$ continuous quantities that (a)~admits arbitrary
smooth relabelings of the quantities themselves and (b)~possesses a
relabeling-invariant entropy functional on states, cannot realize the
quantities as coordinates on the bare value space
(Theorem~\ref{thm:nounpaired}); it can realize them as the base
coordinates of a cotangent bundle with contragradient conjugates,
and then the invariant entropy exists and is the Liouville entropy,
unique up to an additive constant
(Theorem~\ref{thm:liouville-unique}). In this precise sense,
\emph{continuous quantities must come in conjugate pairs for entropy
to be well defined}, and the count of states is fixed (up to units)
to be the symplectic volume.
\end{corollary}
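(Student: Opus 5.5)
The corollary is essentially an assembly of Theorems~\ref{thm:entinv}, \ref{thm:nounpaired} and \ref{thm:liouville-unique}. The proof therefore splits into a negative half and a positive half, each reduced to one of those results.

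\textbf{Negative half.} Suppose the $m$ quantities are realized as coordinates on the bare value space $Q=\R^m$. Requirement (a) says the admissible relabelings form all of $\Diff(Q)$, acting by pushforward on states. Requirement (b) says $\ent_\Leb[f_*\rho]=\ent_\Leb[\rho]$ for every $f\in\Diff(Q)$ and every admissible $\rho$, relative to some smooth positive reference measure $\Leb$ (the standing convention of this section).
\begin{enumerate}[label=(\arabic*)]
\item By the implication (a)$\Rightarrow$(b) of Theorem~\ref{thm:entinv}, applied separately to each $f$, every $f\in\Diff(Q)$ preserves $\Leb$.
\item Then $\Leb$ is a nonzero $\sigma$-finite measure with a locally integrable (indeed continuous positive) density that is invariant under all of $\Diff(Q)$.
\item Theorem~\ref{thm:nounpaired} forbids exactly this, which gives the contradiction.
\end{enumerate}
One subtlety needs a short remark. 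The entropy functional in (b) is tacitly assumed to be of the form $\ent_\Leb$ for some reference measure. This is the framework fixed at the start of the section, so I will state it explicitly as part of what ``entropy functional on states'' means.

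\textbf{Positive half.} Realize the quantities as base coordinates $q$ of $T^*Q=\R^{2m}$, with relabelings acting through the cotangent lifts \eqref{eq:lift}.
\begin{enumerate}[label=(\arabic*)]
\item Theorem~\ref{thm:liouville-unique}(i) shows each $T^*\!f$ has unit Jacobian. Theorem~\ref{thm:entinv}, direction (b)$\Rightarrow$(a), then gives invariance of the Liouville entropy, so existence holds.
\item For uniqueness, let $\Leb'$ be any reference measure with continuous positive density whose entropy is invariant under the group $G$. Theorem~\ref{thm:entinv}, direction (a)$\Rightarrow$(b), forces $\Leb'$ to be $G$-invariant.
\item Theorem~\ref{thm:liouville-unique}(ii) then gives $\Leb'=c\,\dd^m q\,\dd^m p$. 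Hence $\ent_{\Leb'}=\ent_{\mathrm{Liouville}}-\log c$, so the entropy is unique up to an additive constant.
\end{enumerate}
The closing sentence, that the count of states is the symplectic volume, follows from the same observation: $\dd^m q\,\dd^m p=\omega_0^m/m!$ for $\omega_0=\sum_i \dd q^i\wedge \dd p_i$, and $\omega_0$ is preserved by cotangent lifts.

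\textbf{Main obstacle.} The only genuinely delicate point is the hypothesis mismatch in applying Theorem~\ref{thm:entinv}.
\begin{itemize}
\item That theorem assumes an orientation-compatible $f$ with continuous Jacobian, whereas $\Diff(Q)$ includes orientation-reversing maps. Theorem~\ref{thm:nounpaired}, however, only uses translations and dilations, which are orientation preserving. So I will restrict attention to that subgroup, which suffices for the contradiction.
\item On $T^*Q$ the Jacobian of each lift is identically $1$, hence continuous, so no issue arises there.
\end{itemize}
With these checks the corollary follows directly from the three cited theorems.
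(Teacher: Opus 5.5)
Your proposal is correct and follows the same route as the paper, which gives no separate proof. The paper obtains the negative half from Theorem~\ref{thm:nounpaired} combined with Theorem~\ref{thm:entinv}, and the positive half from Theorem~\ref{thm:liouville-unique}, with Theorem~\ref{thm:entinv} supplying the step ``equivalently'' from invariant entropy to invariant measure. Your remarks that the entropy is tacitly of the form $\ent_\Leb$, and that orientation can be handled by restricting to translations and dilations, are sensible refinements.

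One harmless slip: with the state fixed, rescaling the reference measure to $c\,\Leb$ gives $\ent_{c\Leb}=\ent_{\Leb}+\log c$, not $-\log c$. This does not affect uniqueness up to an additive constant.
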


Note that
Corollary~\ref{cor:pairing} is deliberately stated with the group
generated by \emph{base} reparametrizations only.\label{rem:groups}
Enlarging $G$ to
all symplectomorphisms of $T^*Q$ preserves the conclusion (they too
have unit Jacobian); enlarging it to all volume-preserving
diffeomorphisms destroys the symplectic structure while retaining the
measure, the gap between these two groups for $n\ge2$ is the
content of Problem~\ref{prob:volume-vs-symplectic} below.

\subsection{The kime chart as complex normal form}\label{ssec:kahler}

\begin{proposition}[K\"ahler triple on the kime plane]
\label{prop:kahler}
On the punctured kime plane
$\C^*\ni\kappa=te^{i\kimephase}$ carry the cone metric $g_0$ of
\eqref{eq:conemetric}, the complex structure $J_\kappa$ of the
coordinate $\kappa$ (i.e., multiplication by $i$:
$J_\kappa\partial_t=t^{-1}\partial_\kimephase$,
$J_\kappa(t^{-1}\partial_\kimephase)=-\partial_t$), and the K\"ahler
form
\[
  \omega_K
  \;=\;\frac{i}{2}\,\dd\kappa\wedge\dd\bar\kappa
  \;=\;t\,\dd t\wedge\dd\kimephase
  \;=\;\dd J\wedge\dd\kimephase .
\]
Then $(g_0,\omega_K,J_\kappa)$ is a compatible K\"ahler triple
\[
  \omega_K(\cdot,\cdot)=g_0(J_\kappa\,\cdot,\cdot),
  \qquad
  g_0(J_\kappa\,\cdot,J_\kappa\,\cdot)=g_0(\cdot,\cdot),
  \qquad \dd\omega_K=0 .
\]
Consequently the kime pairing of the two real quantities
$(t,\kimephase)$ into one complex quantity $\kappa$ realizes, in one
chart, all three structures whose interplay Problem~(II) asks about:
the invariant count of states ($\omega_K$, by
Theorem~\ref{thm:liouville-unique} and
Lemma~\ref{lem:actionangle}), the metric geometry of the time cone
($g_0$, \cite{KPTms}), and the complex pairing ($J_\kappa$).
\end{proposition}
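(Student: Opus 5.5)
The plan is a direct computation in the orthonormal coframe of the cone metric, done in three steps: the form identity, the compatibility relations, and closedness. I will then place the result relative to Lemma~\ref{lem:actionangle}.

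\textbf{Step 1: the form identity.} Write $\kappa=x+iy$ with $(x,y)=(t\cos\kimephase,t\sin\kimephase)$, which is the holomorphic identification of the orientation convention. Then
\[
\tfrac{i}{2}\,\dd\kappa\wedge\dd\bar\kappa=\tfrac{i}{2}(\dd x+i\,\dd y)\wedge(\dd x-i\,\dd y)=\dd x\wedge\dd y .
\]
The standard polar computation, the same kind of wedge computation as in the proof of Lemma~\ref{lem:actionangle}, gives $\dd x\wedge\dd y=t\,\dd t\wedge\dd\kimephase$. Since $J=\tfrac12t^2$ gives $\dd J=t\,\dd t$, this equals $\dd J\wedge\dd\kimephase$. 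This is the opposite sign of $\Psi^{*}(\dd q\wedge\dd p)=\dd\kimephase\wedge\dd J$, as the orientation convention predicts. I will state this explicitly, so the reader sees that $\omega_K$ corresponds to $\{J,\kimephase\}=1$ with the same unsigned Liouville measure.

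\textbf{Step 2: compatibility.} Use the $g_0$-orthonormal frame $e_1=\partial_t$, $e_2=t^{-1}\partial_\kimephase$ on $\C^*$. The complex structure acts by $J_\kappa e_1=e_2$ and $J_\kappa e_2=-e_1$.

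First I check that $J_\kappa$ really is multiplication by $i$ on $T\C^*\cong\C$. At the point $te^{i\kimephase}$, the vector $\partial_t$ corresponds to $e^{i\kimephase}$ and $t^{-1}\partial_\kimephase$ corresponds to $ie^{i\kimephase}$. Hence $J_\kappa$ is integrable: it is the complex structure of the holomorphic coordinate $\kappa$.

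Orthogonality, $g_0(J_\kappa\cdot,J_\kappa\cdot)=g_0$, holds because $J_\kappa$ maps the orthonormal frame $(e_1,e_2)$ to the orthonormal frame $(e_2,-e_1)$.

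For $\omega_K=g_0(J_\kappa\cdot,\cdot)$, both sides are bilinear and antisymmetric (the right side is antisymmetric by orthogonality and $J_\kappa^2=-1$). So it suffices to compare them on the single pair $(e_1,e_2)$:
\[
\omega_K(e_1,e_2)=t\cdot t^{-1}=1,\qquad g_0(J_\kappa e_1,e_2)=g_0(e_2,e_2)=1 .
\]

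\textbf{Step 3: closedness.} $\dd\omega_K=0$ holds trivially, since $\omega_K$ is a top-degree form on a surface. It is moreover exact on $\C^*$, because $\omega_K=\dd(J\,\dd\kimephase)$ in angular charts, or globally $\omega_K=\dd\bigl(\tfrac{i}{4}(\kappa\,\dd\bar\kappa-\bar\kappa\,\dd\kappa)\bigr)$.

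\textbf{The ``consequently'' clause.} This is interpretive. It combines Step 1 with Lemma~\ref{lem:actionangle} and Theorem~\ref{thm:liouville-unique}(ii): the unsigned measure $|\omega_K|=t\,\dd t\,\dd\kimephase$ is the Liouville measure, which is the unique invariant count up to a constant. I expect no real obstacle here.

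\textbf{Main obstacle.} The only delicate point is sign bookkeeping. The result depends on using the convention $\omega(\cdot,\cdot)=g(J\cdot,\cdot)$ rather than $g(\cdot,J\cdot)$, together with the orientation $\dd J\wedge\dd\kimephase$. I will check the pair $(e_1,e_2)$ once carefully and flag that the opposite $\Psi$-orientation would instead pair with $-J_\kappa$, i.e.\ the complex structure of $\bar\kappa$.
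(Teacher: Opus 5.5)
Your proposal is correct and follows the paper's own argument: it uses the same $g_0$-orthonormal frame $(\partial_t,t^{-1}\partial_\kimephase)$ with $J_\kappa$ acting as multiplication by $i$, checks the pair $(e_1,e_2)$ and concludes by bilinearity and antisymmetry, and gets closedness from the dimension count with primitive $J\,\dd\kimephase$. The only cosmetic differences are that you evaluate $\tfrac{i}{2}\dd\kappa\wedge\dd\bar\kappa$ through Cartesian $\dd x\wedge\dd y$ rather than from the polar differentials of $\kappa$, and that your ``angular charts'' hedge is unnecessary, since $\dd\kimephase$ is a global $1$-form on $\C^*$ and $J\,\dd\kimephase$ coincides with your global primitive $\tfrac{i}{4}(\kappa\,\dd\bar\kappa-\bar\kappa\,\dd\kappa)$.
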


\begin{proof}
First, $\dd\kappa=e^{i\kimephase}(\dd t+it\,\dd\kimephase)$ and
$\dd\bar\kappa=e^{-i\kimephase}(\dd t-it\,\dd\kimephase)$, so
$\dd\kappa\wedge\dd\bar\kappa=-2it\,\dd t\wedge\dd\kimephase$ and
$\tfrac{i}{2}\dd\kappa\wedge\dd\bar\kappa
=t\,\dd t\wedge\dd\kimephase=\dd J\wedge\dd\kimephase$.
In the frame $(e_1,e_2)=(\partial_t,t^{-1}\partial_\kimephase)$,
which is $g_0$-orthonormal by \eqref{eq:conemetric}, $J_\kappa$ acts
as the standard rotation $e_1\mapsto e_2\mapsto-e_1$ (this is
multiplication by $i$ in the coordinate $\kappa$, since
$\partial_t\kappa=e^{i\kimephase}$ and
$t^{-1}\partial_\kimephase\kappa=ie^{i\kimephase}$). Then
$g_0(J_\kappa e_1,e_1)=g_0(e_2,e_1)=0=\omega_K(e_1,e_1)$ and
$g_0(J_\kappa e_1,e_2)=g_0(e_2,e_2)=1
=t\,\dd t\wedge\dd\kimephase(e_1,e_2)=\omega_K(e_1,e_2)$.
Bilinearity and antisymmetry give
$\omega_K=g_0(J_\kappa\cdot,\cdot)$. Orthogonality of $J_\kappa$ in
an orthonormal frame is clear, and $\dd\omega_K=0$ in two dimensions
is automatic ($\omega_K=\dd J\wedge\dd\kimephase$ is even exact away
from the apex, with primitive $J\,\dd\kimephase$).
\end{proof}

\emph{Revisiting Orientation.}\label{rem:kahler-orientation}
As anticipated in original \emph{orientation convention remark} \ref{rem:orientation},
$\omega_K=\dd J\wedge\dd\kimephase=-\Psi^{*}(\dd q\wedge\dd p)$.
The
K\"ahler form of the holomorphic coordinate $\kappa$ and the
action--angle Darboux form of Lemma~\ref{lem:actionangle} agree up to
orientation, being exchanged by $\kappa\mapsto\bar\kappa$
(equivalently, by swapping the roles of the conjugate pair, since
$\{J,\kimephase\}=+1$ with respect to $\omega_K$ while
$\{\kimephase,J\}=+1$ with respect to $\Psi^{*}\omega_0$). Both
induce the same unsigned Liouville measure
$t\,\dd t\,\dd\kimephase$, so every entropy and measure statement in
this paper is insensitive to the choice.
We keep
$\{\kimephase,J\}=+1$ as the mechanical convention and
$\omega_K$ as the complex-analytic one.

\begin{proposition}[Kime-ray factorization and Wick rotation;
{\cite{KPTms}}]\label{prop:wick}
Let $H$ be self-adjoint and bounded below on a Hilbert space
$\mathcal H$, and for $\kappa=te^{i\vartheta}$ with $t\ge0$ define
the kime propagator
$U_\vartheta(t)=e^{-\frac{i}{\hbar}\kappa H}$. Then for every
$\vartheta$ the two factors in
\[
  U_\vartheta(t)
  =\underbrace{\exp\!\Bigl(\frac{\sin\vartheta}{\hbar}\,tH\Bigr)}_{
  \text{self-adjoint}}\;
  \underbrace{\exp\!\Bigl(-\frac{i\cos\vartheta}{\hbar}\,tH\Bigr)}_{
  \text{unitary}}
\]
commute, and if $H\ge0$ then $\|U_\vartheta(t)\|\le1$ for all
$\vartheta\in[-\pi,0]$: on the lower half of the kime circle the
self-adjoint factor is a contraction semigroup. The special cases
are: $\vartheta=0$, the real-time Schr\"odinger group
$e^{-itH/\hbar}$; $\vartheta=-\pi/2$, the Euclidean (Wick-rotated)
heat semigroup $e^{-tH/\hbar}$; and $-\pi<\vartheta<0$, damped
oscillatory propagators with contraction rate $|\sin\vartheta|$ and
phase rate $\cos\vartheta$.
\end{proposition}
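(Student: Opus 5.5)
The plan is to work entirely through the spectral theorem for the self-adjoint operator $H$, reducing every claim to an elementary statement about scalar functions of the real variable $E\in\operatorname{spec}H$. Write $H=\int E\,\dd P(E)$. For fixed $\kappa=te^{i\vartheta}$ we have
\[
  -\tfrac{i}{\hbar}\kappa E
  =-\tfrac{i}{\hbar}tE(\cos\vartheta+i\sin\vartheta)
  =\tfrac{\sin\vartheta}{\hbar}tE-\tfrac{i\cos\vartheta}{\hbar}tE ,
\]
so pointwise on the spectrum $e^{-\frac{i}{\hbar}\kappa E}=e^{\frac{\sin\vartheta}{\hbar}tE}\,e^{-\frac{i\cos\vartheta}{\hbar}tE}$. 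The first step is to invoke the multiplicative property of the Borel functional calculus, $\varphi(H)\psi(H)\subseteq(\varphi\psi)(H)$ with equality on the natural domain. This identifies $U_\vartheta(t)$, defined as $F_\kappa(H)$ with $F_\kappa(E)=e^{-i\kappa E/\hbar}$, with the product of the two factors. Commutation is then automatic, because both factors are functions of the same $H$ and therefore their spectral measures commute. Self-adjointness of the first factor follows from the fact that its symbol is real-valued. Unitarity of the second follows because its symbol is unimodular, so Stone's theorem applies with real parameter $t\cos\vartheta$.

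The second step is the contraction bound. If $H\ge0$ and $\vartheta\in[-\pi,0]$, then $\sin\vartheta\le0$, so $|F_\kappa(E)|=e^{\frac{\sin\vartheta}{\hbar}tE}\le1$ for all $E\ge0$ and $t\ge0$. The functional-calculus norm bound $\|\varphi(H)\|\le\sup_{\operatorname{spec}H}|\varphi|$ then gives $\|U_\vartheta(t)\|\le1$. The semigroup property in $t$ of the self-adjoint factor is again multiplicativity of exponentials on the spectrum. The special cases follow by substitution: $\vartheta=0$ gives $\sin=0,\cos=1$; $\vartheta=-\pi/2$ gives $\sin=-1,\cos=0$; and for intermediate angles the rates are $|\sin\vartheta|$ and $\cos\vartheta$.

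The only real obstacle is domain bookkeeping. When $\sin\vartheta>0$, or when $H$ is merely bounded below, the self-adjoint factor $e^{\frac{\sin\vartheta}{\hbar}tH}$ is unbounded, and the factorization must be read on its natural domain. I would handle this by stating the identity as an equality of (possibly unbounded) normal operators defined via the functional calculus. Commutation is then understood in the strong sense of commuting spectral projections, and in that case the equality $\varphi(H)\psi(H)=(\varphi\psi)(H)$ holds because the unitary factor is bounded, with $\mathcal D(\varphi\psi(H))=\mathcal D(\varphi(H))$. On the lower half circle, which is where contractivity is claimed, all three operators are bounded when $H\ge0$, so no domain issue arises there.
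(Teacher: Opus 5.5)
Your proposal is correct and follows the paper's route: spectral calculus for $H$, the identity $-ie^{i\vartheta}=\sin\vartheta-i\cos\vartheta$ applied pointwise on the spectrum, commutation because both factors are functions of $H$, and the sign of $\sin\vartheta$ giving the contraction bound; your domain bookkeeping via $\varphi(H)\psi(H)=(\varphi\psi)(H)$ for bounded $\psi$ adds rigor the paper leaves implicit. One small correction is that for $\sin\vartheta\le0$ and $H\ge -c$ the self-adjoint factor is still bounded, by $e^{|\sin\vartheta|ct/\hbar}$, so unboundedness arises only for $\sin\vartheta>0$ with $H$ unbounded above, not from $H$ being merely bounded below.
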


\begin{proof}
Functional calculus for the self-adjoint operator $H$: the scalar
identity $-ie^{i\vartheta}=\sin\vartheta-i\cos\vartheta$ gives, for
every spectral value $\lambda$,
$e^{-\frac{i}{\hbar}\lambda te^{i\vartheta}}
=e^{\frac{\sin\vartheta}{\hbar}\lambda t}\,
e^{-\frac{i\cos\vartheta}{\hbar}\lambda t}$, and both factors are
functions of the single operator $H$, hence commute. If $H\ge0$ and
$\vartheta\in[-\pi,0]$ then $\sin\vartheta\le0$, so the first factor
is a contraction and the second unitary, whence the norm bound. The
special cases are read off directly.
\end{proof}

\emph{Reflection on Problem (II) in the kime normal form.}
Proposition~\ref{prop:kahler} exhibits the pairing demanded by
Corollary~\ref{cor:pairing} as literally complex-analytic: one
compact quantity ($\kimephase$, Haar-uniform at equilibrium by
Theorem~\ref{thm:mixing}) pairs with one noncompact quantity
($J=t^2/2$), and the invariant count of states is the K\"ahler area.
Proposition~\ref{prop:wick} shows the same complex structure
organizes \emph{dynamics}.
Rotating $\kappa$ interpolates between the
entropy-conserving (unitary/Hamiltonian, Theorem~\ref{thm:mixing}(ii)
with $\varepsilon=0$) and entropy-producing (contractive/diffusive)
sectors. Thus, the Problem~(II) suggestion that ``complex structures may
be central to why quantities pair'', holds exactly in the kime
chart.
What remains open is whether it is forced in general, which we
now state.

\subsection{Remaining Problem II
Open Problems}

\begin{proposition}[Symplectic maps preserve all symplectic
spectra]\label{prop:spec-preserve}
If $S\in\Sp(2n,\R)$ then for every $\Sigma\succ0$ the symplectic
spectrum of $S\Sigma S^{\!\top}$ equals that of $\Sigma$
(proof in Lemma~\ref{lem:gaussentropyformula}). Antisymplectic maps
($S\Omega S^{\!\top}=-\Omega$) do likewise.
\end{proposition}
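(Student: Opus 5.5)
The first claim needs no new work. For $S\in\Sp(2n,\R)$ the invariance is exactly the similarity computation already carried out in the proof of Lemma~\ref{lem:gaussentropyformula}. I will restate it in one line as $\Omega_n^{-1}(S\Sigma S^{\top})=S^{-\top}(\Omega_n^{-1}\Sigma)S^{\top}$, so that $i\Omega_n^{-1}\Sigma$ and $i\Omega_n^{-1}S\Sigma S^{\top}$ are similar. By the definition in Theorem~\ref{thm:williamson}, the symplectic eigenvalues are the positive part of this spectrum, so they agree.

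For an antisymplectic $S$, with $S\Omega_nS^{\top}=-\Omega_n$, I will avoid searching for a conjugating matrix and use the cyclic property of spectra instead. Inverting the defining relation gives $S^{\top}\Omega_n^{-1}S=-\Omega_n^{-1}$. I then write $\Omega_n^{-1}S\Sigma S^{\top}=AB$ with $A=\Omega_n^{-1}S$ and $B=\Sigma S^{\top}$. Since $A$ is invertible, $AB=A(BA)A^{-1}$ is similar to
\[
  BA=\Sigma S^{\top}\Omega_n^{-1}S=-\Sigma\Omega_n^{-1}.
\]
Conjugating by $\Sigma$ shows that $\Sigma\Omega_n^{-1}$ is similar to $\Omega_n^{-1}\Sigma$. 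Hence the spectrum of $i\Omega_n^{-1}S\Sigma S^{\top}$ is the negative of the spectrum of $i\Omega_n^{-1}\Sigma$.

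The remaining step, which is the only one needing care, is to show that the spectrum of $i\Omega_n^{-1}\Sigma$ is symmetric under negation, with multiplicities. Then negation leaves the multiset $\{\pm\nu_j\}$, and in particular its positive part, unchanged. Two observations give this:

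1. $i\Omega_n^{-1}\Sigma$ is similar, via $\Sigma^{1/2}$, to the Hermitian matrix $\Sigma^{1/2}(i\Omega_n^{-1})\Sigma^{1/2}$. So its spectrum is real and it is diagonalizable.
2. Since $\Omega_n^{-1}\Sigma$ is a real matrix, complex conjugation sends $i\Omega_n^{-1}\Sigma$ to $-i\Omega_n^{-1}\Sigma$. Conjugation preserves the characteristic polynomial up to conjugating its coefficients, and the roots are real, so $\lambda$ and $-\lambda$ occur with equal multiplicity.

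As a cross-check I will also note the factorization $S=S_0R$, where $R:(q,p)\mapsto(q,-p)$ is antisymplectic and $S_0=SR^{-1}\in\Sp(2n,\R)$. The antisymplectic case then reduces to the single reflection $R$, and the same $\pm$ symmetry argument applies to it.
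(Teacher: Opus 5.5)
Your proof is correct. The symplectic half is exactly the paper's argument, the similarity $\Omega_n^{-1}(S\Sigma S^{\top})=S^{-\top}(\Omega_n^{-1}\Sigma)S^{\top}$ from Lemma~\ref{lem:gaussentropyformula}.

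For antisymplectic maps the paper offers only ``do likewise,'' and your argument supplies the missing content. The similarity now lands on $-\Omega_n^{-1}\Sigma$ rather than $\Omega_n^{-1}\Sigma$. So you genuinely need the $\lambda\mapsto-\lambda$ symmetry, with multiplicities, of the spectrum of $i\Omega_n^{-1}\Sigma$. The paper uses that symmetry only tacitly, in the phrase ``positive spectrum \dots\ up to sign'' of Theorem~\ref{thm:williamson}. Your Hermitian-similarity and complex-conjugation steps prove it cleanly; only reality of the spectrum is needed there, not diagonalizability.

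Two streamlinings are available:
\begin{enumerate}
\item The $AB\sim BA$ detour can be replaced by the paper's own one-line computation with a sign. Since $S^{\top}\Omega_n^{-1}S=-\Omega_n^{-1}$, you get $\Omega_n^{-1}S\Sigma S^{\top}=-S^{-\top}(\Omega_n^{-1}\Sigma)S^{\top}$ directly.
\item Your reflection cross-check can replace the spectral-symmetry step entirely. Let $S_1\in\Sp(2n,\R)$ bring $\Sigma$ to Williamson form $D$. Then $RS_1S^{-1}$ is symplectic, being a product of two antisymplectic maps. It sends $S\Sigma S^{\top}$ to $RDR^{\top}=D$, because $R$ is a diagonal sign matrix. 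Uniqueness in Theorem~\ref{thm:williamson} then finishes the proof.
\end{enumerate}

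Your spectral route is self-contained linear algebra. The Williamson route avoids spectral symmetry and uses only the normal form the paper already quotes.
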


\begin{problem}[Spectral characterization of the symplectic
group]\label{prob:spec-char}
Prove or disprove the converse: if $S\in\GL(2n,\R)$ preserves the
symplectic spectrum of \emph{every} $\Sigma\succ0$, then $S$ is
symplectic or antisymplectic up to the scaling
$S\mapsto\lambda S$ forced by $\nu(\lambda^2\Sigma)
=\lambda^2\nu(\Sigma)$ (so, for normalized $S$ with
$|\det S|=1$). A proof would characterize $\Sp(2n,\R)$ purely by an
\emph{estimable statistical invariant} (symplectic spectra of
covariance matrices), replacing the geometric definition by an
information-theoretic one, the sharpest available answer to
Problem~(II)'s request for an entropy-first derivation of the
symplectic structure.
\end{problem}

\begin{problem}[Entropy-only rigidity]\label{prob:entropy-only}
Theorem~\ref{thm:entinv} assumed entropy invariance for \emph{all}
states. Determine the minimal state classes $\mathcal F$ for which
``$\ent[f_*\rho]=\ent[\rho]$ for all $\rho\in\mathcal F$'' still
forces $J_f\equiv1$ (e.g., Gaussians only; kime states with von Mises
phase laws only), and, dually, characterize the group of
transformations preserving the entropy of every \emph{equilibrium}
(phase-equipartitioned) kime state. The latter group is strictly
larger than the measure-preserving group (it contains all fiberwise
rotations $\kimephase\mapsto\kimephase+c(J)$ trivially, but also
non-measure-preserving maps acting only on null sets of equilibria);
its computation quantifies exactly how much of the symplectic
structure is visible to equilibrium thermodynamics alone.
\end{problem}

\begin{problem}[Generalized complex rigidity of the Wick
interpolation]\label{prob:gc}
Proposition~\ref{prop:kahler} produces a K\"ahler triple in one kime
DOF; Proposition~\ref{prop:wick} deforms the dynamics between its
symplectic and metric legs. Formulate and prove (or refute) the
following rigidity statement: if a $2n$-dimensional state continuum
carries (a)~a reparametrization-invariant entropy
(hence, by Theorems~\ref{thm:entinv}
and~\ref{thm:liouville-unique}, a distinguished volume), (b)~a
one-parameter interpolation of evolutions that is entropy-conserving
at one end and satisfies a de Bruijn identity
$\frac{\dd}{\dd t}\ent=\varepsilon\,\Fish\ge0$ elsewhere
(Theorem~\ref{thm:mixing}(ii)), then the infinitesimal generators
assemble into a generalized complex (indeed generalized K\"ahler)
structure in the sense of \cite{Gualtieri}, whose pure-symplectic
locus is the Hamiltonian sector and whose type jumps encode the
diffusive sector. A positive answer would derive ``quantities pair
inside $\C$'' from entropy axioms alone, completing the program of
Problem~(II).
\end{problem}

\begin{problem}[Volume-preserving vs.\ symplectic for
$n\ge2$]\label{prob:volume-vs-symplectic}
Corollary~\ref{cor:pairing} pins down the measure but, for $n\ge2$,
not the finer symplectic structure: $\mathrm{SDiff}(\R^{2n})\supsetneq
\mathrm{Symp}(\R^{2n})$. Identify the weakest \emph{statistical}
requirement that reduces the invariance group from volume-preserving
to symplectic. Candidates, in increasing strength:
(i)~invariance of all within-DOF uncertainties $u_j$ at equilibrium;
(ii)~invariance of the linear symplectic capacity of covariance
ellipsoids (see the capacity remark~\ref{rem:capacity}; by
\cite{Gromov1985,deGosson2009} this fails for generic
volume-preserving linear maps once $n\ge2$);
(iii)~invariance of the whole symplectic spectrum
(Problem~\ref{prob:spec-char}). Determine which of (i)--(iii) are
equivalent, and which are estimable from kime-tomographic data in the
sense of Problem~\ref{prob:torus}.
\end{problem}

\section{Problem III (directional DOF): classical spin, coadjoint
orbits, and the kime circle}\label{sec:directional}

Problem~(III) asks for a classical \emph{relativistic} directional
degree of freedom whether, nonrelativistically, the direction phase space is
the sphere with conjugate pair $\{\theta^{xy},S_z\}=1$.
The
relativistic construction, the correct conjugate variables, and the
choice between ``spin as four-vector'' and ``spin as two-form'' are
open. The kime contribution is threefold: (a)~the conjugate variable
$\theta^{xy}$ \emph{is} a kime-type circular phase, so the entire
statistical machinery of Sections~\ref{sec:foundations}--%
\ref{sec:uncertainty} applies fiberwise and yields a compact-compact
uncertainty theorem (Theorem~\ref{thm:sphereUP}); (b)~the null
decomposition $n_{R,L}$ suggested in \cite{CarcassiAidalaBook} is
exactly the classical shadow of the $D=5$ kime Dirac structure of
\cite{DiracKime}, whose chirality obstruction
(Proposition~\ref{prop:nochirality}) constrains the admissible
answers; (c)~the vector/two-form dichotomy becomes a precise
moment-map question on Poincar\'e coadjoint orbits
(Problem~\ref{prob:vector-vs-form}).

\subsection{The nonrelativistic sector: the sphere as a kime
cylinder}

\begin{definition}[Spin phase space]\label{def:spinspace}
For $s>0$, the spin-$s$ phase space is the sphere
$\Stwo_s=\{\bm S\in\R^3:|\bm S|=s\}$ with symplectic form
$\omega_s=s\sin\Theta\,\dd\Theta\wedge\dd\varphi$ in spherical
coordinates, where
$\bm S=s\,(\sin\Theta\cos\varphi,\;\sin\Theta\sin\varphi,\;
\cos\Theta)$, normalized so that the
components $S_x,S_y,S_z$ satisfy $\{S_i,S_j\}=\epsilon_{ijk}S_k$.
Here, $\varphi$ is the rotation angle about the $z$-axis, i.e., the
$\theta^{xy}$ of Problem~1.20 in \cite{CarcassiAidalaBook}.
\end{definition}

\begin{lemma}[Darboux/kime chart on the sphere; Archimedes]
\label{lem:archimedes}
On $\Stwo_s\setminus\{\text{poles}\}$, the pair
$(\varphi,S_z)=(\varphi,s\cos\Theta)$ satisfies
$\omega_s=\dd\varphi\wedge\dd S_z$, hence $\{\varphi,S_z\}=1$, and the
chart
\[
  \Stwo_s\setminus\{\pm s\hat z\}
  \;\xrightarrow{\;\cong\;}\;
  \Sone\times(-s,s),\qquad
  \bm S\longmapsto(\varphi,S_z),
\]
is a symplectomorphism onto the finite kime cylinder
$\bigl(\Sone\times(-s,s),\,\dd\varphi\wedge\dd S_z\bigr)$. In
particular the invariant count of states is
$\operatorname{Vol}(\Stwo_s,\omega_s)=4\pi s$.
\end{lemma}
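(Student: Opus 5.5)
The plan is a direct computation in the spirit of Lemma~\ref{lem:actionangle}. Set $S_z=s\cos\Theta$, so $\dd S_z=-s\sin\Theta\,\dd\Theta$. Then
\[
  \dd\varphi\wedge\dd S_z=-s\sin\Theta\,\dd\varphi\wedge\dd\Theta
  =s\sin\Theta\,\dd\Theta\wedge\dd\varphi=\omega_s .
\]
This identity holds on the open set where the spherical chart is regular, namely $\Theta\in(0,\pi)$, i.e.\ away from the poles. The bracket identity $\{\varphi,S_z\}=1$ then follows exactly as in Lemma~\ref{lem:actionangle}: in a chart where the symplectic form is $\dd x\wedge\dd y$ one has $\{x,y\}=1$.

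I would also check that this sign convention agrees with the normalization $\{S_i,S_j\}=\epsilon_{ijk}S_k$ of Definition~\ref{def:spinspace}. The quickest check is that $S_z$ generates rotations about the $z$-axis: its Hamiltonian vector field should be $\pm\partial_\varphi$ with the sign fixed by $\{\varphi,S_z\}=1$. Then compute $\{S_x,S_z\}$ with $S_x=\sqrt{s^2-S_z^2}\cos\varphi$ in Darboux coordinates $(\varphi,S_z)$ and compare with $-S_y$. If a sign mismatch appears, it is absorbed by the orientation conventions already discussed after Lemma~\ref{lem:actionangle}, and the unsigned measure is unaffected. I expect this consistency check to be the only delicate point. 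I would settle it by choosing the orientation of $\omega_s$ as stated and verifying the single bracket $\{S_x,S_y\}=S_z$ explicitly.

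Next I would show that $\bm S\mapsto(\varphi,S_z)$ is a diffeomorphism $\Stwo_s\setminus\{\pm s\hat z\}\to\Sone\times(-s,s)$. It has an explicit smooth inverse
\[
  (\varphi,z)\mapsto\bigl(\sqrt{s^2-z^2}\cos\varphi,\;\sqrt{s^2-z^2}\sin\varphi,\;z\bigr),
\]
which is smooth because $s^2-z^2>0$ on the open interval, and $\cos\Theta$ is a bijection $(0,\pi)\to(-s,s)$ after scaling by $s$. Combined with the pullback identity above, this makes the map a symplectomorphism onto the finite kime cylinder.

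Finally, since the poles have measure zero, the symplectic volume is
\[
  \operatorname{Vol}(\Stwo_s,\omega_s)=\int_{-s}^{s}\!\int_{-\pi}^{\pi}\dd\varphi\,\dd S_z=2\pi\cdot2s=4\pi s .
\]
This is Archimedes' hat-box theorem, and it can be cross-checked against $\int s\sin\Theta\,\dd\Theta\,\dd\varphi=4\pi s$.
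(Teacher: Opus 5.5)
Your proposal is correct and follows the paper's proof essentially step for step. It uses the same pullback computation $\dd\varphi\wedge\dd S_z=s\sin\Theta\,\dd\Theta\wedge\dd\varphi=\omega_s$, the same check of $\{S_x,S_y\}=S_z$ in the chart with $S_x=\sqrt{s^2-S_z^2}\cos\varphi$ and $S_y=\sqrt{s^2-S_z^2}\sin\varphi$, the same bijectivity onto the open cylinder (which you make explicit via the inverse map), and the same volume $2\pi\cdot 2s$.

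The sign check you flag as delicate closes without any mismatch, since $\{S_x,S_y\}=S_z\sin^2\varphi+S_z\cos^2\varphi=S_z$. So the fallback of absorbing a sign into orientation conventions is never needed. It would also not really be available: a mismatch would contradict the normalization fixed in Definition~\ref{def:spinspace}, not merely reflect a choice of chart orientation.
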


\begin{proof}
$\dd S_z=-s\sin\Theta\,\dd\Theta$, so
$\dd\varphi\wedge\dd S_z=s\sin\Theta\,\dd\Theta\wedge\dd\varphi
=\omega_s$; bijectivity onto the open cylinder is clear. (This is
Archimedes' hat-box theorem in symplectic form.) The bracket
normalization matches Definition~\ref{def:spinspace}: e.g.,
$\{S_x,S_y\}=S_z$ is verified in the chart by direct computation with
$S_x=\sqrt{s^2-S_z^2}\cos\varphi$,
$S_y=\sqrt{s^2-S_z^2}\sin\varphi$. Thus, the total volume is
$\int\dd\varphi\,\dd S_z=2\pi\cdot2s$.
\end{proof}

Lemma~\ref{lem:archimedes} says the directional DOF \emph{is} a kime
DOF with compact conjugate momentum: the latent phase
$\varphi\in\Sone$ carries the trial-to-trial variability of repeated
orientation measurements (Assumption~\ref{ass:interp}), and its
conjugate $S_z$ ranges over a finite interval. Both marginals are now
compact, so the uncertainty principle acquires floors on both sides.

\begin{theorem}[Compact--compact entropic uncertainty relation for a
directional DOF]\label{thm:sphereUP}
Let $\rho$ be a probability density on
$\bigl(\Sone\times(-s,s),\;\dd\varphi\,\dd S_z\bigr)$ with
$\ent[\rho]>-\infty$.
Let $r\in[0,1)$ be the mean resultant length of
the $\varphi$-marginal and $\rho_{S_z}$ the $S_z$-marginal. Then
\begin{equation}\label{eq:sphereUP}
  \Lambda(r)\cdot e^{\ent[\rho_{S_z}]}\;\ge\;e^{\ent[\rho]},
  \qquad\text{with}\qquad
  \Lambda(r)\le2\pi,\quad e^{\ent[\rho_{S_z}]}\le 2s,
\end{equation}
with equality in \eqref{eq:sphereUP} iff $\rho$ is a product of a von
Mises law in $\varphi$ and an arbitrary $S_z$-marginal density
achieving its entropy (and equality in the two ceilings iff the
respective marginals are uniform). Under any Hamiltonian flow on
$(\Stwo_s,\omega_s)$, e.g., Larmor precession
$H=\gamma\,\bm B\cdot\bm S$, the right-hand side $e^{\ent[\rho]}$
is a constant of motion, while $r$ and $\ent[\rho_{S_z}]$ evolve;
\eqref{eq:sphereUP} caps their joint concentration at all times, and
$\ent[\rho]\le\log(4\pi s)$ is the absolute ceiling set by the count
of states of Lemma~\ref{lem:archimedes}.
\end{theorem}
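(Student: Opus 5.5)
The proof will follow the template of Theorem~\ref{thm:cylinder}, with the Gaussian step for the momentum replaced by the exact marginal entropy of $S_z$, which is compact and has no variance constraint.

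First, I apply Lemma~\ref{lem:subadd} on the product space $(\Sone\times(-s,s),\dd\varphi\otimes\dd S_z)$. This gives $\ent[\rho]\le\ent[\Phi]+\ent[\rho_{S_z}]$, where $\Phi$ is the $\varphi$-marginal, with equality iff $\rho=\Phi\otimes\rho_{S_z}$. Next, Lemma~\ref{lem:vmmax} bounds $\ent[\Phi]\le h_c(r)=\log\Lambda(r)$, with equality iff $\Phi=\vM(\cdot;\meandir,\kappa(r))$; the case $r=0$ is the uniform law. Exponentiating the resulting inequality $\ent[\rho]\le\log\Lambda(r)+\ent[\rho_{S_z}]$ gives \eqref{eq:sphereUP}.

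Equality forces equality in both lemmas. That is, $\rho$ must be a product of a von Mises law in $\varphi$ with its own $S_z$-marginal, and no constraint is placed on that marginal. This matches the stated equality clause, read as ``arbitrary $S_z$-marginal.''

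The two ceilings come from earlier results. The bound $\Lambda(r)\le 2\pi$ is Definition~\ref{def:width}, since $h_c$ is decreasing with $h_c(0)=\log 2\pi$. The bound $e^{\ent[\rho_{S_z}]}\le 2s$ is Jensen's inequality for $-\log$ on an interval of Lebesgue length $2s$, i.e.\ $\ent\le\log 2s$ with equality iff the marginal is uniform. The same Jensen argument on the whole cylinder, whose total measure is $4\pi s$ by Lemma~\ref{lem:archimedes}, gives the absolute ceiling $\ent[\rho]\le\log(4\pi s)$. It also follows by chaining the two ceilings.

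For the dynamical statement, I use Lemma~\ref{lem:archimedes}: the chart $(\varphi,S_z)$ is a symplectomorphism, so Liouville measure on $\Stwo_s$ pulls back to $\dd\varphi\,\dd S_z$. Any Hamiltonian flow on $(\Stwo_s,\omega_s)$ preserves $\omega_s$, hence this measure. The change-of-variables identity (Theorem~\ref{thm:entinv}, (b)$\Rightarrow$(a)) then shows $\ent[\rho]$ is conserved.

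The main obstacle is a technical point. The flow lives on the full sphere, while the chart omits the two poles. I plan to handle this by noting that the poles are a null set, so densities and entropies computed on the open cylinder coincide with those on the sphere. The flow is a measure-preserving diffeomorphism of the sphere, so the pushforward is again a density on the cylinder up to a null set, and the inequality applies at every time. For Larmor precession about $\bm B\parallel\hat z$ this is just a rotation in $\varphi$. For general $\bm B$ the poles move, but the null-set argument still covers it.
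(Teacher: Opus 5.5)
Your proposal is correct and follows essentially the same route as the paper's proof (subadditivity, the von Mises maximum-entropy bound, Jensen for the ceilings, and Liouville invariance through the Archimedes chart), and your null-set treatment of the poles makes explicit a point the paper leaves tacit. One caveat applies equally to the paper: because $\Lambda$ is strictly decreasing, $\Lambda(r)=2\pi$ holds iff $r=0$, and this does not force the $\varphi$-marginal to be uniform (e.g.\ $\tfrac{1}{2\pi}(1+\tfrac12\cos 2\varphi)$ has $r=0$), so the ``iff uniform'' equality clause is valid for the entropy ceiling $\ent[\Phi]\le\log 2\pi$ but not for $\Lambda(r)\le 2\pi$ as literally stated, and you were right not to claim it.
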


\begin{proof}
Subadditivity (Lemma~\ref{lem:subadd}) gives
$\ent[\rho]\le\ent[\Phi_\varphi]+\ent[\rho_{S_z}]$, and
Lemma~\ref{lem:vmmax} bounds
$\ent[\Phi_\varphi]\le h_c(r)=\log\Lambda(r)$.
After exponentiating,
direct equality analysis is as in Theorem~\ref{thm:cylinder}. The ceilings
are Jensen's inequality on the two compact ranges. Hamiltonian flows
preserve $\omega_s$, hence the measure $\dd\varphi\,\dd S_z$
(Lemma~\ref{lem:archimedes}), hence $\ent[\rho]$ and the global ceiling
is Jensen on the total space.
\end{proof}

\emph{Note that
for the nonrelativistic
directional DOF,
Theorem \ref{thm:sphereUP} answers the question posed in Problem~(I)(a) as it recurs
inside Problem (III)}\label{rem:sphereUP}.
The correct ``uncertainty product'' for the
non-canonical, doubly compact pair $(\varphi,S_z)$ is the product of
entropy widths, its floor is the invariant $e^{\ent[\rho]}$, and the
von Mises family is again extremal on the circular leg. All
quantities are estimable from repeated orientation measurements via
circular tomography \cite{KPTms,MardiaJupp}.

\subsection{Relativistic  Poincar\'e coadjoint orbits and
the null pair}

In this subsection and the next we work on four-dimensional
Minkowski space with signature $(+,-,-,-)$; $u^\mu$ is the
four-velocity ($u\cdot u=c^2$) and $S^\mu$ the spin four-vector,
spacelike with $S\cdot S=-s^2$ and $S\cdot u=0$.

\begin{definition}[Classical spinning particle; \cite{Souriau,
Kirillov}]\label{def:orbit}
The phase space of a free classical particle of mass $m>0$ and spin
$s>0$ is the coadjoint orbit $\mathcal O_{m,s}$ of the (connected)
Poincar\'e group through an element with Casimirs
$P\cdot P=m^2c^2$ and $W\cdot W=-m^2c^2s^2$, where
$W^\mu=\tfrac12\epsilon^{\mu\nu\rho\sigma}P_\nu S_{\rho\sigma}$ is
the Pauli--Luba\'nski vector and $S_{\rho\sigma}$ the spin two-form
(intrinsic angular momentum). $\mathcal O_{m,s}$ carries the
canonical Kirillov--Kostant--Souriau symplectic structure; it is
$8$-dimensional, fibering over the mass shell with fiber the sphere
$\Stwo_s$ of Definition~\ref{def:spinspace} (the little-group orbit).
\end{definition}

\begin{proposition}[The null pair $n_{R},n_{L}$]\label{prop:nullpair}
On $\mathcal O_{m,s}$ define, along each state, the four-vectors
\[
  n_{R}=\frac{u}{c}+\frac{S}{s},
  \qquad
  n_{L}=\frac{u}{c}-\frac{S}{s} .
\]
Then $n_R,n_L$ are future-directed null vectors with
$n_R\cdot n_L=2$, and the map $(u,S)\mapsto(n_R,n_L)$ is a bijection
onto ordered pairs of future-directed null vectors with inner product
$2$, with inverse $u=\tfrac{c}{2}(n_R+n_L)$,
$S=\tfrac{s}{2}(n_R-n_L)$. Consequently the directional content of a
classical spinning particle is exactly a pair of null directions, 
the classical counterpart of the right/left Weyl decomposition of a
Dirac spinor in $D=4$ \cite{DiracKime}.
\end{proposition}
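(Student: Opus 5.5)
Since everything here is linear algebra in Minkowski signature $(+,-,-,-)$, I will check the claims pointwise at a fixed state. The inputs are $u\cdot u=c^2$, $S\cdot S=-s^2$ and $S\cdot u=0$.

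\emph{Null pair properties.} Expanding gives
\[
n_R\cdot n_R=\frac{u\cdot u}{c^2}+\frac{2\,u\cdot S}{cs}+\frac{S\cdot S}{s^2}=1+0-1=0,
\]
and the computation for $n_L$ is identical. For the product,
\[
n_R\cdot n_L=\frac{u\cdot u}{c^2}-\frac{S\cdot S}{s^2}=1+1=2.
\]
For future direction, I work in the rest frame of $u$. There $u/c=(1,\mathbf 0)$ and $S=(0,\mathbf S)$ with $|\mathbf S|=s$. So $n_{R,L}=(1,\pm\mathbf S/s)$, whose time components are $+1>0$. Future-directedness is Lorentz invariant under the connected (orthochronous) group, so this holds in every frame.

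\emph{Inverse formulas.} Adding and subtracting the definitions gives $u=\tfrac c2(n_R+n_L)$ and $S=\tfrac s2(n_R-n_L)$. This shows the map is injective.

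\emph{Surjectivity.} Take any ordered pair of future null vectors $n_R,n_L$ with $n_R\cdot n_L=2$, and define $u,S$ by the inverse formulas. Then
\[
u\cdot u=\frac{c^2}{4}\,(0+2\cdot 2+0)=c^2,\qquad S\cdot S=\frac{s^2}{4}\,(0-4+0)=-s^2,
\]
\[
u\cdot S=\frac{cs}{4}\,(n_R\cdot n_R-n_L\cdot n_L)=0.
\]
Because $u$ is a sum of two future-directed null vectors, it is future timelike, so it is a genuine four-velocity. The constraints are therefore recovered, and the map is a bijection onto the stated set.

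\emph{Role of the orbit.} The bijection is pointwise in $(u,S)$, so the only remaining step is to recall how these data sit on $\mathcal O_{m,s}$. On the orbit, $u=cP/(mc)$ is determined by $P$. The spin four-vector is $S=-W/(mc)$, normalized so that $S\cdot S=-s^2$ and $S\cdot P=0$; both follow from the Casimirs $P\cdot P=m^2c^2$ and $W\cdot W=-m^2c^2s^2$, together with $W\cdot P=0$, which is automatic by antisymmetry of $\epsilon$. The sign convention chosen for $S$ is immaterial, since it only swaps $R$ and $L$. The Weyl-decomposition sentence is interpretive and cites \cite{DiracKime}, so I would only remark on it and not prove it.

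\emph{Main obstacle.} The algebra is routine. The one point needing care is future-directedness in both directions of the bijection. In the forward direction it comes from the rest-frame argument. In the reverse direction it is used to guarantee that $u$ is future timelike rather than past timelike; this holds because the sum of two future null vectors that are not parallel is future timelike. Non-parallelism is forced by $n_R\cdot n_L=2\neq0$.
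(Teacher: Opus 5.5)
Your proof is correct and follows the paper's argument essentially step for step: the same three inner-product computations, the inverse obtained by adding and subtracting, and surjectivity by running those computations backwards. The only differences are minor. You obtain future-directedness from the rest frame of $u$, whereas the paper uses the frame-free criterion $u\cdot n_{R,L}=c>0$ together with a Cauchy--Schwarz argument for null vectors. You also explicitly check that the recovered $u$ is future timelike, a point the paper's ``reversing the three computations'' leaves implicit.
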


\begin{proof}
$n_R\cdot n_R=u\cdot u/c^2+2\,u\cdot S/(cs)+S\cdot S/s^2
=1+0-1=0$, and similarly $n_L\cdot n_L=0$;
$n_R\cdot n_L=u\cdot u/c^2-S\cdot S/s^2=1-(-1)=2$.
Future-directedness: since $S\cdot u=0$,
\[
  u\cdot n_{R}=u\cdot n_{L}=\frac{u\cdot u}{c}=c>0,
\]
and a nonzero null vector whose Minkowski product with a
future-directed timelike vector is positive is itself
future-directed (in signature $(+,-,-,-)$, $u\cdot n
=u^0n^0-\bm u\cdot\bm n$ with $|\bm n|=n^0$ forces
$\operatorname{sgn}n^0=\operatorname{sgn}(u\cdot n)$ by
Cauchy--Schwarz, $|\bm u|<u^0$). The displayed inverse is linear
algebra; it maps the stated pair set back into
$\{u\cdot u=c^2,\ S\cdot S=-s^2,\ S\cdot u=0\}$ by reversing the
three computations.
\end{proof}

\subsection{The chirality obstruction from the kime
compactification}

Kime representation realizes complex time via a second, compact
temporal direction. The natural relativistic arena is then the
$(3{+}2)$-signature Clifford algebra $\mathrm{Cl}(3,2)$ of
\cite{DiracKime}, with metric
$\eta=\operatorname{diag}(-1,-1,-1,+1,+1)$ in the conventions of that
paper. The following algebraic fact, proved there in the spinorial
setting (the ``no chirality in five dimensions'' theorem of
\cite{DiracKime}), constrains every classical construction that
descends from it; note that the argument uses only that the
spacetime dimension $D=5$ is odd, not the specific signature.

\begin{proposition}[No chirality in $D=5$;
{\cite{DiracKime}}]\label{prop:nochirality}
Let $\gamma^1,\dots,\gamma^5$ generate an irreducible complex
representation of $\mathrm{Cl}(3,2)$
($\{\gamma^M,\gamma^N\}=2\eta^{MN}$,
$\eta=\operatorname{diag}(-1,-1,-1,+1,+1)$). Then there is no
operator $\chi$ with $\chi^2=\mathbf 1$ and
$\{\chi,\gamma^M\}=0$ for all $M$.
\end{proposition}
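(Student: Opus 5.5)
The plan is to argue by contradiction using the product of all generators. Set $\Gamma:=\gamma^1\gamma^2\gamma^3\gamma^4\gamma^5$.

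\emph{Step 1: $\Gamma$ commutes with every generator.} For a fixed $M$, moving $\gamma^M$ through the product $\Gamma$ requires anticommuting it past the four other generators $\gamma^N$, $N\ne M$. Each such pass contributes a sign $-1$, since $\{\gamma^M,\gamma^N\}=0$ for $N\ne M$. It also meets its own copy, which commutes with itself. The total sign is $(-1)^{D-1}=(-1)^4=+1$, so $\gamma^M\Gamma=\Gamma\gamma^M$. This is the only point where oddness of $D$ enters, in line with the remark preceding the proposition.

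\emph{Step 2: $\Gamma$ is a nonzero scalar.} Because $\Gamma$ commutes with all generators, it commutes with the whole image of the representation. Irreducibility then gives, by Schur's lemma, $\Gamma=c\,\mathbf 1$ for some $c\in\C$. To see $c\ne0$, note that each $\gamma^M$ is invertible, since $(\gamma^M)^2=\eta^{MM}\mathbf 1=\pm\mathbf 1$. Hence $\Gamma$ is invertible. Computing $c$ explicitly via $\Gamma^2=\pm\mathbf 1$ is possible but unnecessary.

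\emph{Step 3: contradiction.} Suppose $\chi$ anticommutes with every $\gamma^M$. Passing $\chi$ through the five factors of $\Gamma$ gives $\chi\Gamma=(-1)^5\Gamma\chi=-\Gamma\chi$. Substituting $\Gamma=c\mathbf 1$ yields $c\chi=-c\chi$, so $2c\chi=0$. Since $c\ne0$, this forces $\chi=0$, which contradicts $\chi^2=\mathbf 1$, because the representation space is nonzero.

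Note that the argument never uses $\chi^2=\mathbf 1$ beyond $\chi\ne0$. It therefore shows the stronger statement that no nonzero operator anticommutes with all generators.

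The main point of care is Step 2. Schur's lemma applies because we work with an irreducible complex representation, so commutants are scalars. One also needs $\Gamma$ invertible, which follows directly from the generator squares. The sign count in Step 1 is routine, but it should be stated explicitly because it is exactly where $D=5$ being odd is used.
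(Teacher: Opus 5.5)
Your proof is correct and is essentially the paper's argument step for step: $\Gamma=\gamma^1\cdots\gamma^5$ commutes with every generator (four sign flips), Schur's lemma together with invertibility of the generators gives $\Gamma=c\,\mathbf 1$ with $c\ne0$, and then $\chi\Gamma=-\Gamma\chi$ forces $\chi=0$. One small expository correction: oddness of $D$ is also used in Step~3, through the sign $(-1)^5=-1$, so Step~1 is not the only place it enters.
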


\begin{proof}
Set $\Gamma=\gamma^1\gamma^2\gamma^3\gamma^4\gamma^5$. For each
fixed $M$, moving $\gamma^M$ through the four other factors of
$\Gamma$ produces four sign flips, and through itself none, so
$\gamma^M\Gamma=\Gamma\gamma^M$, $\Gamma$ is central. By Schur's
lemma, $\Gamma=c\,\mathbf 1$ with $c\ne0$ ($\Gamma$ is invertible,
each $\gamma^M$ being so). If $\chi$ anticommuted with every
$\gamma^M$, then moving $\chi$ through the \emph{five} factors of
$\Gamma$ gives $\chi\Gamma=(-1)^5\Gamma\chi=-\Gamma\chi$; but
$\Gamma=c\,\mathbf 1$ commutes with everything, so
$c\,\chi=-c\,\chi$, i.e., $\chi=0$, contradicting $\chi^2=\mathbf 1$.
\end{proof}

\begin{corollary}[Constraint on the classical
construction]\label{cor:no-RL-split}
In any kime-compactified ($D=5$) relativistic extension of the
directional DOF whose classical limit descends from an irreducible
$\mathrm{Cl}(3,2)$ structure, the null pair $(n_R,n_L)$ of
Proposition~\ref{prop:nullpair} is a \emph{change of basis} on one
irreducible phase space, not a decomposition into two independent
invariant subsystems.
There is no invariant that separates a
right-handed from a left-handed sector. Any proposed relativistic
phase space for Problem~(III) that splits into decoupled $n_R$- and
$n_L$-sectors is therefore incompatible with the kime
compactification; compatible proposals must realize $(n_R,n_L)$ as
coupled coordinates on a single orbit, as
$\mathcal O_{m,s}$ indeed does.
\end{corollary}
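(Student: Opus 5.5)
The plan is to derive Corollary~\ref{cor:no-RL-split} from Proposition~\ref{prop:nochirality} by a contrapositive argument. First we make precise what ``decomposition into two independent invariant subsystems'' would have to mean for a classical limit that descends from an irreducible $\mathrm{Cl}(3,2)$ structure. A decoupled $n_R$/$n_L$ split is a $\Z_2$-grading of the underlying structure that is preserved by the symmetry and exchanged by the reflection $S\mapsto -S$. By Proposition~\ref{prop:nullpair}, that reflection is exactly the swap $n_R\leftrightarrow n_L$. At the operator level, such a grading must be implemented by an involution $\chi$ with $\chi^2=\mathbf 1$. Its two eigenspaces are the putative sectors, and $\chi$ must anticommute with every generator $\gamma^M$, because the vector generators map one sector to the other, just as $\gamma^\mu$ exchanges Weyl chiralities in $D=4$. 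Proposition~\ref{prop:nochirality} forbids such a $\chi$. Hence the quantization/descent data cannot carry the grading, and the classical shadow cannot split either.

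The second step is the classical side. We use Proposition~\ref{prop:nullpair} to note that $(u,S)\mapsto(n_R,n_L)$ is a bijection onto the constrained pair set $\{n_R\cdot n_R=n_L\cdot n_L=0,\ n_R\cdot n_L=2\}$. The constraint $n_R\cdot n_L=2$ couples the two null directions, so the pair set is not a product of an $n_R$-space and an $n_L$-space. Therefore on $\mathcal O_{m,s}$ the pair $(n_R,n_L)$ is simply a change of coordinates on one orbit. This orbit is a single (connected, transitive) coadjoint orbit by Definition~\ref{def:orbit}, so it carries no invariant function separating the sectors. Any invariant separating ``right'' from ``left'' would have to be a Casimir. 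The only Casimirs are $P\cdot P$ and $W\cdot W$, and both are even under $S\mapsto -S$.

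The main obstacle is the first step. The corollary is phrased informally, and the hypothesis ``classical limit descends from an irreducible $\mathrm{Cl}(3,2)$ structure'' has to be formalized so that a classical decoupled splitting really does lift to an anticommuting involution. I would take this as the definition of ``descends'': every classical phase-space symmetry, or $\Z_2$-grading compatible with the Clifford generators, is the symbol of an operator on the irreducible module. Once that is fixed, the contradiction with Proposition~\ref{prop:nochirality} is immediate, and the statement about $\mathcal O_{m,s}$ follows from Proposition~\ref{prop:nullpair}.
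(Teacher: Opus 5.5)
Your proposal is correct and follows essentially the same route as the paper. In both, a decoupled $n_R$/$n_L$ split would have to be the classical shadow of a chirality involution anticommuting with all five Clifford generators, which Proposition~\ref{prop:nochirality} excludes, and the coupling on $\mathcal O_{m,s}$ is witnessed by the constraint $n_R\cdot n_L=2$ of Proposition~\ref{prop:nullpair}. Your remarks on the transitivity of the single coadjoint orbit and on the evenness of $P\cdot P$ and $W\cdot W$ under $S\mapsto -S$, and your explicit formalization of ``descends'' as a definitional bridge, are refinements the paper leaves implicit rather than a different argument.
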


\begin{proof}
Immediate from Propositions~\ref{prop:nullpair}
and~\ref{prop:nochirality}, a decoupling invariant would be the
classical limit of a chirality operator, whose nonexistence in
irreducible $\mathrm{Cl}(3,2)$ representations is
Proposition~\ref{prop:nochirality}; the coupling on
$\mathcal O_{m,s}$ is visible in $n_R\cdot n_L=2$ and in the
Kirillov--Kostant--Souriau form, which does not split.
\end{proof}

\begin{lemma}[Vector and two-form determine each other on
shell]\label{lem:PLinversion}
On $\mathcal O_{m,s}$ (so $P^2=m^2c^2$ and the Tulczyjew condition
$S_{\mu\nu}P^\nu=0$ holds), the Pauli--Luba\'nski vector and the spin
two-form are mutually inverse data,
\[
  W^\mu=\tfrac12\,\epsilon^{\mu\nu\rho\sigma}P_\nu S_{\rho\sigma},
  \qquad
  S_{\rho\sigma}
  =\frac{1}{m^2c^2}\,\epsilon_{\rho\sigma\mu\nu}P^{\mu}W^{\nu},
\]
up to the overall sign fixed by the convention
$\epsilon^{0123}=+1$ (conventions as in \cite[Ch.~2]{WeinbergQFT1});
moreover $S^\mu:=W^\mu/(mc)$ satisfies $S\cdot P=0$,
$S\cdot S=-s^2$, recovering the four-vector of
Proposition~\ref{prop:nullpair} in the rest frame. \emph{Proof.}
Contract the definition of $W$ with
$\epsilon_{\rho\sigma\mu\nu}P^\mu$ and use the
$\epsilon\epsilon=-\det(\delta)$ identity together with
$S_{\mu\nu}P^\nu=0$ and $P^2=m^2c^2$.
The orthogonality
$W\cdot P=0$ is immediate from antisymmetry. \qed
\end{lemma}

\subsection{Remaining Problem III
Open Problems}

\begin{problem}[Kime action--angle atlas on
$\mathcal O_{m,s}$]\label{prob:orbit-chart}
Construct an atlas of Darboux charts on $\mathcal O_{m,s}$ adapted to
the kime fibration, i.e., charts of the form
$(x^i,p_i;\varphi,S_z')$ in which the directional factor is the kime
cylinder of Lemma~\ref{lem:archimedes} for the little-group sphere,
and quantify the obstruction to a single global chart.
The fiber
$\Stwo_s$ has $\int_{\Stwo_s}\omega_s=4\pi s\ne0$, so no global
Darboux chart exists, and the minimal atlas is governed by the class
$[\omega_s]/2\pi\hbar$, which is integral iff
$2s/\hbar\in\Z$ (Weil integrality; \cite{Woodhouse,Kirillov}).
Make precise, within Assumption~\ref{ass:interp}, the resulting
statement that \emph{a consistent single-valued kime phase law on the
directional fiber exists iff the spin is (half-)integer in units of
$\hbar$}, the sharpest available classical bridge to
spin-$\tfrac12$, and the direct analogue for Problem~(III) of the
$2\pi$-holonomy correction anticipated in
Problem~\ref{prob:expected-bracket}.
\end{problem}

\begin{problem}[Sharp uncertainty relation on
$\mathcal O_{m,s}$]\label{prob:orbit-UP}
Prove the relativistic extension of Theorem~\ref{thm:sphereUP}: an
entropic inequality on $\mathcal O_{m,s}$, invariant under the
Poincar\'e group and under all Hamiltonian flows, that reduces to
\eqref{eq:sphereUP} on the little-group fiber in the rest frame and
to \eqref{eq:flatUP} on the translational factor in the
nonrelativistic limit. Identify the extremal family (conjecturally:
relativistic Gaussian in $(x,p)$ $\otimes$ von Mises--Fisher in the
direction, coupled only through the Tulczyjew constraint) and the
role of the two Casimirs as the invariant scales, with
$W\cdot W=-m^2c^2s^2$ playing the part of $\prod_j\nu_j$ in
Theorem~\ref{thm:aggregate}.
\end{problem}

\begin{problem}[Four-vector vs.\ two-form as a moment-map
criterion]\label{prob:vector-vs-form}
Lemma~\ref{lem:PLinversion} shows the two candidate generalizations
of spin carry the same information on shell, so the dichotomy of
Problem~(III) cannot be about information content; we propose it is
about \emph{conjugacy}. 
Specifically, on $\mathcal O_{m,s}$, determine
which object is the moment map of the kime phase circle action, i.e.,
find the function $F$ with $\{\varphi,F\}=1$ for the globally
defined little-group phase $\varphi$ of
Problem~\ref{prob:orbit-chart}, and decide whether $F$ is
(a)~a frame component of the Pauli--Luba\'nski \emph{vector}
$W$, or (b)~a flux pairing
$\tfrac12 S_{\mu\nu}\,\zeta^{\mu\nu}$ of the spin \emph{two-form}
against a fixed bivector $\zeta$ (the infinitesimal rotation plane).
\emph{Conjecture:} (b), with $\zeta$ the generator of the little-group
rotation defining $\varphi$.
Equivalently, spin generalizes as the
two-form (a rotation plane, whose conjugate is the angle in that
plane), and the four-vector $W$ is the derived, non-conjugate
repackaging, consistent with the nonrelativistic normal form
$\{\varphi,S_z\}=1$ (where $S_z$ is the moment map of rotation about
$\hat z$), with the $D=5$ kime picture in which the extra dimension
supplies the compact conjugate phase \cite{DiracKime}, and with
Corollary~\ref{cor:no-RL-split}, which forbids resolving the
dichotomy by splitting into chiral vector sectors.
\end{problem}

\section{Interpretation, scope, and discussion}\label{sec:conclusions}

\subsection{What the kime representation contributes}

The formulations above rest on one exact identification
(Lemma~\ref{lem:actionangle}: kime cone $=$ action--angle chart,
kime measure $=$ Liouville measure) and one standing modeling
postulate (Assumption~\ref{ass:interp}, the kime phase is a latent
circular variable whose law encodes the intrinsic variability of
repeated, identically controlled experiments). Given these, the three
open problems of \cite{CarcassiAidalaBook} decompose as follows.

\emph{Problem (I), uncertainty.} The compact-phase sector is
solved in sharp form, the kime-cylinder entropic uncertainty
principle (Theorem~\ref{thm:cylinder}) with von Mises~$\otimes$~%
Gaussian extremals, the sharp circular Fisher inequality
(Theorem~\ref{thm:sharpcircular}) with von Mises extremals, and
their kinetic consequence for the reconstructed phase Hamiltonian of
\cite{WDWms} (Corollary~\ref{cor:kineticbound}). The non-canonical
sector is settled for diffeomorphic pairs with the
\emph{geometric-mean} Poisson-bracket correction
(Theorem~\ref{thm:noncanonical}), which clarifies and corrects the
conjectured expected-bracket form
(Problem~\ref{prob:expected-bracket}). The multi-DOF sector is
settled in aggregate (Theorem~\ref{thm:aggregate}: Fischer defect
$=$ cross-DOF correlation. Equi-partition value as orbit minimum of
the uncertainty product) and open per-DOF, where it is now a
concrete matrix-analytic question of symplectic Schur--Horn type
(Problem~\ref{prob:schurhorn}) with an estimable torus surrogate
(Problem~\ref{prob:torus}). The thermodynamic conjecture is proved
in the form available to one kime DOF
(Theorem~\ref{thm:mixing}: equipartition is the unique max-entropy
attractor of phase diffusion. Hamiltonian flow is the
entropy-neutral member of the Wick-interpolated family and preserves
all equilibrium relationships), with the joint multi-DOF statement
recorded as Conjecture~\ref{conj:duality}.

\emph{Problem (II), invariant entropy.} The pairing phenomenon is
proved as a rigidity theorem: invariant entropy $\Longleftrightarrow$
invariant measure (Theorem~\ref{thm:entinv}).
No invariant measure
exists on unpaired quantities (Theorem~\ref{thm:nounpaired}); on
contragradient pairs the invariant measure exists and is uniquely
Liouville, so the invariant entropy is unique up to an additive
constant (Theorem~\ref{thm:liouville-unique},
Corollary~\ref{cor:pairing}). The kime chart realizes the pairing as
a K\"ahler triple with the complex coordinate $\kappa$
(Proposition~\ref{prop:kahler}), and the kime-ray factorization of
\cite{KPTms} shows the same complex structure grades dynamics into
entropy-conserving and entropy-producing sectors
(Proposition~\ref{prop:wick}, Theorem~\ref{thm:mixing}). Whether
this complex organization is \emph{forced} by entropy axioms, 
via generalized complex geometry, and whether symplectic (rather
than merely volume-preserving) structure is forced by estimable
statistical invariants, is stated as
Problems~\ref{prob:gc}, \ref{prob:spec-char},
and~\ref{prob:volume-vs-symplectic}.

\emph{Problem (III), directional DOF.} Nonrelativistically the
directional DOF \emph{is} a kime DOF: the sphere is symplectically a
finite kime cylinder (Lemma~\ref{lem:archimedes}), and the
compact--compact uncertainty relation with its Larmor-invariant
floor is Theorem~\ref{thm:sphereUP}. Relativistically the correct
arena is the Poincar\'e coadjoint orbit $\mathcal O_{m,s}$
(Definition~\ref{def:orbit}); the suggested null pair is a bijective
repackaging of $(u,S)$ (Proposition~\ref{prop:nullpair}); the kime
compactification's chirality obstruction rules out any answer that
decouples right- and left-handed sectors
(Proposition~\ref{prop:nochirality},
Corollary~\ref{cor:no-RL-split}); and the four-vector/two-form
dichotomy, informationally empty by
Lemma~\ref{lem:PLinversion}, is reformulated as the sharp question
of which object is \emph{conjugate} to the kime phase
(Problem~\ref{prob:vector-vs-form}), with the Weil-integrality
bridge to spin-$\tfrac12$ isolated in
Problem~\ref{prob:orbit-chart}.

\subsection{Epistemic status and falsifiability}

In the spirit of the caveats of
\cite{WDWms}, there are 3
specific points that delimit the claims. First, every theorem above is a statement about the
kime \emph{representation}. The identification of the latent
statistical phase with a mechanical angle variable
(remark \ref{rem:dictionary}) is a modeling postulate, adopted
because it is the unique identification under which the kime measure
is the invariant count of states
(Theorem~\ref{thm:liouville-unique}). Second, nothing here derives
quantum mechanics or new physics. The inequalities are classical
information-geometric statements, and their quantum-facing
appearances (Corollary~\ref{cor:kineticbound},
Problem~\ref{prob:orbit-chart}) are mediated by explicitly flagged
postulates of \cite{WDWms,DiracKime}. Third, the interpretation of
the phase law as intrinsic experimental variability
(Assumption~\ref{ass:interp}) is falsifiable in the concrete sense
of \cite{KPTms}. The phase law, its resultant $r$, its Fisher
information, and the relative-phase matrix $R$ are identifiable from
repeated-measurement data up to the stated gauges, with
Cram\'er--Rao--controlled error; hence
Theorems~\ref{thm:cylinder}, \ref{thm:sharpcircular},
\ref{thm:aggregate}, and~\ref{thm:sphereUP} are empirically
checkable inequalities, and violations would falsify the
identification, not merely the estimation procedure.

\subsection{Summary table of problem status}

\begin{center}
\begin{tabular}{p{0.30\linewidth} p{0.30\linewidth} p{0.30\linewidth}}
\hline
\textbf{Component} & \textbf{Proved here} & \textbf{Open (stated as)}\\
\hline
(I) canonical, 1 DOF, compact phase &
Thms.~\ref{thm:cylinder}, \ref{thm:sharpcircular};
Cor.~\ref{cor:kineticbound} & \\
(I) non-canonical pairs &
Thm.~\ref{thm:noncanonical} (geometric-mean bracket) &
Prob.~\ref{prob:expected-bracket} \\
(I) multi-DOF &
Thm.~\ref{thm:aggregate} (aggregate floor) &
Probs.~\ref{prob:schurhorn}, \ref{prob:torus},
\ref{prob:capacity}; Conj.~\ref{conj:duality} \\
(I) thermodynamic conjecture &
Thm.~\ref{thm:mixing} &
Conj.~\ref{conj:duality} \\
(II) pairing of quantities &
Thms.~\ref{thm:entinv}--\ref{thm:liouville-unique};
Cor.~\ref{cor:pairing}; Props.~\ref{prop:kahler}, \ref{prop:wick},
\ref{prop:spec-preserve} &
Probs.~\ref{prob:spec-char}, \ref{prob:entropy-only},
\ref{prob:gc}, \ref{prob:volume-vs-symplectic} \\
(III) nonrelativistic direction &
Lem.~\ref{lem:archimedes}; Thm.~\ref{thm:sphereUP} & \\
(III) relativistic direction &
Props.~\ref{prop:nullpair}, \ref{prop:nochirality};
Cor.~\ref{cor:no-RL-split}; Lem.~\ref{lem:PLinversion} &
Probs.~\ref{prob:orbit-chart}, \ref{prob:orbit-UP},
\ref{prob:vector-vs-form} \\
\hline
\end{tabular}
\end{center}

\section*{Acknowledgments}

This study is motivated by the novel work of Gabriele Carcassi and Christine Aidala at the University of Michigan who developed the Assumptions of Physics book \cite{CarcassiAidalaBook}. Many SOCR students have contributed to the kime-representation framework over the past few years.

\newpage


\end{document}